 \newcommand{\sv}[1]{}%
 \newcommand{\lv}[1]{#1}%
\newcommand{\appendixText}{}
 \newcommand{\toappendix}[1]{#1}%
\newcommand{\NP}{\ensuremath{\mathsf{NP}}\xspace}
\newcommand{\df}{:=}
\newcommand{\Oh}[1]{\ensuremath{\mathcal{O}(#1)}}
\def\B{\ensuremath{{\mathcal B}}\xspace}
\def\U{\ensuremath{{\mathcal U}}\xspace}
\title{
 \U-Bubble Model for Mixed Unit Interval Graphs and its Applications:\\
The MaxCut Problem Revisited%
  \thanks{Jan Kratochvíl was supported by grant GAČR 19-17314J of the Czech National Science Foundation.
    Tomáš Masařík and Jana Novotná received funding from the European
    Research Council (ERC) under the European Union's Horizon 2020
    research and innovation programme Grant Agreement 714704, and
    from Charles University student grant SVV-2017-260452.
  } \thanks{An extended abstract of this manuscript will appear at Mathematical Foundations of Computer Science (MFCS) 2020~\cite{U-BubbleMFCS}}
}
\author[1]{Jan Kratochvíl}
\author[1,2]{Tomáš Masařík}
\author[1,2]{Jana Novotná}
\affil[1]{Charles University, Prague, Czech Republic}
\affil[2]{University of Warsaw, Poland}
\affil[ ]{\texttt{\{honza,masarik,janca\}@kam.mff.cuni.cz}}
\date{}
\def\povolenyobrazky{1}
\newcommand{\R}{\mathbb{R}}
\newcommand{\N}{\mathbb{N}}
\definecolor{darkblue}{rgb}{0,0,0.45}
\definecolor{darkred}{rgb}{0.6,0,0}
\definecolor{darkgreen}{rgb}{0.13,0.5,0}
\newcommand{\drawBubble}[6]{

\draw[line width=0.065cm] (#1,#2) rectangle (#1+2,#2+2);
\draw (#1,#2+1) -- (#1+2,#2+1);
\draw (#1+1,#2) -- (#1+1,#2+2);
\node (a) at (#1+0.5,#2+1.5) [align=left] {\large $#3$}; %
\node (b) at (#1+0.5,#2+0.5) {$#4$}; %
\node (c) at (#1+1.5,#2+1.5) {$#5$}; %
\node (d) at (#1+1.5,#2+0.5) {$#6$}; %

    }
\newcommand{\openclosed}{\ensuremath{{\mathcal U}^{\pm}}\xspace} 
\newcommand{\semimixed}{\ensuremath{{\mathcal U}^{\pm,+-}}\xspace}
\newcommand{\mixed}{\ensuremath{{\mathcal U}}\xspace}
\newcommand{\I}{\ensuremath{{\mathcal I}}\xspace}
\newcommand{\Mis}[1]{\ensuremath{{\alpha(#1)}}\xspace}
\newcommand{\Mclique}[1]{\ensuremath{{\omega(#1)}}\xspace}
\newcommand{\GB}{\ensuremath{G({\mathcal B})}\xspace}
\newcommand{\cutset}{\ensuremath{E(S,\overline S)}\xspace}
\newcommand{\Bij}{\ensuremath{B_{i,j}}\xspace}
\newcommand{\btop}[1]{\ensuremath{\operatorname{top}(#1)}\xspace} %
\newcommand{\col}[1]{\ensuremath{\operatorname{col}(#1)}\xspace} %
\newcommand{\row}[1]{\ensuremath{\operatorname{row}(#1)}\xspace} %
\newcommand{\typeI}[2]{\ensuremath{\operatorname{type}_#1(#2)}\xspace} %
\newcommand{\type}[1]{\ensuremath{\operatorname{type}(#1)}\xspace} %
\newcommand{\BT}[1]{\ensuremath{{B_{\btop{#1}, #1}}}\xspace} %
\newcommand{\Boo}[1]{\ensuremath{B_{#1}^{--}}\xspace}
\newcommand{\Boc}[1]{\ensuremath{B_{#1}^{-+}}\xspace}
\newcommand{\Bco}[1]{\ensuremath{B_{#1}^{+-}}\xspace}
\newcommand{\Bcc}[1]{\ensuremath{B_{#1}^{++}}\xspace}
\newcommand{\Bhc}[1]{\ensuremath{B_{#1}^{*+}}\xspace}
\newcommand{\Bch}[1]{\ensuremath{B_{#1}^{+*}}\xspace}
\newcommand{\boo}[1]{\ensuremath{b_{#1}^{--}}\xspace}
\newcommand{\boc}[1]{\ensuremath{b_{#1}^{-+}}\xspace}
\newcommand{\bco}[1]{\ensuremath{b_{#1}^{+-}}\xspace}
\newcommand{\bcc}[1]{\ensuremath{b_{#1}^{++}}\xspace}
\newcommand{\maxcut}{MaxCut\xspace} %
\newcommand{\mcs}[1]{\ensuremath{mcs(#1)}} %
\newcommand{\cwd}[1]{\ensuremath{cwd(#1)}} %
\newcommand{\relabel}[2]{\ensuremath{\rho_{#1\rightarrow #2}}} %
\newcommand{\edge}[2]{\ensuremath{\eta_{#1,#2}}} %
\newcommand{\li}[1]{\ensuremath{\ell(#1)}\xspace}
\newcommand{\ri}[1]{\ensuremath{{r}(#1)}\xspace}
\newcommand{\ol}{\overline}
\newcommand{\Eng}[1]{}
\newtheorem{thm}{Theorem}
\newtheorem{observation}[thm]{Observation}
\newtheorem{cor}[thm]{Corollary}
\theoremstyle{plain}
\newtheorem{thm}{Theorem}
\newtheorem{lemma}[thm]{Lemma}
\newtheorem{observation}[thm]{Observation}
\newtheorem{cor}[thm]{Corollary}
\theoremstyle{plain}
\newtheorem{definition}{Definition}
\theoremstyle{remark}
\newtheorem*{example}{Example}
\begin{document}

\maketitle
\lv{
\begin{textblock}{20}(0, 12)
\includegraphics[width=40px]{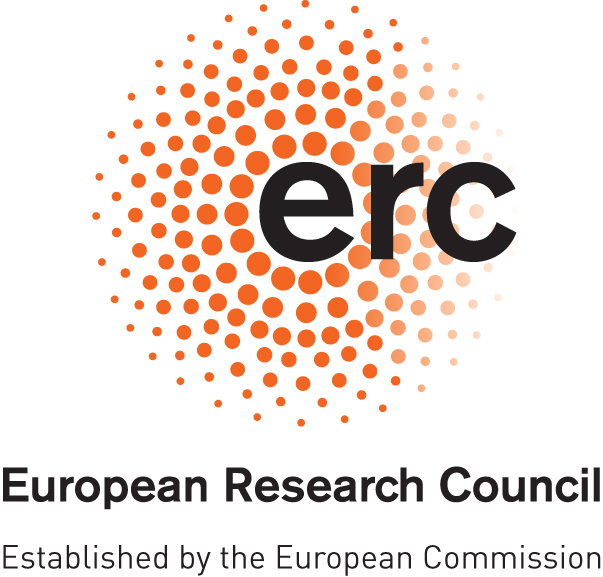}%
\end{textblock}
\begin{textblock}{20}(-0.25, 12.9)
\includegraphics[width=60px]{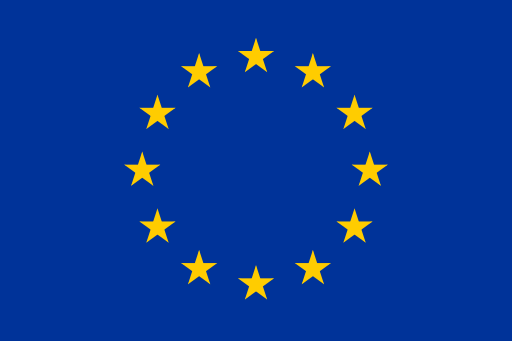}%
\end{textblock}
}

\begin{abstract}
Interval graphs, intersection graphs of segments on a real line (intervals), play a key role in the study of algorithms and special structural properties. 
Unit interval graphs, their proper subclass, where each interval has a unit length, has also been extensively studied.
We study mixed unit interval graphs---a generalization of unit interval graphs where each interval has still a unit length, but %
intervals of more than one type (open, closed, semi-closed) are allowed. 
This small modification 
captures a much richer class of graphs.
In particular, mixed unit interval graphs are not claw-free, compared to unit interval graphs. 

Heggernes, Meister, and Papadopoulos 
defined a representation of unit interval graphs called 
the 
bubble model which turned out to be useful in algorithm design.
We extend this model to the class of mixed unit interval graphs and demonstrate the advantages of this generalized model by providing a subexponential-time algorithm for solving the MaxCut problem on mixed unit interval graphs. 
In addition, we derive a polynomial-time algorithm for certain subclasses of mixed unit interval graphs. 
We point out a substantial mistake in the proof of the polynomiality of the MaxCut problem on unit interval graphs by  Boyaci, Ekim, and Shalom (2017). Hence, the time complexity of this problem on unit interval graphs remains open.
We further provide a better algorithmic upper-bound on the clique-width of mixed unit interval graphs.
\lv{Clique-width is one of the most general structural graph parameters, where a large group of natural problems is still solvable in the tractable time when an efficient representation is given.
Unfortunately, the exact computation of the clique-width representation is \NP-hard. Therefore, good upper-bounds on clique-width are highly appreciated, in particular, when such a bound is algorithmic.}

\end{abstract}

\section{Introduction}

A graph $G$ is an \emph{intersection graph} if there exists a family of nonempty sets $\mathcal{F}=\{S_1,\dots,S_n\}$ such that for each vertex $v_i$ in $G$, a set $S_i\in\mathcal{F}$ is assigned in a way that there is an edge $v_iv_j$ in $G$ if and only if $S_i\cap S_j\neq\emptyset$.
We say that $G$ has an \emph{$\mathcal{F}$-intersection representation}.
Any graph can be represented as an intersection graph since per each vertex, we can use the set of its incident edges. However, many important graph classes can be described as intersection graphs with a restricted family of sets. Depending on the geometrical representation, different types of intersection graphs are defined, for instance, interval, circular-arc, disk graphs, etc.
\\\indent
\emph{Interval graphs} are intersection graphs of segments of the real line, called intervals.
Such a representation is being referred to as \emph{interval representation}.
They have been a well known and widely studied class of graphs from both the theoretical and the algorithmic points of view\lv{ since 1957}.
\lv{They were first mentioned independently in combinatorics (Hajos, 1957~\cite{Hajos57, Brandstadt99}) and genetics (Benzer, 1959~\cite{Benzer59}).}%
\lv{%

}%
Interval graphs have a nice structure, they are chordal and, therefore, also perfect which provides a variety of graph decompositions and models. %
Such properties are often useful tools for the algorithm design---the most common algorithms on them are based on dynamic programming.  
Therefore, many classical \NP-hard problems are polynomial-time solvable on interval graphs,
for instance Hamiltonian cycle \lv{(Keil 1985~\cite{Keil85})}\sv{\cite{Keil85}}, Graph isomorphism \lv{(Booth, 1976~\cite{BoothL76})}\sv{\cite{BoothL76}} or  Colorability \lv{(Golumbic, 1980 \cite{Golumbic80})}\sv{\cite{Golumbic80}}  are solvable even in linear time.  
Surprisingly, the complexity of some well-studied problems is still unknown despite extensive research, e.g.\ the $L_{2,1}$-labeling problem, or the packing coloring problem. 
\lv{%
Interval graphs have many real applications in diverse areas including genetics~\cite{Benzer59}, economics, and archaeology~\cite{Roberts00, Roberts78Apl}. 
According to Golumbic~\cite{Golumbic80}, many real-world applications involve solving problems on graphs which are either interval graphs themselves or are related to interval graphs in a natural way.}%
\\\indent
An important subclass of interval graphs is the class of \emph{proper interval graphs}, graphs which can be represented by such an interval representation that no interval properly contains another one. Another interval representation is a representation with intervals (of the same type) of only unit lengths, graphs which have such a representation are called \emph{unit interval graphs}.
Roberts proved\lv{ in 1969}~\cite{Roberts69} that a graph is a proper interval graph if and only if it is a unit interval graph. 
Later, Gardi came up with a constructive combinatorial proof~\cite{Gardi07}.
\lv{%

}%
The mentioned results do not specifically care about what types of intervals (open, closed, semi-closed) are used in the interval representation. 
However, as far as there are no restrictions on lengths of intervals, it does not matter which types of intervals are used~\cite{Shuchat14}.
The same applies if there is only one type of interval in the interval representation.
However, this is not true when all intervals in the interval representation have unit length and at least two types of intervals are used.
In particular, the claw $K_{1,3}$ can be represented using one open interval and three closed intervals.
\\\indent
Recently, it has been observed that a restriction on different types of intervals in the unit interval representation leads to several new subclasses of interval graphs. 
We denote the set of all open, closed, open-closed, and closed-open intervals of \emph{unit length} by $\U^{--}$, $\U^{++},$  $\U^{-+}$, and $\U^{+-}$, respectively.
Let $\U$ be the set of all types of unit intervals.
Although there are 16 different combinations of types of unit intervals, it was shown in \cite{Dourado12,RautenbachS13,Shuchat14,Joos15,TalonK18} in the years 2012--2018 that they form only four different classes of mixed unit interval graphs. 
In particular, the following closure holds:
\lv{\begin{align*}
\emptyset \subsetneq 
\text{unit}&\text{ interval}
\subsetneq \text{unit open and closed interval} %
\subsetneq \text{semi-mixed unit interval} \subsetneq \\
           &\text{mixed unit interval} 
\subsetneq \text{interval graphs,} 
\end{align*}
}%
\sv{%
  $
\emptyset \subsetneq 
\text{unit}\text{ interval}
\subsetneq \text{unit open and closed interval} %
\subsetneq 
$
$
\text{semi-mixed unit interval} \subsetneq %
           \text{mixed unit interval} 
\subsetneq \text{interval graphs} 
$,
}%
where \emph{unit open and closed interval} graphs have $\left(\U^{++}\cup\U^{--}\right)$-representation, \emph{semi-mixed unit interval} graphs have $\left(\U^{++}\cup\U^{--}\cup\U^{-+}\right)$-representation, and \emph{mixed unit interval} graphs have $\U$-representation.
Hence, mixed unit interval graphs allow all types of intervals of unit length.

\begin{definition} 
A graph $G$ is a \emph{mixed unit interval graph} if it has a $\U$-intersection representation. 
We call such representation a \emph{mixed unit interval representation.}
\end{definition}

 There are lots of characterizations of interval and unit interval graphs.
Among many of the characterizations, we single out a matrix-like representation called \emph{bubble model}~\cite{HeggernesMP09}.
A similar notion was independently discovered by Lozin~\cite{Lozin2011} under the name canonical partition.
In the bubble model, vertices of a unit interval graph $G$ are placed into a ``matrix'' where each matrix entry may contain more vertices as well as it can be empty. Edges of $G$ are represented implicitly with quite strong conditions: each column forms a clique; and in addition, edges are only between consecutive columns where they form nested neighborhood (two vertices $u$ and $v$ from consecutive columns are adjacent if and only if $v$ occurs in a higher row than $u$). 
In particular, there are no edges between non-consecutive columns.
This representation can be computed and stored in linear space given a proper interval ordering representation.
\lv{%

}%
We introduce a similar representation of mixed unit interval graphs, called \emph{\U-bubble model}, and we extend some results from unit interval graphs to mixed unit interval graphs using this representation.
 The representation has almost the same structure as the original bubble model, except that edges are allowed in the same row under specific conditions. 
 We show that a graph is a mixed unit interval graph if and only if it can be represented by a \U-bubble model. 

\begin{thm} \label{thm_bubble}
A graph is a mixed unit interval graph if and only if it has a \U-bubble model. 
Moreover, given a mixed unit interval representation of graph $G$ on $n$ vertices, a \U-bubble model can be constructed in $\Oh{n}$ time.
\end{thm}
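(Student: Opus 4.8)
The plan is to prove the two directions separately, in both cases working with a concrete geometric representation and translating back and forth with the discrete matrix model. For the forward direction, I would start from a given mixed unit interval representation $\{I_v : v\in V(G)\}$ with all intervals of unit length. The first step is a normalization: by a standard perturbation argument one may assume all left endpoints are distinct and all right endpoints are distinct (the classical fact cited from \cite{Shuchat14} that endpoint coincidences among intervals of a single type are irrelevant does not apply here, so one must be careful only to perturb within a type and only when it does not create or destroy an intersection). Next, I would define the \emph{columns} of the \U-bubble model by bucketing the intervals according to $\lfloor \li{v}\rfloor$ — more precisely, sort the left endpoints and cut the real line into half-open blocks of length one so that two intervals land in the same column exactly when they are ``close enough on the left''; this is what forces each column to be a clique and forbids edges between non-consecutive columns, exactly as in the original bubble model. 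Within a column, I would define the \emph{rows} by the order of the right endpoints, so that the nested-neighbourhood property between consecutive columns falls out automatically. The genuinely new content is the bottom row of each column, which must accommodate the intervals whose type creates the claw-like behaviour: I expect to reserve row $0$ (or a distinguished row) for intervals that are open on the appropriate side, and to check by case analysis on the four endpoint-types \bounds{} that the adjacency of two vertices in the \emph{same} row is governed precisely by the open/closed pattern of their endpoints. This case analysis on endpoint types — verifying that "same row, adjacent iff the endpoint types are compatible" matches actual intersection of the intervals — is where the argument has real work to do, and is the step I expect to be the main obstacle.

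For the reverse direction, I would take a \U-bubble model and construct an explicit unit interval representation realizing it. Assign to a vertex $v$ in column $i$ and row $j$ an interval with left endpoint roughly $i + \varepsilon(j)$ and right endpoint $i+1+\varepsilon(j)$, where $\varepsilon(j)$ is a small strictly increasing function of the row index, and choose the open/closed status of the two endpoints according to which sub-bubble ($++$, $+-$, $-+$, $--$) contains $v$. One then checks that (i) intervals in the same column pairwise intersect because they all contain a common point near $i+1$; (ii) intervals in columns differing by at least two are disjoint because $\varepsilon$ is chosen much smaller than $1$; (iii) intervals in consecutive columns intersect exactly according to the nested-neighbourhood rule, since a column-$i$ interval $[i+\varepsilon(j),\,i+1+\varepsilon(j)]$ meets a column-$(i+1)$ interval $[i+1+\varepsilon(j'),\,i+2+\varepsilon(j')]$ iff $\varepsilon(j')\le\varepsilon(j)$, i.e.\ iff $j'\le j$; and (iv) the same-row adjacencies are realized correctly by the endpoint-type choice — this last point is again the place where the open/closed bookkeeping must be done carefully, and I would handle it by the same endpoint-type case analysis as in the forward direction, just read in reverse.

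Finally, for the running-time claim, I would observe that all the operations above are linear once a proper interval ordering (equivalently, a sorted list of endpoints) is available: bucketing the left endpoints into columns is a single linear scan, sorting within a column by right endpoint can be amortized into the global endpoint sort, and reading off the sub-bubble of each vertex is $O(1)$ per vertex. A mixed unit interval representation can be converted to such a sorted endpoint list in $\Oh{n}$ time (or it is given in that form), so the total construction is $\Oh{n}$. I would present the forward direction in full detail — including a small figure of the four endpoint-type cases in the bottom row — and treat the reverse direction more briefly, since it is essentially the inverse map and its verification reuses the same case distinctions.
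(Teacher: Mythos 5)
Your reverse direction is essentially the paper's own proof: place the vertex of column $j$, row $i$ at left endpoint $j+(i-1)\varepsilon$ with $\varepsilon$ smaller than the reciprocal of the number of rows, give it the open/closed pattern of its quadrant, and check the three cases (same column, consecutive columns, distant columns). That part is fine, as is the observation that the single shared point between consecutive equal-row intervals is governed by the endpoint types.

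The forward direction, however, has a genuine gap, and it is your very first step. You propose to normalize the representation so that all left endpoints (equivalently, since all intervals are unit, all right endpoints) are pairwise distinct. Such a perturbation does not exist in general: once all endpoints are distinct, the open/closed types no longer affect any intersection, so the represented graph would be a plain unit interval graph --- but mixed unit interval graphs form a strictly larger class. The claw $K_{1,3}$ is the standard witness: it is represented by $[-1,0]$, $[0,1]$, $[1,2]$ and the open interval $(0,1)$, and the coincidences $\li{(0,1)}=\li{[0,1]}$ and $\ri{(0,1)}=\li{[1,2]}$ cannot be perturbed away without creating or destroying an edge. The paper's construction goes in exactly the opposite direction: intervals sharing a left endpoint (``almost twins'') are deliberately merged into a single bubble, with the four quadrants $B^{++},B^{+-},B^{-+},B^{--}$ recording their types, and the new same-row adjacency rule of the \U-bubble model is precisely the device encoding the boundary case $\ri{A}=\li{B}$ between consecutive columns; your ``distinguished row $0$ for open intervals'' plays no role in the actual model and would not reproduce this rule. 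Two further points where your sketch diverges from what is needed: columns are assigned by a greedy left-to-right scan (open a new column when the current left endpoint reaches the right endpoint of the top bubble of the current column), not by bucketing $\lfloor\li{v}\rfloor$, which depends on an arbitrary origin and does not guarantee property (iv) of Definition~\ref{d_u_bubblemodel}; and your row assignment is local to each column, whereas the adjacency test between consecutive columns compares row indices \emph{across} columns, so the rows of all columns must be interleaved consistently --- this global alignment is the nontrivial part of the construction and is exactly what the paper's auxiliary path $P$ with level indicators computes. The linear running time is then a single scan over the (sorted) representation, as you say.
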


\lv{In addition, we show properties of our model, such as the relation of the size of a maximum independent set or maximum clique, and the size of the model, see Subsection~\ref{s_bubble_prop}.}

Given a graph $G$, the \maxcut problem is a problem of finding a partition of vertices of $G$ into two sets $S$ and $\ol S$ such that the number of edges with one endpoint in $S$ and the other one in $\ol S$ is maximum among all partitions.
There were two results about polynomiality of the \maxcut problem in unit interval graphs in the past years; the first one by Bodlaender, Kloks, and Niedermeier in 1999~\cite{BodlaenderKN99},
 the second one by Boyaci, Ekim, and Shalom which has been published in 2017~\cite{BoyaciES17}. 
 The result of the first paper was disproved by authors themselves a few years later~\cite{BodlaenderDisproof}. 
In the second paper, the authors used a bubble model for proving the polynomiality. 
However, we realized that this algorithm is also incorrect.
Moreover, it seems to us to be hardly repairable. 
We provide further discussion and also a concrete example, in Subsection~\ref{s_example}.
The complexity of the \maxcut problem in interval graphs was surprisingly unknown for a long time.
Interestingly, a result about \NP-completeness by Adhikary, Bose, Mukherjee, and Roy has appeared on arXiv~\cite{maxcut-hard} very recently\footnote{After the submission of the conference version of this paper.}.

Using the \U-bubble model, we obtain at least a subexponential-time algorithm for \maxcut in mixed unit interval graphs. 
We are not aware of any subexponential algorithms on interval graphs.
In general graphs, there has been extensive research dedicated to approximation of \maxcut in subexponential time, see e.g.~\cite{Arora2015} or~\cite{Hopkins2019}.
Furthermore, we obtain a polynomial-time algorithm if the given graph has a \U-bubble model with a constant number of columns. This extends a result by Boyaci, Ekim, and Shalom~\cite{BoyaciES18} who showed a polynomial-time algorithm for \maxcut on unit interval graphs which have a bubble model with two columns (also called co-bipartite chain graphs).
The question of whether the \maxcut problem is polynomial-time solvable or \NP-hard in unit interval graphs still remains open.

\begin{thm}\label{thm_maxcut}
Let $G$ be a mixed unit interval graph.
The maximum cardinality cut can be found in time $2^{\tilde{0}(\sqrt{n})}.$
\end{thm}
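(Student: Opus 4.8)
The plan is to sweep the columns of a \U-bubble model of $G$ from left to right with a dynamic program, after a preprocessing step that hands the few ``wide'' portions of the model to the polynomial-time few-columns algorithm. First I would reduce to connected $G$ (handle each component separately and add the optimal cut sizes) and use Theorem~\ref{thm_bubble} to compute in time \Oh{n} a \U-bubble model \B with columns $1,\dots,k$ of sizes $c_1,\dots,c_k$, so that $\sum_j c_j=n$ (first computing a mixed unit interval representation, if one is not given). The only structural property I need is the one the model guarantees: every edge of $G$ has its two endpoints in the same column or in two consecutive columns, and the set of edges between any two consecutive columns is a function of the restriction of the cut to those two columns together with the fixed row orders. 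Hence for any block $[a,b]$ of columns, every edge leaving $\{a,\dots,b\}$ is incident to column $a$ or to column $b$.

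Call column $j$ \emph{heavy} if $c_j>\sqrt n$ and \emph{light} otherwise. Because $\sum_j c_j=n$, there are fewer than $\sqrt n$ heavy columns, every maximal run of consecutive heavy columns has fewer than $\sqrt n$ columns, and distinct such runs are separated by light columns. For each heavy run $B$, and for each pair consisting of a cut of the light column immediately to the left of $B$ and a cut of the light column immediately to the right of $B$ — at most $2^{\sqrt n}$ cuts on each side — I would precompute the maximum number of cut edges, consistent with that pair, among all edges with both endpoints in $B$ or with one endpoint in $B$ and the other in a bounding light column. This is a \maxcut instance on the subgraph induced by $B$ and its two bounding columns with a prescribed partition on those two columns; the instance has fewer than $\sqrt n+2$ columns, so the few-columns polynomial-time algorithm — adapted in the obvious way to honor a prescribed partition — solves it in $n^{O(\sqrt n)}$ time, and over all choices and all at most $\sqrt n$ runs the preprocessing costs $2^{\tilde O(\sqrt n)}$; a run at an end of \B is treated the same way with one or zero prescribed cuts.

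Now run the sweep, keeping states only at light columns: a state at light column $j$ is a cut of that column, of which there are at most $2^{c_j}\le 2^{\sqrt n}$. From a light column the DP either steps to the next column when it too is light, adding the edges inside that column and the edges between the two columns (both read off from the two cuts), or steps over the intervening heavy run to the next light column, adding the value from that run's precomputed table indexed by the two adjacent cuts. Since $E(G)$ is partitioned into the edges inside a single light column, the edges between two consecutive light columns, and the edges owned by a unique heavy run (all the rest), each edge is counted exactly once, so the maximum over final states equals the maximum cut of $G$. The DP makes at most $k\le n$ transitions, each of cost $2^{O(\sqrt n)}\cdot\mathrm{poly}(n)$, which together with the preprocessing yields total running time $2^{\tilde O(\sqrt n)}$.

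I expect the delicate step to be the edge accounting: one must split $E(G)$ into exactly the three groups above and verify that each DP ingredient counts precisely its own group — this is where the extra same-row edges that the \U-bubble model adds beyond the ordinary bubble model need attention, and where one must be careful that an edge between a light column and an adjacent heavy run is charged to the run and not to the light column. The only other thing to check is that the few-columns polynomial-time algorithm can be executed with a prescribed partial partition on $O(\sqrt n)$ boundary vertices; as that algorithm is itself a column sweep this is routine, but it has to be stated and invoked.
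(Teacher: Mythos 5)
Your proposal is correct and follows essentially the same route as the paper: the same heavy/light column split at threshold $\sqrt n$, enumeration of the at most $2^{\sqrt n}$ cuts on each light column, an outer left-to-right DP over light columns, and a per-heavy-run computation with prescribed boundary cuts costing $n^{O(\sqrt n)}$. The only presentational difference is that you invoke the few-columns algorithm (Corollary~\ref{c_columns}, extended to honor fixed border cuts) as a black box for the heavy runs, whereas the paper develops that inner dynamic program explicitly (Lemmas~\ref{l_maxcut_unit} and~\ref{l_maxcut_heavy}) and derives the few-columns corollary from it — so the "routine" adaptation you defer is in fact where the paper's technical work lies.
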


\begin{cor}\label{c_columns}
The size of a maximum cut in the graph class defined by \U-bubble models with $k$ columns can be determined in the time ${\cal O}(n^{k+5})$. Moreover, for $k=2$ in time ${\cal O}(n^5)$.
\end{cor}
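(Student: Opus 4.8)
The plan is to isolate, from the algorithm behind Theorem~\ref{thm_maxcut}, the ingredient that is already polynomial in $n$ once the number of columns is a fixed constant, and to pin down the degree of that polynomial. So fix a \U-bubble model $\mathcal{B}$ of $G$ with columns $1,\dots,k$ and recall the two structural facts we will use: every edge of $G$ joins two vertices lying in the same column or in two consecutive columns, and between two consecutive columns the bubble-adjacency follows the same monotone (``nested'') pattern as in the original bubble model, with the extra same-row edges of the \U-bubble model respecting this monotonicity. A bipartition $(S,\ol S)$ is encoded by the numbers $x_{i,j}\in\{0,\dots,n_{i,j}\}$ of vertices of the bubble $B_{i,j}$ placed in $S$, where $n_{i,j}$ is the number of vertices of $B_{i,j}$; the number of cut edges equals $\sum_j X_j(c_j-X_j)$, with $X_j=\sum_i x_{i,j}$ and $c_j$ the size of column $j$ (the cut inside the clique on column $j$), plus a sum of bilinear terms $x_{i,j}(n_{i',j+1}-x_{i',j+1})+(n_{i,j}-x_{i,j})x_{i',j+1}$ over adjacent bubble pairs in consecutive columns.

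The first step is to enumerate the vector $(X_1,\dots,X_k)$ of column $S$-sizes. Since each $X_j$ lies in $\{0,\dots,c_j\}\subseteq\{0,\dots,n\}$, there are at most $(n+1)^k=\mathcal{O}(n^{k})$ admissible vectors, and for each of them the entire clique part $\sum_j X_j(c_j-X_j)$ is already determined; what remains is, for each guessed vector, to maximize the between-columns bilinear part over all assignments $(x_{i,j})$ with the prescribed column sums. The second step performs this inner maximization in time $\mathcal{O}(n^{5})$: by nestedness, once the column sums are fixed the coefficient with which a bubble of column $j$ enters the interaction with either of its neighbours is monotone in that bubble's row index, so one may restrict attention to assignments that fill the bubbles of each column in ``staircase'' fashion, and the remaining freedom is resolved by a secondary dynamic programme that sweeps the $k$ columns and whose state records only a constant number of row-prefix counters, each in $\{0,\dots,n\}$. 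Combining the two steps gives the claimed $\mathcal{O}(n^{k+5})$ bound.

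For $k=2$ there is no interior column, so the bilinear part involves a single pair of columns, and the inner procedure already determines an optimal cut of the whole instance in time $\mathcal{O}(n^{5})$; the outer enumeration over size vectors is then unnecessary, and one recovers --- now in the more general mixed setting --- the polynomial bound of Boyaci, Ekim, and Shalom~\cite{BoyaciES18} for co-bipartite chain graphs.

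I expect the second step to be the main obstacle. The easy part is the monotone-filling observation for a single pair of columns; the difficulty is that a bubble in an interior column interacts with both its left and its right neighbour, and the two induced orderings of its bubbles need not agree, so the staircase argument must be interleaved with the secondary dynamic programme rather than applied to the columns one at a time in isolation. Showing that this dynamic programme needs only $\mathcal{O}(1)$ counters --- a number independent of the height of the model --- is exactly what keeps the exponent at $k+5$ instead of letting it grow with the number of rows, and this is the technical crux.
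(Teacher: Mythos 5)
There is a genuine gap, and it sits exactly where you yourself place the ``technical crux'': the inner maximization. Your outer enumeration of the column sums $(X_1,\dots,X_k)$ fixes only the within-column (clique) contribution; the between-column bilinear part still depends on the full distribution of cut vertices over the bubbles of every column, and your claim that this residual problem is solvable by a staircase reduction plus a dynamic programme with $O(1)$ row-prefix counters is asserted, not proved. Worse, the monotonicity on which the staircase rests looks false: for a fixed assignment in column $j+1$, the coefficient of $x_{i,j}$ in the bilinear part is $\sum_{i'<i}\bigl(n_{i',j+1}-2x_{i',j+1}\bigr)$, a partial sum of terms of arbitrary sign, hence not monotone in $i$; moreover the interactions with the left and the right neighbour order the bubbles of an interior column in opposite directions. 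Retaining only column totals and hoping to restore consistency afterwards is essentially the error the paper identifies in the Boyaci--Ekim--Shalom algorithm, so the burden of proof on this inner step is heavy. (A smaller issue: the same-row edges of the \U-bubble model depend on quadrant types, so encoding a cut by per-bubble counts $x_{i,j}$ alone is insufficient; you need per-quadrant counts.)

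The paper's proof avoids the two-phase split entirely. It views the whole model as a single heavy part with empty borders and invokes the dynamic programme of Lemma~\ref{l_maxcut_heavy}: bubbles are processed in top-bottom, left-right order, and the state is the full $(k+1)$-tuple $(s_1,\dots,s_k,a)$ of cut-vertex counts \emph{so far} in every column, together with the $(*,+)$-count of the last processed bubble. Because of the global processing order, the prefix count for column $j+1$ at the moment a bubble of column $j$ in row $r$ is processed is exactly the number of cut vertices among its neighbours above row $r$ in column $j+1$, so no distributional information is lost. The running time, namely the number of states $(c_1+1)\cdots(c_k+1)(a+1)$ times the sum over bubbles of the products of the four quadrant sizes, is then bounded via AM--GM by $(n+1)\cdot\bigl(\tfrac{n+k}{k}\bigr)^{k}\cdot\bigl(\tfrac{n}{4}\bigr)^{4}=\mathcal{O}(n^{k+5})$; for $k=2$ only two quadrant types per column are relevant, which replaces $(b_i/4)^4$ by $(b_i/2)^2$ and yields $\mathcal{O}(n^5)$. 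To salvage your plan you would have to actually construct and verify the constant-state inner dynamic programme, which at present looks at least as hard as the original problem.
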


\lv{he third part of the paper is devoted to clique-width, one of the graph parameters that is used to measure the complexity of a graph.}
Many \NP-hard problems can be solved efficiently on graphs with bounded clique-width~\cite{CourcelleMR00}.
In general, 
\lv{it is \NP-complete to compute the exact value of clique-width. Furthermore, }%
it is \NP-complete even to decide if the graph has clique-width at most $k$ for a given number $k$, see \cite{Fellows09}.

Unit interval graphs are known to have unbounded clique-width~\cite{Golumbic00}.
It follows from results by Fellows, Rosamond, Rotics, and Szeider~\cite{FellowsRRS06},\ and Kaplan and Shamir~\cite{KaplanS96} that the clique-width of (mixed) unit interval graphs is upper-bounded by 
$\omega$ (the maximum size of their clique)~$+ 1$. 
Heggernes, Meister, and Papadopoulos~\cite{HeggernesMP09} improved this result for unit interval graphs using the bubble model.
There, the clique-width is upper-bounded by a minimum of $\alpha$ (the maximum size of an independent set) + 1, 
and a parameter related to the bubble model representation which is in the worst case $\omega + 1$.
We use similar ideas to extend these bounds to mixed unit interval graphs using the \U-bubble model. 
In particular, we obtain that the upper-bound on clique-width is the minimum of the analogously defined parameter for a \U-bubble model and 
$2\alpha+3$. 
The upper-bound is still in the worst case 
$\omega+ 1$.
The upper-bound can be also expressed in the number of rows or columns of \U-bubble model.
Refer to Theorem~\ref{thm_main_cwd} and Corollary~\ref{cor_cwd_rows_columns} in Section~\ref{s_cwd} for further details.
As a consequence, we obtain an analogous result to Corollary~\ref{c_columns} for rows using the following result.
Fomin, Golovach, Lokshtanov, and Saurabh~\cite{Fomin2014} showed that the \maxcut problem can be solved in time $\Oh{n^{2t+\Oh{1}}}$ where $t$ is clique-width of the input graph. %
By the combination of their result and our upper-bounds on clique-width (Theorem~\ref{thm_main_cwd} in Section~\ref{s_cwd}) we derive not only polynomial-time algorithm when the number of columns is bounded (with worse running time) but also a polynomial-time algorithm when the number of rows is bounded, formulated as Corollary~\ref{c_rows}.

\begin{cor}\label{c_rows}
  The size of a maximum cut in the graph class defined by \U-bubble models with $k$ rows can be determined in the time ${\cal O}(n^{4k+\Oh{1}})$.
\end{cor}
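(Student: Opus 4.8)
The plan is to obtain Corollary~\ref{c_rows} by plugging the row-version of our clique-width bound into a known clique-width meta-algorithm for \maxcut. Concretely, I would use the algorithm of Fomin, Golovach, Lokshtanov, and Saurabh~\cite{Fomin2014}, which computes a maximum cut in time $\Oh{n^{2t+\Oh{1}}}$ once a clique-width $t$-expression of the input graph is supplied, together with Theorem~\ref{thm_main_cwd} (see Corollary~\ref{cor_cwd_rows_columns}), which for a \U-bubble model with $k$ rows produces, in polynomial time, a clique-width expression on $t = 2k + \Oh{1}$ labels. Substituting $t = 2k+\Oh{1}$ into $\Oh{n^{2t+\Oh{1}}}$ yields $\Oh{n^{4k+\Oh{1}}}$, which is exactly the claimed running time.

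Carrying this out, the first step is to make the clique-width expression actually available. If the \U-bubble model is part of the input this is immediate; if only a graph in the class is given, we first construct a \U-bubble model, which by Theorem~\ref{thm_bubble} costs only $\Oh{n}$ additional time once a mixed unit interval representation is at hand. The second step invokes the constructive part of Theorem~\ref{thm_main_cwd}: sweeping the model column by column, at every moment only the non-empty bubbles of the two currently active columns are relevant, and each occupied bubble can be encoded with a constant number of labels recording its interval types; since a column has at most $k$ non-empty bubbles, $2k+\Oh{1}$ labels suffice and the corresponding relabel and join operations are generated explicitly. The third and final step feeds this expression to~\cite{Fomin2014} and reads off the running time $\Oh{n^{2(2k+\Oh{1})+\Oh{1}}} = \Oh{n^{4k+\Oh{1}}}$. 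The $k=2$ case needs no special handling here, as it is already subsumed (with a smaller exponent) by Corollary~\ref{c_columns}.

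The step I expect to be the genuine obstacle is controlling the number of labels in the first, constructive step: one must verify that the column-sweeping construction keeps only $\Oh{k}$ labels alive \emph{simultaneously} --- essentially a constant-size packet of labels per occupied row of the active column pair --- and, in particular, that the extra labels required to realise the ``same-row'' edges that distinguish the \U-bubble model from the classical bubble model still add up to only a constant per row rather than to something growing with $k$. Once that linear-in-$k$ accounting is pinned down, the rest of the argument is a direct composition of Theorem~\ref{thm_main_cwd} with the meta-theorem of~\cite{Fomin2014}, with the additive $\Oh{1}$ terms absorbed into the exponent.
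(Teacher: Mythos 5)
Your proposal is correct and is exactly the paper's argument: Corollary~\ref{c_rows} is obtained by combining the clique-width bound $\cwd{G}\le 2k+2$ from Corollary~\ref{cor_cwd_rows_columns} (whose constructive proof is the group-number/column-sweep argument of Theorem~\ref{thm_cwd_groups}, including the factor-of-two accounting for the same-row edge types that you flag as the delicate point) with the $\Oh{n^{2t+\Oh{1}}}$ \maxcut algorithm of Fomin, Golovach, Lokshtanov, and Saurabh~\cite{Fomin2014}, giving $\Oh{n^{4k+\Oh{1}}}$.
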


\lv{\subsection{Preliminaries and Notation}}
\sv{\noindent{\bf Preliminaries and Notation.~~}}
By a \emph{graph} we mean a finite, undirected graph without loops and multiedges. Let $G$ be a graph. We denote by $V(G)$ and $E(G)$ the vertex and edge set of $G$, respectively; with $n=|V(G)|$ and $m=|E(G)|$.
Let $\Mis{G}$ and $\Mclique{G}$ denote the maximum size of an independent set of $G$ and the maximum size of a clique in $G$, respectively.
By a \emph{family} we mean a multiset $\{S_1,\dots,S_n\}$ which allows the possibility that $S_i=S_j$ even though $i\neq j$.

Let $x, y \in \R$ be real numbers. We call 
the set $\{z\in\R:\,x\le z\le y\}$ \emph{closed interval} $[x,y]$, 
the set $\{z\in\R:\,x<z<y\}$ \emph{open interval} $(x,y)$, 
the set $\{z\in\R:\,x<z\le y\}$ \emph{open-closed interval}  $(x,y]$, and  
the set $\{z\in\R:\,x\le z<y\}$ \emph{closed-open interval} $[x,y)$.
By \emph{semi-closed interval} we mean interval which is open-closed or closed-open.
We denote the set of all open, closed, open-closed, and closed-open intervals of unit length by $\U^{--}$, $\U^{++},$  $\U^{-+}$, and $\U^{+-}$, respectively.
Formally,
$\U^{++} \df \{[x,x+1]:\, x\in \R\},$ 
$\U^{--} \df \{(x,x+1):\, x\in \R\},$ 
$\U^{+-} \df \{[x,x+1):\, x\in \R\},$
and $\U^{-+} \df \{(x,x+1]:\, x\in \R\}.$ 
We further denote the set of all unit intervals
by 
\lv{\[}
\sv{$}
\U \df\U^{++}\cup\U^{--}\cup\U^{+-}\cup\U^{-+}.
\sv{$}
\lv{\]}
From now on, we will be speaking only about unit intervals.
\lv{%

}%
Let $I$ be an interval, we define the left and right end of $I$ as $\li{I} \df \inf(I)$ and $\ri{I} \df \sup(I)$, respectively. 
Let $I,J\in \U$ be unit intervals, $I,J$ are \emph{almost twins} if $\li{I}=\li{J}.$ 
The \emph{type of an interval} $I$ is a pair $(r,s)$ where $I\in\U^{r,s},\,r,s\in\{+,-\}.$ 

Let $G=(V,E)$ be a graph and $\mathcal{I}$ an interval representation of $G$. Let $v\in V$ be represented by an interval $I_v\in\U^{r,s}$, where $r,s\in\{+,-\}$, in $\mathcal{I}$. The \emph{type of a vertex} $v\in V$ in $\I$, denoted by $\typeI{\I}{v}$, is the pair $(r,s)$. We use \type{v} if it is clear which interval representation we have in mind.
We follow the standard approach where the maximum over the empty set is~$-\infty$.
The notion of $\tilde{\cal O}$ denotes the standard ``big 0'' notion which ignores polylogarithmic factors,~i.e, ${\cal O}(f(n) \log^k n) = \tilde{{\cal O}}(f(n))$, where $k$ is a constant.

\toappendix{
  \sv{\section{Recognition and \U-representation of mixed unit interval graphs}\label{s_recog}}
\lv{\subsubsection{Recognition and \U-representation of mixed unit interval graphs}}

All the classes of mixed unit interval graphs can be characterized using forbidden induced subgraphs, sometimes by infinitely many.
Rautenbach and Szwarcfiter~\cite{RautenbachS13} %
gave a characterization of \openclosed-graphs using five forbidden induced subgraphs. 
Joos~\cite{Joos15} gave a characterization of mixed unit interval graphs without twins by an infinite class of forbidden induced subgraphs. 
Shuchat, Shull, Trenk, and West~\cite{Shuchat14} proved independently also this characterization, moreover, they complemented it by a quadratic-time algorithm that produces a mixed proper interval representation. 
Finally, Kratochv\'il and Talon~\cite{TalonK18} characterized the remaining classes.

Le and Rautenbach~\cite{Le12} characterized graphs that have a mixed unit interval representations in which all intervals have integer endpoints, and provided a quadratic-time algorithm that decides whether a given interval graph admits such a representation.

We refer the reader to the original papers for more details and concrete forbidden subgraphs.

Moreover, there are nice structural results of the subclasses of mixed unit interval graphs. For example, it is shown in \cite{RautenbachS13} that for \openclosed-intersection representations, open intervals are only really needed to represent claws, in particular, 
for any \openclosed-graph there exist a \openclosed-representation such that for every open interval, there is a closed interval with the same endpoints.
More structural results can be found in \cite{TalonK18}.

\begin{thm}[\cite{TalonK18}] \label{thm_mixed_repre_time}
The classes of semi-mixed and mixed unit interval graphs can be recognized in time ${\cal O}(n^2)$.
Moreover,  there exists an algorithm which takes a graph $G \in \semimixed$ on input, and outputs a corresponding \semimixed-representation of $G$ in time ${\cal O}(n^2)$. 
\end{thm}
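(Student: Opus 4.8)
The plan is to exploit that semi-mixed and mixed unit interval graphs are subclasses of interval graphs, and to build the required representation on top of the well-understood combinatorial skeleton of an interval graph, treating the choice of interval \emph{types} as the genuinely new ingredient. First I would run a linear-time interval-graph recognition algorithm; if $G$ is not an interval graph it cannot be (semi-)mixed unit interval, so we reject. Otherwise we obtain a clique path of $G$ — a linear order of the maximal cliques in which, for each vertex, the cliques containing it are consecutive — and we reduce $G$ modulo the true-twin and almost-twin relations, so that only a bounded number of vertices sit at each ``position'' of the skeleton. Note that, although a mixed unit interval graph need not be a proper interval graph (it may contain claws), in \emph{any} mixed unit interval representation every interval has the same length, so the left endpoints and the right endpoints appear in one and the same linear order; together with the clique path this leaves only a controlled amount of freedom in how the vertices must be laid out from left to right.

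Next, with a candidate left-to-right order of the endpoints fixed, I would encode a representation as a constraint system whose unknowns are the left endpoints $\ell_v\in\mathbb{R}$ (each interval having unit length, hence determined by $\ell_v$ and its type) together with a type $\type{v}\in\bounds$ for every vertex. Each edge forces the two corresponding unit intervals to intersect and each non-edge forces them to be disjoint; because the order of the endpoints is fixed, every such condition becomes a single inequality $\ell_u-\ell_v\le c$ or $\ell_u-\ell_v<c$ whose strictness is dictated only by $\type{u}$ and $\type{v}$. For a \emph{fixed} type assignment this is an ordinary system of difference constraints over the reals, feasible if and only if an associated weighted constraint digraph has no forbidden closed walk, and a feasible solution — the sought representation — can then be produced by a single left-to-right sweep along the clique path.

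The heart of the proof is therefore to avoid iterating over the $4^n$ type assignments (and over the possible layout orders). Here I would use the structural results on mixed unit interval graphs — open and semi-closed intervals are needed only to realise local violations of claw-freeness; for instance, in a \upm-representation every open interval may be assumed to share both endpoints with a closed one (see \cite{RautenbachS13,Shuchat14,TalonK18}). Concretely, I would prove that the type of each vertex (and each residual ambiguity in the order) can be read off from a bounded-radius neighbourhood in the clique path by a greedy rule — ``keep a closed interval unless a nearby claw-, net-, or tent-like configuration forces it to be opened on one or both sides'' — and that this rule is correct: wherever it fails, the offending bounded-size induced subgraph is one of the forbidden subgraphs of \cite{RautenbachS13,TalonK18}, certifying that $G$ is not (semi-)mixed unit interval. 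As there are only finitely many such obstructions for the semi-mixed case (and structured analogues for the mixed case), scanning for them and then solving the now type-determined difference system both fit in $\Oh{n^2}$ time; for a semi-mixed graph the feasible solution is exactly the claimed \semimixed-representation.

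The main obstacle is precisely this last step: showing that types and the residual layout choices can be decided locally and greedily without ever producing a \emph{global} infeasibility, and that every failure is certified by a bounded-size forbidden subgraph rather than by a long-range conflict propagating along the clique path. This is where the forbidden-subgraph characterisations genuinely enter — and where, since the families become infinite once twins are allowed, the earlier reduction modulo the (almost-)twin relation is essential — and it is where essentially all of the case analysis lives. Converting the resulting decision procedure into an explicit representation with the correct real endpoints (and, where relevant, with integer endpoints, cf.\ \cite{Le12}) is then routine but careful bookkeeping.
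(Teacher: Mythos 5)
This statement is not proved in the paper at all: it is imported verbatim from \cite{TalonK18} (together with Corollary~\ref{claim_mixed_repre_time}) and used as a black box to seed the \U-bubble-model construction. So there is no in-paper argument to compare yours against; the only question is whether your sketch would constitute a proof on its own, and it would not. The decisive step of your plan --- that the type of every vertex and the residual freedom in the endpoint order can be resolved by a \emph{local} greedy rule, and that every failure of that rule is certified by a bounded-size forbidden induced subgraph --- is precisely the content of the theorem, and you assert it rather than prove it (you say yourself that ``essentially all of the case analysis lives'' there). For the semi-mixed class a finite obstruction list does exist in the twin-free setting, but for the full mixed class the known characterisations (Joos; Shuchat, Shull, Trenk, West) involve an \emph{infinite} family of forbidden subgraphs, so ``scan for the finitely many obstructions'' is not available, and establishing that no long-range conflict can propagate along the clique path is exactly the hard part that \cite{TalonK18} (building on the quadratic-time representation algorithm of \cite{Shuchat14}) actually carries out. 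As written, your argument is circular at its core.

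Two smaller but genuine problems. First, you propose to quotient $G$ by the almost-twin relation before constructing a representation, but almost-twins are defined \emph{via} a representation (equal left endpoints); on the abstract graph only the true-twin relation is available, and in a mixed unit interval graph vertices placed at the same left endpoint need \emph{not} be true twins (their neighbourhoods may differ within their row, as Observation~\ref{r-complete} notes), so the reduction you rely on to make the obstruction families finite is not well defined at that stage. Second, the claim that the common unit length forces ``only a controlled amount of freedom'' in the left-to-right order needs justification: mixed unit interval graphs are not proper interval graphs, so the usual uniqueness (up to reversal and twins) of the vertex ordering for indifference graphs does not apply, and you would have to bound the number of candidate orders explicitly to keep the difference-constraint phase within $\Oh{n^2}$. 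The difference-constraint encoding itself is fine once the order and the types are fixed; everything before that point is where the theorem lives and where your proposal stops short.
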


\begin{cor}[\cite{TalonK18}] \label{claim_mixed_repre_time}
It is possible to modify the algorithm for semi-mixed unit interval graphs such that given a mixed unit interval graph $G$, it outputs a mixed unit interval representation of $G$ in time ${\cal O}(n^2)$.
\end{cor}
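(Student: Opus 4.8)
The plan is to show that the recognition algorithm for mixed unit interval graphs behind Theorem~\ref{thm_mixed_repre_time} is already essentially constructive, so that only light bookkeeping turns it into one that outputs a \U-representation within the same $\Oh{n^2}$ budget; since the input is promised to be a mixed unit interval graph, correctness of recognition itself may be taken for granted. First I would recall that the recognition of semi-mixed and of mixed unit interval graphs in~\cite{TalonK18} shares a common skeleton: one removes twins (harmless by the characterisations of~\cite{Joos15,Shuchat14}, since an almost-twin can always be realised by a unit interval with the same left endpoint and a suitable type), fixes a proper-interval-type linear order on the resulting ``core'', and sweeps the vertices from left to right placing one interval at a time; at each step the only real freedom is the left endpoint of the interval (forced up to local ties by the already-committed clique and nested-neighbourhood structure) and the \emph{type} of the interval. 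The semi-mixed algorithm already maintains exactly this data and checks every local choice against the forbidden-induced-subgraph characterisation; the modification I would make is to additionally allow the fourth type $\U^{+-}$ in the type-choice step, and to re-insert the removed twins at the very end, each at the left endpoint of its representative with a compatible type.

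Second, I would argue that admitting $\U^{+-}$ costs nothing asymptotically. The type $\U^{+-}$ is the mirror image of $\U^{-+}$ under the reflection $x\mapsto -x$ of the real line (a closed-open interval becomes open-closed and conversely), so every local rule governing when a closed-open interval must or may be used is obtained by a left--right reversal of the corresponding rule for open-closed intervals, which the semi-mixed procedure already handles. Hence the per-vertex case analysis grows by at most a constant factor, the sweep still runs in time $\Oh{n^2}$, storing a left endpoint and a type for each of the $n$ vertices uses $\Oh{n}$ space, and reinstating the twins adds only $\Oh{1}$ work per vertex.

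The delicate point---and the reason the statement is phrased as a \emph{modification} rather than an independent result---is correctness of the enlarged sweep: one must show that enabling $\U^{+-}$ never forces a backtrack, i.e.\ that whenever the semi-mixed procedure would be blocked by a claw-type obstruction resolvable only with the fourth interval type, a locally consistent closed-open placement is available, and conversely that every mixed unit interval graph admits a representation this left-to-right procedure can find without global revision. This is exactly where the structural results of~\cite{TalonK18} enter: the analogue of the fact recalled above from~\cite{RautenbachS13}---that the ``exotic'' interval types are needed only inside tightly localised configurations (claws and their bounded-size relatives)---guarantees that the extra type-choices are confined to bounded windows and so stay compatible with the sweep. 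I expect this locality / no-backtracking argument to be the main obstacle; once it is in place, the complexity accounting above is routine and the $\Oh{n^2}$ bound, hence the corollary, follows.
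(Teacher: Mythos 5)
The paper does not actually prove this corollary: it is stated purely as a citation of~\cite{TalonK18}, alongside Theorem~\ref{thm_mixed_repre_time}, and the modification of the recognition algorithm lives entirely in that reference. So there is no in-paper proof to compare your proposal against; what can be judged is whether your sketch would stand on its own, and as written it does not. You correctly identify the central difficulty---that enabling the fourth interval type in the left-to-right sweep must never force a backtrack---but you then defer it wholesale to unspecified "structural results of~\cite{TalonK18}" without carrying out the argument. Since that locality/no-backtracking claim is the entire content of the corollary (the complexity accounting really is routine), the proposal is a plan for a proof rather than a proof.

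There is also a concrete weak point in your second paragraph. The reflection $x\mapsto -x$ does exchange $\U^{-+}$ and $\U^{+-}$, but this only shows that the class of $(\U^{++}\cup\U^{--}\cup\U^{-+})$-graphs equals the class of $(\U^{++}\cup\U^{--}\cup\U^{+-})$-graphs; it says nothing about representations that need \emph{both} semi-closed types simultaneously, and by the strict inclusion $\semimixed\subsetneq\mixed$ recorded in the paper such graphs exist. The new case analysis therefore cannot be obtained merely by mirroring the existing rules for $\U^{-+}$: the genuinely new cases are exactly the interactions between a closed-open and an open-closed interval meeting at a shared endpoint (this is the configuration that creates the same-row edges $\Bhc{i,j}$--$\Bch{i,j+1}$ in the \U-bubble model). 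Any self-contained proof of the corollary has to treat those interactions explicitly, which your symmetry argument elides.
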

}

\section{Bubble model for mixed unit interval graphs}\label{s_Ububble}

\sv{
  \toappendix{\section{Additions to Section~\ref{s_Ububble}}}
}

In this section, we present a \U-bubble model, a new representation of mixed unit interval graphs which is inspired by the notion of bubble model for proper interval graphs created by Heggernes, Meister, and Papadopoulos~\cite{HeggernesMP09} in 2009.
\lv{\subsection{Definition of bubble model}
First, we present the bubble model for proper interval graphs as it was introduced by Heggernes~et~al.
}

\begin{definition}[Heggernes et al.~\cite{HeggernesMP09}, reformulated]
If $A$ is a finite non-empty set, then a \emph{2-di\-mensional bubble structure for $A$} is a partition $\B = \langle B_{i,j}\rangle_{1\le j\le k, 1\le i\le r_j}$, where 
$A=\bigcup_{i,j}%
{B_{i,j}}, $
$\emptyset \subseteq B_{i,j} \subseteq A$ for every $i, j$ with $1\le j\le k$ and $1\le i\le r_j$,  
 and $B_{1,1}\dots B_{k,r_k}$ are pairwise disjoint.
The \emph{graph given by \B}, denoted as $G(\B)$, is defined as follows:
\begin{enumerate}
\item $G(\B)$ has a vertex for every element in $A$, and
\item $uv$ is an edge of $G(\B)$ if and only if there are indices $i,i',j,j'$ such that $u \in \Bij,$ $v \in B_{i',j'},$ $|j-j'|\le 1$, and one of the two conditions holds: either $j=j'$ or $(i-i')(j-j')<0$.
\end{enumerate}

A \emph{bubble model} for a graph $G=(V,E)$ is a 2-dimensional bubble structure~\B for $V$ such that $G=G(\B).$ 
\end{definition}

\begin{thm}[Heggernes et al.~\cite{HeggernesMP09}]
A graph is a proper interval graph if and only if it has a bubble model.
\end{thm}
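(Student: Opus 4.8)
The plan is to prove the two implications separately, using an explicit geometric construction for one direction and a more delicate combinatorial construction for the other.

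\textbf{A bubble model yields a proper interval representation.}
Given a bubble model $\B = \langle B_{i,j}\rangle_{1\le j\le k,\,1\le i\le r_j}$ of $G$, put $R=\max_j r_j$ and $\varepsilon = \tfrac{1}{4(R+1)}$, and to a vertex $v$ lying in the bubble $B_{i,j}$ assign the closed unit interval $I_v = [x_v,\,x_v+1]$ with $x_v = j(1+\varepsilon^2)+i\varepsilon$. I claim $G=G(\B)$ for this representation. If $v\in B_{i,j}$ and $w\in B_{i',j}$ share a column, then $|x_v-x_w|=|i-i'|\varepsilon<R\varepsilon<1$, so $I_v\cap I_w\neq\emptyset$, matching ``same column $\Rightarrow$ edge''. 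If $v\in B_{i,j}$ and $w\in B_{i',j+1}$, then $x_w-x_v=(1+\varepsilon^2)+(i'-i)\varepsilon$, which is $\le 1$ exactly when $\varepsilon+(i'-i)\le 0$, i.e.\ exactly when $i>i'$, matching $(i-i')(j-j')<0$. If the two columns differ by at least $2$, then $x_w-x_v\ge 2(1+\varepsilon^2)-(R-1)\varepsilon>2-R\varepsilon>1$, so $I_v\cap I_w=\emptyset$, matching ``columns at distance $\ge 2$ $\Rightarrow$ no edge'' (empty bubbles or whole empty columns cause no trouble: they contribute no interval, and the last estimate still applies across them). Since all intervals have unit length, none properly contains another, so $G(\B)$ is a proper interval graph.

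\textbf{A proper interval representation yields a bubble model.}
By Roberts' theorem~\cite{Roberts69}, $G$ has a representation by closed unit intervals; after a small perturbation I may assume all left endpoints are distinct and no two differ by exactly $1$. List $V(G)=\{v_1,\dots,v_n\}$ with $\li{I_{v_1}}<\dots<\li{I_{v_n}}$, so that for $a<b$ we have $v_av_b\in E$ iff $\li{I_{v_b}}<\li{I_{v_a}}+1$; write $\rho(i)$ for the largest $k$ with $\li{I_{v_k}}<\li{I_{v_i}}+1$ (the rightmost neighbour of $v_i$), which is non-decreasing in $i$. Define columns greedily: set $p_1=1$, and having fixed $p_j$ let $C_j=\{v_{p_j},\dots,v_{\rho(p_j)}\}$ and $p_{j+1}=\rho(p_j)+1$, stopping once $\rho(p_j)=n$. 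Each $C_j$ is a clique (its left endpoints lie in a window of length $1$), and the $C_j$ partition $V(G)$ into blocks of consecutive indices. Since $\li{I_{v_{p_{j+1}}}}\ge \li{I_{v_{p_j}}}+1$ by construction, a short computation shows that vertices in columns at distance $\ge 2$ are non-adjacent and that the bipartite graph between consecutive columns $C_j,C_{j+1}$ is a \emph{chain graph}: both families $\{N(v_a)\cap C_{j+1}:v_a\in C_j\}$ and $\{N(v_b)\cap C_j:v_b\in C_{j+1}\}$ are linearly ordered by inclusion, because $\rho$ and the symmetric ``leftmost neighbour'' function $\lambda$ are non-decreasing, and in either family the vertices of a fixed column appear in the order of their intervals.

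\textbf{Placing the rows.}
It remains to assign each vertex a row so that $G(\B)=G$; by the previous step only the condition for consecutive columns has to be arranged, namely $v_av_b\in E\iff \row{v_a}>\row{v_b}$ for $v_a\in C_j$, $v_b\in C_{j+1}$. This is the step I expect to be the main obstacle, since one global numbering must simultaneously encode the staircase of \emph{every} consecutive pair of columns; in particular the naive choice ``$\row{v}=$ position of $v$ in its column'' does not work. I would proceed by left-to-right induction on $j$. For the $t$-th vertex $v_b$ of $C_{j+1}$, its $C_j$-neighbours form a final segment of $C_j$, so once the rows of $C_j$ are fixed, the row of $v_b$ is forced to lie in the half-open interval $\bigl[\max\{\row{v_a}:v_a\in C_j,\ v_a\not\sim v_b\},\ \min\{\row{v_a}:v_a\in C_j,\ v_a\sim v_b\}\bigr)$, with the obvious conventions if $v_b$ is adjacent to all of $C_j$ or to none of it. If the rows of $C_j$ are chosen strictly increasing along the column, this interval is non-empty, and since $\lambda$ is non-decreasing the forced intervals move weakly to the right as $t$ grows, so the rows of $C_{j+1}$ can also be chosen strictly increasing — provided each forced interval is wide enough to absorb all vertices of $C_{j+1}$ routed into it. I would secure this by leaving enough empty rows: start $C_1$ with rows that are multiples of $(n+1)^{k-1}$ and, at step $j$, place the rows of $C_{j+1}$ at multiples of $(n+1)^{k-j-1}$ inside the forced intervals. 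Carrying this induction through and then verifying the three cases (same column, consecutive columns, distance $\ge 2$) against the definition of $G(\B)$ completes the proof.
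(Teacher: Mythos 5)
Your proposal is correct in substance, but note that the paper does not prove this statement at all: it is quoted from Heggernes, Meister, and Papadopoulos~\cite{HeggernesMP09}. The honest comparison is therefore with the paper's proof of its generalization, Theorem~\ref{thm_bubble}. For the direction ``bubble model $\Rightarrow$ proper interval graph'' your construction is essentially the one the paper uses there (left endpoint $j+(i-1)\varepsilon$ with $\varepsilon$ the reciprocal of the number of rows); your extra $\varepsilon^2$ shift is the right adjustment for the all-closed-intervals setting, since it is exactly what prevents same-row vertices of consecutive columns from becoming adjacent — an adjacency that in the \U-bubble model is instead governed by interval types. For the converse your route genuinely differs: the paper builds the (\U-)bubble model by an $\Oh{n}$ sweep that groups almost-twins into bubbles and threads them onto an auxiliary path whose level-marked arcs encode the row assignment, whereas you first carve the left-endpoint order greedily into column cliques and then fix rows by a left-to-right induction with exponentially spaced row indices. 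Your version is conceptually simpler and the key points are all in place (columns are cliques, non-consecutive columns are anticomplete, consecutive columns form a chain graph, each vertex's admissible row window is a half-open interval that moves weakly rightward, and the windows for distinct neighbourhoods are disjoint), but it pays two prices: the row indices are of size roughly $(n+1)^{k}$, so the resulting partition carries a huge number of empty bubbles (harmless under the definition, which allows $B_{i,j}=\emptyset$, but far from the linear-size model the paper needs for its algorithmic applications), and the off-by-one bookkeeping in the spacing argument, which you defer, still has to be carried out. Neither issue is a gap in correctness for the statement as posed.
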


We define a similar matrix-type structure for mixed unit interval graphs where each set $B_{i,j}$ is split into four parts and edges are allowed also in the same row under specific conditions. 
\begin{definition}
Let $A$ be a finite non-empty set and $\B = \langle B_{i,j}\rangle_{1\le j\le k, 1\le i\le r_j}$ be a 2-dimensional bubble structure for $A$ such that 
$B_{i,j} = {\Bcc{i,j}} \cup\Bco{i,j}\cup\Boc{i,j}\cup\Boo{i,j}$, 
$B_{i,j}^{r,s}$ are pairwise disjoint, 
and $\emptyset \subseteq B_{i,j}^{r,s} \subseteq B_{i,j}$ for every $r,s\in\{+,-\}$ and $i, j$ with $1\le j\le k$ and $1\le i\le r_j$.
We call the partition \B a \emph{2-dimensional \U-bubble structure} for $A$.
\end{definition} 

We call  each set \Bij a \emph{bubble}, and each set $B_{i,j}^{r,s}, r,s\in\{+,-\},$ a \emph{quadrant} of the bubble \Bij. The \emph{type} of a quadrant $\Bij^{r,s}$, $r,s\in\{+,-\}$, is the pair $(r,s)$. 
We denote by $*$ both $+$ and $-$, for example $\Bhc{i,j}=\Boc{i,j}\cup\Bcc{i,j}$.
Bubbles with the same $i$-index form a \emph{row} of \B, and with the same $j$-index a \emph{column} of \B, we say vertices from bubbles $B_{i,1}\cup\ldots\cup B_{i,k}$ \emph{appear in row $i$}, and we denote $i$ as their \emph{row-index}.
We define an analogous notion for columns.
We denote the index of the first row with a non-empty bubble as $\btop{j}\df\min{\{i\mid \Bij \in \B \text{ and } B_{ij}\neq\emptyset\}}$. Thus, $\BT{j}$ is the first non-empty bubble in the column $j$.
Let $B$ be a bubble, then $\row{B}$ and $\col{B}$ is the row-index and column-index of $B$, respectively.
Let $u\in \Bij$, $v\in B_{i',j'}$; we say that $u$ is \emph{under} than $v$ and $v$ is \emph{above} $u$ if $i>i'$.

\tikzset{
  bubble/.style n args={4}{%
  }
}

\begin{figure}[h]
\centering

\begin{subfigure}[b]{\textwidth}
\centering

\begin{tikzpicture}[scale=1.3, yscale=1, bend angle=40, thick,place2/.style={thick,
inner sep=0pt,minimum size=1mm}, place/.style={circle,draw=black,fill=black,thick,
inner sep=0pt,minimum size=1mm}]

\def\dx{2}
\def\dy{0.69}

\node (i0) at (\dx+ 0,0) [place,label=below:{$i$}] {};
\node (k02) at (\dx+ 1-0.4,\dy) [place,label={$k_2$}] {};
\node (k01) at (\dx+ 1-0.8,\dy) [place,label={$k_1$}] {};
\node (k03) at (\dx+ 1,\dy) [place,label={$k_3$}] {};

\node (j0) at (\dx+ 1,0) [place,label=below:{$j$}] {};
\node (m0) at (\dx+ 1.5,\dy) [place,label=$m$] {};
\node (l01) at (\dx+ 2,0) [place,label={$\ell_1$}] {};
\node (l02) at (\dx+ 2,-\dy) [place,label=below:{$\ell_2$}] {};

\node (n0) at (\dx+ 2.5,\dy) [place,label=$n$] {};
\node (o0) at (\dx+ 3,0) [place,label=below:$o$] {};
\node (t0) at (\dx+ 3.5,-\dy) [place,label=below:$t$] {};
\node (p0) at (\dx+ 3,\dy) [place,label=$p$] {};
\node (s0) at (\dx+ 4,0) [place,label=below:$s$] {};
\node (u0) at (\dx+ 5,0) [place,label=below:$u$] {};

\node (q0) at (\dx+3.5,\dy) [place,color=red,label=\textcolor{red}{$q$}] {};
\node (r0) at (\dx+4,\dy) [place,color=red,label=above right:\textcolor{red}{$r$}] {};

\draw[fill,color={blue!35}] (\dx-1,-\dy/4) ellipse (0.17 and 1.5*\dy+0.1);

\node (c0) at (\dx-1,\dy) [place,color=blue,label=above right:\textcolor{blue}{{$c$}}] {};
\node (d0) at (\dx-1,\dy/2) [place,color=blue,label=above right:\textcolor{blue}{$d$}] {};
\node (e0) at (\dx-1,0) [place,color=blue,label=above right:\textcolor{blue}{$e$}] {};
\node (f0) at (\dx-1,-\dy/2) [place,color=blue,label=left:\textcolor{blue}{$f$}] {};
\node (g0) at (\dx-1,-\dy) [place,color=blue,label=left:\textcolor{blue}{$g$}] {};
\node (h0) at (\dx-1,-\dy*1.5) [place,color=blue,label=left:\textcolor{blue}{$h$}] {};

\node (a0) at (\dx-2,\dy) [place,color=blue,label=above:\textcolor{blue}{{$a$}}] {};
\node (b0) at (\dx-2,0) [place,color=blue,label=left:\textcolor{blue}{{$b$}}] {};

\foreach \x in {e0,f0,g0,h0}{
    \draw[color=blue] (i0)--(\x);
    }
\foreach \x in {b0,c0,d0}{
    \draw[color=blue] (a0)--(\x);
    }
\foreach \x in {c0,d0,e0,f0}{
    \draw[color=blue] (b0)--(\x);
    }


\draw (i0)--(j0);
\draw (k01)--(k02);
\draw (k01)--(k03);
\draw (k02)--(k03);
\draw (k01)--(j0);
\draw (k02)--(j0);
\draw (k03)--(j0);
\draw (l02)--(l01);
\draw (j0)--(l01);
\draw (j0)--(l02);
\draw (o0)--(l01);
\draw (o0)--(l02);
\draw (j0)--(m0);
\draw (n0)--(m0);
\draw (n0)--(l01);
\draw (n0)--(l02);
\draw (m0)--(l01);
\draw (m0)--(l02);
\draw (o0)--(p0);
\draw (o0)--(s0);
\draw (n0)--(o0);

\draw (t0)--(s0);
\draw (t0)--(o0);

\draw[color=red] (q0)--(r0);
\draw[color=red] (q0)--(r0);

\draw (u0)--(s0);
\foreach \x in {o0,s0,t0,p0} { 
    \draw[color=red] (q0)--(\x);
    \draw[color=red] (r0)--(\x);
     }

\draw[bend left=40]  (k01) to [bend left=40] (k03);

\end{tikzpicture}
\caption{Graph $G$; the blue ellipse denotes clique $cdefgh$; colors are used only for clarity.}
\end{subfigure}

\begin{subfigure}[b]{\textwidth}
\centering

\begin{tikzpicture}[scale=1, yscale=0.39, bend angle=40, thick,place2/.style={thick,
inner sep=0pt,minimum size=1mm}, place/.style={circle,draw=black,fill=black,thick,
inner sep=0pt,minimum size=1mm}]

\node  at (0,7) [align=left] {}; 

\draw[dotted, darkgray] (-1.2,6) -- (-1.2,-4);
\draw[dotted, darkgray] (0,6) -- (0,-4);
\draw[dotted, darkgray] (1,6) -- (1,-4);
\draw[dotted, darkgray] (2,6) -- (2,-4);
\draw[dotted, darkgray] (3,6) -- (3,-4);
\draw[dotted, darkgray] (4,6) -- (4,-4);
\draw[dotted, darkgray] (5,6) -- (5,-4);

\draw[{[-]}] (0,1) -- (1,1);
\draw[{(-)}] (1,4) --(2,4); 
\draw[{(-)}] (1,5) --(2,5); 
\draw[{(-)}] (1,3) -- (2,3);
\draw[{[-]}] (1,2) --(2,2); 
\draw[{[-)}] (2,1) -- (3,1);
\draw[{(-]}] (2,0) -- (3,0); 
\draw[{[-]}] (2,-1) -- (3,-1);
\draw[{[-]}] (2,-2) -- (3,-2);
\draw[(-)] (3,3)--(4,3); 
\draw[{[-]}] (3,2)--(4,2);
\draw[{[-]}] (4,1)--(5,1);
\draw[{[-)}] (4,0)--(5,0);
\draw[{[-]}] (5,2)--(6,2);

\draw[{[-]}] (3.3,-1)--(4.3,-1); 
\draw[{[-]}] (3.6,-2)--(4.6,-2); 

\draw[{[-]}] (-1.2,3) -- (-0.2,3);
\draw[{[-]}] (-1,2) -- (0,2);
\draw[{[-]}] (-0.8,0) -- (0.2,0);
\draw[{[-]}] (-0.6,-1) -- (0.4,-1);
\draw[{[-]}] (-0.4,-2) -- (0.6,-2);
\draw[{[-]}] (-2.2,2) -- (-1.2,2);
\draw[{[-]}] (-1.8,1) -- (-0.8,1);

\draw[{[-]}] (-1.4,-3) -- (-0.4,-3);

\node (i) at ( 0.5,1.4) [place2] {$i$};
\node (j) at ( 1.5,2.4) [place2] {$j$};
\node (k3) at ( 1.5,3.4) [place2] {$k_3$};
\node (k2) at ( 1.5,4.4) [place2] {$k_2$};
\node (k1) at ( 1.5,5.4) [place2] {$k_1$};
\node (m) at ( 2.5,1.4) [place2] {$m$};
\node (n) at ( 2.5,0.4) [place2] {$n$};
\node (l1) at ( 2.5,-0.6) [place2] {$\ell_1$};
\node (l2) at ( 2.5,-1.6) [place2] {$\ell_2$};
\node (p) at ( 3.5,3.4) [place2] {$p$};
\node (o) at ( 3.5,2.4) [place2] {$o$};
\node (s) at ( 4.5,1.4) [place2] {$s$};
\node (t) at ( 4.5,0.4) [place2] {$t$};
\node (u) at ( 5.5,2.4) [place2] {$u$};
\node (p) at ( 3.8,-0.6) [place2] {$q$};
\node (q) at ( 4.1,-1.6) [place2] {$r$};

\node (d) at ( -0.7,3.4) [place2] {$d$};
\node (e) at ( -0.5,2.4) [place2] {$e$};
\node (f) at ( -0.3,0.4) [place2] {$f$};
\node (g) at ( -0.1,-0.6) [place2] {$g$};
\node (h) at ( 0.1,-1.6) [place2] {$h$};

\node (a) at ( -1.7,2.4) [place2] {$a$};
\node (b) at ( -1.4,1.4) [place2] {$b$};
\node (c) at ( -0.9,-2.6) [place2] {$c$};

\node (foo) at ( 0,-3.5) [place2] {};

\end{tikzpicture}
\caption{A mixed unit interval representation of $G$.}
\end{subfigure}

\begin{subfigure}[b]{\textwidth}
\centering

\begin{tikzpicture}[scale=0.45, thick,place2/.style={thick,
inner sep=0pt,minimum size=1mm}, place/.style={circle,draw=black,fill=black,thick,
inner sep=0pt,minimum size=1.5mm}]

\node (a) at (0,12) [align=left] {}; 

\draw[line width=0.065cm] (-4,3) rectangle (-4+3,3+3);
\draw (-4,3+1.5) -- (-4+3,3+1.5);
\draw (-4+1.5,3) -- (-4+1.5,3+3);

\draw[{[-]}] (-3.75,5.25) -- (-2.75,5.25);
\draw[{[-)}] (-3.75,3.75) -- (-2.75,3.75);
\draw[{(-]}] (-2.25,5.25) -- (-1.25,5.25);
\draw[{(-)}] (-2.25,3.75) -- (-1.25,3.75);

\drawBubble{0}{8}{a}{}{}{}
\drawBubble{0}{4}{b}{}{}{}
\drawBubble{0}{0}{c}{}{}{}

\drawBubble{2}{8}{d}{}{}{}
\drawBubble{2}{6}{e}{}{}{}
\drawBubble{2}{4}{f}{}{}{}
\drawBubble{2}{2}{g}{}{}{}
\drawBubble{2}{0}{h}{}{}{}

\drawBubble{4}{6}{i}{}{}{}

\drawBubble{6}{6}{j}{}{}{k_i}

\drawBubble{8}{6}{\ell_i}{m}{n}{}

\drawBubble{10}{6}{o}{}{}{p}
\drawBubble{10}{4}{q}{}{}{}
\drawBubble{10}{2}{r}{}{}{}

\drawBubble{12}{6}{s}{t}{}{}

\drawBubble{14}{6}{u}{}{}{}

\end{tikzpicture}
\caption{A \U-bubble model of $G$ on the right, types of bubble quadrants on the left.}
\end{subfigure}
\caption{Three different representations of a mixed unit interval graph $G$.} \label{f_mixed_bubble_example}
\label{class-C}
\end{figure}
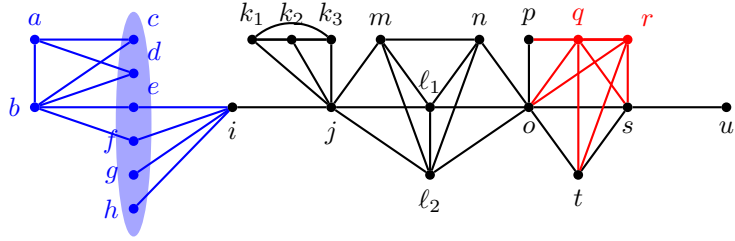
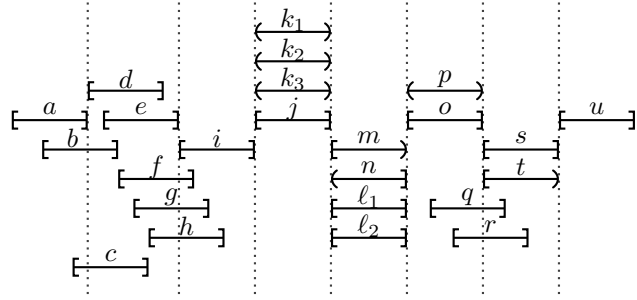
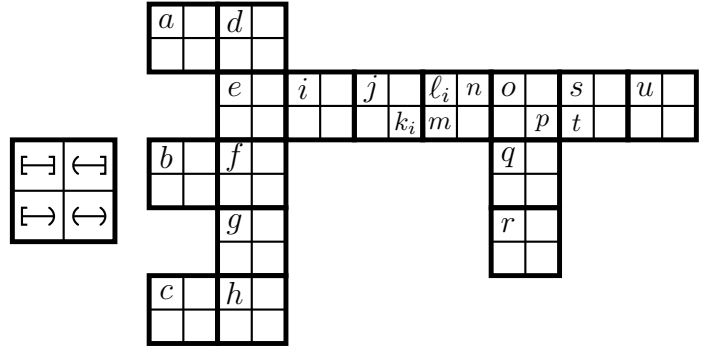

\FloatBarrier

\begin{definition}\label{d_bm_graph}
Let $\B = \langle B_{i,j}\rangle_{1\le j\le k, 1\le i\le r_j}$
be a 2-dimensional \mixed-bubble structure for $A$.
 The graph \emph{given by \B}, denoted as $G(\B)$, is defined as follows:
\begin{enumerate}
\item $V(G(\B)) = A$, %
\item $uv$ is an edge of $G(\B)$ if and only if there are indices $i,i',j,j'$ such that
  $u \in \Bij,$ $v \in B_{i',j'}$, or $v\in \Bij$, $u\in B_{i',j'}$, and one of the three conditions holds:
    \begin{enumerate}
    \item $j = j'$, or
    \item $j = j' - 1$ and $i > i'$, or %
    \item $j = j' - 1$ and $i = i'$ and $u\in \Bhc{i,j}, v\in\Bch{i',j'}$.%
    \end{enumerate}
\end{enumerate}
\end{definition}

The definition says that the edges are only between vertices from the same or consecutive columns and if $u\in\Bij$ and $v\in B_{i',j+1}$, there is an edge between $u$ and $v$ if and only if $u$ is lower than $v$ ($i > i'$), or they are in the same row and $u\in\Bhc{i,j},v\in\Bch{i',j+1}$.
\sv{
Vertices from the same column form a clique, as well as vertices from the same bubble. Vertices from the same bubble are almost-twins and their neighborhoods can differ only in the same row, anywhere else they behave like twins. Vertices from the same bubble quadrant are true twins.
}

\lv{
\begin{observation} \label{r-complete}
Vertices from the same column form a clique, as well as vertices from the same bubble. Moreover, vertices from the same bubble are almost-twins and their neighborhoods can differ only in the same row, anywhere else they behave like twins. Vertices from the same bubble quadrant are true twins.
\end{observation}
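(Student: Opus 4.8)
The plan is to verify all four assertions by a direct case analysis on Definition~\ref{d_bm_graph}, asking in each configuration which of the three edge conditions (a)--(c) can hold for the relevant pair of bubbles. I do not expect a genuine obstacle here: the whole argument is a short bookkeeping exercise, and the closest thing to a pitfall is keeping straight the asymmetric shape of condition~(b) --- it is the bubble in the \emph{smaller} column that is required to lie \emph{lower} --- together with which endpoints are tested in condition~(c).

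First I would dispose of the two clique statements at once: if $u$ and $v$ lie in bubbles of a common column $j$ (in particular, if they lie in the same bubble \Bij), then condition~(a), ``$j=j'$'', holds regardless of their rows, so $uv\in E(G(\B))$. Hence every column, and therefore every bubble, induces a clique.

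Next I would treat two vertices $u,v\in\Bij$ of the same bubble and show that $uw\in E\iff vw\in E$ for every $w\in B_{i',j'}$ with $i'\neq i$. There is nothing to prove when $|j-j'|\ge 2$, since an edge requires consecutive or equal columns; if $j'=j$, both edges hold by~(a); and if $j'=j\pm 1$, condition~(c) may be ignored for both pairs, because it produces edges only between bubbles of the \emph{same} row while $i'\neq i$, so adjacency to $w$ is governed by~(b) alone, whose hypothesis depends only on the rows $i$ and $i'$ --- and $u,v$ share the row $i$. Hence $N(u)$ and $N(v)$ coincide on every vertex outside row $i$, which is precisely the assertion that vertices of one bubble ``are almost-twins'' whose neighbourhoods ``can differ only in the same row''. (If one prefers to read ``almost-twins'' literally, as equality of the left endpoints of the assigned intervals, that follows from the construction behind the forward direction of Theorem~\ref{thm_bubble}, which may be arranged so that the left endpoint of a vertex's interval is determined by its bubble alone.)

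Finally I would upgrade this to $N[u]=N[v]$ when $u,v$ lie in the same quadrant $B_{i,j}^{r,s}$. We already have $uv\in E$, and by the previous step $u$ and $v$ have the same neighbours outside row $i$, so it remains to compare adjacency to a vertex $w\in B_{i',j'}$ with $i'=i$. For $j'=j$ both edges hold by~(a). For $j'=j+1$, the only possible edge comes from condition~(c), which here requires $u\in B_{i,j}^{*+}$ and $w\in B_{i,j+1}^{+*}$; since $u$ and $v$ have the same type, $u\in B_{i,j}^{*+}\iff v\in B_{i,j}^{*+}$, whence $uw\in E\iff vw\in E$. The case $j'=j-1$ is the mirror image, using that $u$ and $v$ have the same type and hence agree on membership in $B_{i,j}^{+*}$. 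Together with $uv\in E$ this yields $N[u]=N[v]$, i.e.\ $u$ and $v$ are true twins, completing the proof.
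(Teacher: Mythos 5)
Your proof is correct and is exactly the intended justification: the paper states this as an Observation with no proof because it follows by unfolding Definition~\ref{d_bm_graph}, which is precisely the case analysis you carry out (including the two points that actually need care, namely that condition~(b) depends only on rows once the roles of the columns are fixed, and that condition~(c) is the sole source of same-row cross-column edges and is determined by quadrant type). Nothing to add.
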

}

\begin{definition}\label{d_u_bubblemodel}
Let $G=(V,E)$ be a graph. A \emph{\U-bubble model for a graph} $G$ is 
a {2-dimensional \mixed-bubble structure $\B = \langle B_{i,j}\rangle_{1\le j\le k, 1\le i\le r_j}$ for $V$} such that
\begin{enumerate}[(i)] 
\item $G$ is isomorphic to $G(\B)$, and 
\item each column and each row contains a non-empty bubble, and
\item no column ends with an empty bubble, and%
\item $\btop{1}=1$, and for every $j\in\{1,\dots,k-1\}: \btop{j}\le\btop{j+1}.$
\end{enumerate} 
\end{definition}

For a \U-bubble model $\B = \langle B_{i,j}\rangle_{1\le j\le k, 1\le i\le r_j}$, 
by the \emph{number of rows} of $\B$ we mean $\max\{r_j\mid 1\le j\le k\}$. We define 
the \emph{size} of the \U-bubble model \B as the number of columns multiplied by the number of rows, i.e., 
$k\cdot \max\{r_j\mid 1\le j\le k\}.$

See Figure~\ref{f_mixed_bubble_example} with an example of a mixed unit interval graph, given by a mixed unit interval representation, and by a \U-bubble model.

\newcommand{\curr}{\ensuremath{\mathsf{curr}}\xspace}
\newcommand{\Top}[1]{\ensuremath{C^\mathsf{{top}}_{#1}}\xspace}
\newcommand{\level}{\ensuremath{\mathsf{L}}\xspace}
\newcommand{\nextP}{\ensuremath{{next^{P}}}}
\newcommand{\bubbleset}{\ensuremath{\mathscr{B}}}
\lv{\subsection{Construction of \U-bubble model}}
First, we construct a mixed unit interval representation \I of a graph $G$ using the quadratic-time algorithm \sv{by~\cite{TalonK18}}\lv{see Corollary~\ref{claim_mixed_repre_time}}; 
then each vertex of $G$ is represented by a corresponding interval in \I. Having a mixed unit interval representation of the graph, our algorithm outputs a \U-bubble model for the graph in $\Oh{n}$ time.

  Given a mixed unit interval representation \I, we put all intervals (vertices) that are almost-twins in \I into a single bubble, to the particular quadrant which corresponds by its type to the type of the interval.
  From now on, we speak about bubbles only, we denote the set of all such bubbles by \bubbleset. We are going to determine their place (row and column) to create a 2-dimensional \U-bubble structure for \bubbleset. We show that the \U-bubble structure is a \U-bubble model for our graph.
Based on the order~$\sigma$ by endpoints of intervals in the representation \I from left to right, we obtain the same order on bubbles in \bubbleset.
The idea of the algorithm is to process the bubbles in the order~$\sigma$, and assign to each bubble
  its column immediately after processing it. 
  During the processing, the algorithm maintains an auxiliary path in order to assign rows at the end.
  Thus, rows are assigned to each bubble after all bubbles are processed.

For bubbles $A, B\in \bubbleset$, $A<_\sigma B$ denotes that $A$ is smaller than $B$ in order $\sigma.$ 
We denote the order of bubbles by subscripts, i.e., $B_1<_\sigma B_2<_\sigma \ldots$ are all bubbles in the described order $\sigma$. 
  For technical reasons, we create two new bubbles: $B_{start}$, $B_{end}$ such that $\li{B_{start}}=\ri{B_{start}}=-\infty$. We refer to them as \emph{auxiliary bubbles}, in particular, if we speak about bubbles, we exclude auxiliary bubbles. 
 We enhance the representation in a way that each bubble $B\in\bubbleset$ has a pointer 
 $prev:\bubbleset\to \bubbleset\cup\{B_{start}\}$ defined as follows.
  \sv{\vspace{-5pt}}
  \begin{align*}
  prev(B)=
      \begin{cases}
        B_{start} & \text{if } \li{B}<\ri{B_1},\\
        A \text{~~~~\sv{s.\,t. }\lv{such that} } \li{B}=\ri{A} & \text{if such a bubble $A$ exists},\\
        B_j \text{~~~\sv{s.\,t. }\lv{such that} } j=\max_i \bigl\{i \mid \li{B}>\ri{B_i}\bigr\} & \text{otherwise}.
    \end{cases}
   \end{align*}
  \sv{\vspace{-12pt}}

In order to set rows at the end, the algorithm is creating a single oriented path $P$ that has the necessary information about the height of elements in the \U-bubble structure being constructed. 
Some of the arcs of the path can be marked with level indicator (\level).
For ease of notation, we use $\nextP(B_i)=B_j$ to say that $B_j$ is the next element on path $P$ after $B_i$. Note that we can view $P$ as an order of bubbles; we denote by $A<_P B$, $A,B\in \bubbleset,$  the information that $A$ occurs earlier than $B$ on $P$.
Also from technical reasons,  $P$ starts and ends with $B_{start}$ and $B_{end}$, respectively.
Except $P$ and pointers $prev$ and $\nextP$, the algorithm remembers the highest bubble of column $i$, denoted by $\Top{i}$. Also, denote by \curr, the index of the currently processed column. 

Now, we are able to state the algorithm for assigning columns and rows to bubbles in \bubbleset ~and its properties  which will be useful for showing the correctness.

\vspace{3pt}
\begin{compactenum}[\bf Property 1:]
\item  Bubbles are processed (and therefore added somewhere to $P$) one by one respecting the order $\sigma$. 
    \label{bm_constr_prop_oneByOne}
\item  The order induced by $P$ of already processed vertices never changes, i.e., once $A\le_P B$ then $A\le_P B$ for the rest of the algorithm.
    \label{bm_constr_prop_relativePOrder}
 \item The arc of $P$ between bubbles $A$ and $B$ has the level indicator (\level) if and only if $\ri{A}=\li{B}$. Moreover, if the arc from $A$ to $B$ has level indicator, then $\col{A}<\col{B}$. 
    \label{bm_constr_prop_levelInd}
  \item $\col{B_i}\le\col{B_j}$ whenever $i\le j$.
    \label{bm_constr_prop_columns}
  \item $prev(B)$ is the closest ancestor of $B$ on $P$ 
  in the previous column, i.e., $prev(B)=\max\{A \mid A\le_P B, \col{A}=\col{B}-1\}$. 
    \label{bm_constr_prop_prev}
  \item The order induced by $P$ of vertices in the same column is exactly the order of those vertices induced by $\sigma$.
    \label{bm_constr_prop_PSigmaColumn}
\end{compactenum}

\lv{\subsection{Algorithm}}
\sv{\vspace{3pt}\noindent{\bf Algorithm.~~}}
Given bubbles  $B_1, B_2,\dots$ in \bubbleset~  ordered by $\sigma$, the algorithm creates $P$ by processing bubbles one by one in order $\sigma$. The algorithm outputs a row and a column to each bubble. 
Initially, set $\col{B_1}=1$, $P=\{B_{start},B_1,B_{end}\}$, \curr=1 and $\Top{1}=B_1$.
\lv{%

}%
Suppose that $i-1$ bubbles have been already processed, for $i\geq 2$. 
Split the cases of processing bubble $B_{i}$ based on the following possibilities:
\begin{enumerate}[i.]
  \item ${\li{B_i}>\ri{\Top{\curr}}}$: First increase \curr by one, then set $\col{B_i}=\curr$ and $\Top{\curr}=B_i$.

\item ${\li{B_i}=\ri{\Top{\curr}}}$:
  First increase \curr by one, then set $\col{B_i}=\curr$ and $\Top{\curr}=B_i$.
  Let $Q$ be $\nextP(\Top{\curr-1})$.
  Substitute arc in $P$ from $\Top{\curr-1}$ to $Q$ with two new arcs $\Top{\curr-1}$ to $B_i$ that has \level indicator set and from $B_i$ to $Q$.

\item ${\li{B_i}<\ri{\Top{\curr}}}$:
  Set $\col{B_i}=\curr$.
  \end{enumerate}

  We continue only with cases i. and iii. %
  and distinguish multiple possibilities:
  \begin{enumerate}[1.]
    \item ${\ri{prev(B_i)}=\li{B_i}}$:
      Let $Q$ be $\nextP(prev(B_i))$.
      Then substitute arc  in $P$ from $prev(B_i)$ to $Q$ with two new arcs $prev(B_i)$ to $B_i$ that has \level indicator set and from $B_i$ to $Q$. %
    \item ${\ri{prev(B_i)}<\li{B_i}}$:
      And split this case further based on the properties of $B_{i-1}$.
    \begin{enumerate}[2a.]
        \item ${prev(B_{i-1})=prev(B_{i})}$:
          Let $Q$ be $\nextP(B_{i-1})$.
          Substitute arc in $P$  from $B_{i-1}$ to $Q$ with two new arcs $B_{i-1}$ to $B_i$ and from $B_i$ to $Q$. %
        \item ${prev(B_{i-1})\neq prev(B_{i})}$:
          Let $Q$ be $\nextP(prev(B_i))$.
          Then substitute arc  in $P$ from $prev(B_i)$ to $Q$ with two new arcs $prev(B_i)$ to $B_i$ and from $B_i$ to $Q$.
      \end{enumerate}
  \end{enumerate}

Now, assign rows to bubbles by a single run over $P$, inductively:  
Take the first bubble $B$ of $P$ and assign $\row{B}\df 1$. 
Let $B$ be the last bubble on $P$ with already set row index. We are about to determine $\row{\nextP(B)}$.
If arc  in $P$ from $B$ to $\nextP(B)$ has \level indicator, set  $\row{\nextP(B)}\df\row{B}$, otherwise $\row{\nextP(B)}\df\row{B}+1$.

\toappendix{
  \subsection{Correctness}\label{sub:correctness}
Here, we show that the algorithm above gives us a \U-bubble model for a graph given by mixed unit interval representation. It gives us the forward implication of Theorem~\ref{thm_bubble}.

\begin{lemma}\label{lem_lin_construction}
Given a mixed unit interval representation \I of a connected graph $G$ on $n$ vertices, the \U-bubble model can be constructed in $\Oh{n}$ time.
\end{lemma}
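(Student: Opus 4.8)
The plan is to establish correctness of the construction algorithm by verifying that the six stated Properties hold (by induction on the number of processed bubbles), and then use them to show that the output 2-dimensional \U-bubble structure $\B$ satisfies both the graph-isomorphism condition $G\cong G(\B)$ and the normalization conditions (i)--(iv) of Definition~\ref{d_u_bubblemodel}; the linear running time will follow by a straightforward accounting of the work per bubble. Throughout, I would use the combinatorial fact (cited from~\cite{TalonK18}, Corollary~\ref{claim_mixed_repre_time}) that the initial representation $\I$ can be built in $\Oh{n^2}$ time (this does not affect the $\Oh{n}$ claim here, since we are \emph{given} $\I$), and I would treat vertices grouped into almost-twin bubbles as atomic from then on. Note the statement restricts to \emph{connected} $G$, which ensures the columns form a contiguous block under $\sigma$ and that $prev$ never ``skips'' a column by more than making $B_{start}$ the predecessor only at the very start; I would record this reduction (the general case being handled componentwise) at the outset.

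\textbf{Verifying the Properties.} First I would prove Properties~1--6 simultaneously by induction on $i$, the index of the bubble being processed. Property~1 is immediate from the description. Property~2 holds because every modification of $P$ only \emph{splices} a new bubble into an existing arc, never reorders existing ones. For Property~3 (an arc carries the level indicator $\level$ iff its two endpoints $A,B$ satisfy $\ri{A}=\li{B}$, and then $\col{A}<\col{B}$), I would check this case by case against the algorithm: the indicator is set exactly in sub-cases ii, 1, and the ``$\li{B_i}=\ri{\Top{\curr}}$'' branch, each of which is entered precisely when a right endpoint equals $\li{B_i}$, and in each such case \curr~has just been incremented or $prev(B_i)$ lies in the previous column, giving the column inequality; I also need that splicing $B_i$ into a \emph{non-level} arc leaves it non-level, which is clear since $\li{B_i}$ differs from both neighboring endpoints there. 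Property~4 ($\col$ is monotone along $\sigma$) follows because \curr~is only ever incremented, and Property~5 (characterization of $prev(B)$ as the $P$-largest ancestor in the previous column) is the crux: I would argue that because bubbles are processed in $\sigma$-order and because of how each case splices $B_i$ immediately after either $prev(B_i)$ or $B_{i-1}$ (and $prev(B_{i-1})=prev(B_i)$ in case 2a), the element of the previous column closest to $B_i$ on $P$ is exactly the bubble whose right endpoint realizes $\li{B_i}$ or the last one strictly to its left --- i.e.\ $prev(B_i)$ by definition. Property~6 then follows by combining Properties~2, 4, and~5. I expect \textbf{Property~5 to be the main obstacle}, since it requires carefully tracking, across all the splicing cases, that no later-inserted bubble of the previous column ever lands between $prev(B_i)$ and $B_i$ on $P$; the case split on $prev(B_{i-1})$ versus $prev(B_i)$ in step~2 exists precisely to handle this, and the induction hypothesis (Property~6 for smaller indices) is what makes it go through.

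\textbf{From the Properties to a \U-bubble model.} Given the Properties, I would first check the row assignment is well-defined and yields the claimed structure: walking along $P$ and incrementing the row exactly at non-level arcs gives, by Property~3, that two bubbles share a row iff they are joined by a maximal chain of level arcs, hence (by Property~3 again) lie in strictly increasing columns with matching endpoints --- this is exactly the ``same row'' adjacency pattern that Definition~\ref{d_bm_graph}(c) encodes via $\Bhc{}$/$\Bch{}$. Then I would verify conditions (i)--(iv): (ii) and (iii) hold because every bubble in $\bubbleset$ is non-empty by construction (empty quadrants are fine, but a bubble with something in it is placed in every column/row we create) and the auxiliary bubbles carry nothing; (iv) holds because $\btop{1}=1$ is set initially and $\btop{j}\le\btop{j+1}$ follows from Property~6 together with the monotonicity built into the row-assignment pass. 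The heart is (i), $G\cong G(\B)$: I would show that for vertices $u\in A$, $v\in B$ with $A,B\in\bubbleset$, the interval condition ($I_u\cap I_v\neq\emptyset$ in $\I$) matches the combinatorial condition (2) of Definition~\ref{d_bm_graph}. Since all intervals have unit length, $I_u\cap I_v\neq\emptyset$ iff $|\li{A}-\li{B}|<1$ or the endpoints touch with compatible types; translating via Properties~4--6, $|\li{A}-\li{B}|<1$ corresponds to $|\col{A}-\col{B}|\le 1$ with the ``lower row in the earlier column'' orientation of~(b), and the boundary-touching case corresponds exactly to~(c) (here the type bookkeeping --- open vs.\ closed ends, encoded in which quadrant a vertex sits --- is what makes the iff precise, and where the definition of $prev$ via $\ri{A}=\li{B}$ feeds in). I would do this as two inclusions of edge sets.

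\textbf{Running time.} Finally, for the $\Oh{n}$ bound: computing $\sigma$ and the $prev$ pointers from a sorted endpoint list is $\Oh{n}$; each of the (at most $n$) bubbles is processed in $\Oh{1}$ time, since every case performs a constant number of pointer updates and arc splices on $P$ (maintaining $\Top{\cdot}$ and \curr~as we go costs $\Oh{1}$ each); and the final row-assignment pass is a single traversal of $P$, which has $\Oh{n}$ elements. Summing gives $\Oh{n}$, completing the proof of Lemma~\ref{lem_lin_construction} and hence the forward direction of Theorem~\ref{thm_bubble}.
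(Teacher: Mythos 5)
Your proposal follows essentially the same route as the paper's proof: verify Properties 1--6 from the construction (with Property 5 indeed being the one the paper singles out via the observation that $\Top{\col{B_i}-1}=prev(B_i)$ in step ii), then establish conditions (i)--(iv) of Definition~\ref{d_u_bubblemodel} with the isomorphism handled by a case analysis on whether the two bubbles lie in nonconsecutive, equal, or consecutive columns (the last split on $prev(B_j)$ versus $B_i$), and finally the $\Oh{n}$ accounting by constant work per bubble plus single passes for $prev$ and the row assignment. The only small slip is that condition (iv) rests on Property 5 (and the fact that $B<_P B'$ implies $\row{B}\le\row{B'}$) rather than Property 6, but this does not change the argument.
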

\begin{proof}[Proof of Lemma~\ref{lem_lin_construction}]
We show the correctness of the construction, i.e., that the constructed object satisfies Definition~\ref{d_u_bubblemodel} and that declared Properties 1-6 are satisfied during the whole algorithm.
 It follows immediately from the construction that Properties \ref{bm_constr_prop_oneByOne}--\ref{bm_constr_prop_columns} are satisfied.
Observe that $prev(B)$ is always in the previous column than $B$, for $B\in\bubbleset$. Moreover, observe that in step ii.~of the algorithm, $\Top{\col{B_i}-1}=prev(B_i).$ Then, Property~\ref{bm_constr_prop_prev} follows from the construction. 
Property~\ref{bm_constr_prop_PSigmaColumn} can be seen by examining the construction. Let $A,B$ be two bubbles in the same column such that $A <_\sigma B$.
Either $prev(A) = prev(B)$, then $B$ is put later than $A$ on $P$. Or $prev(A)<_\sigma prev(B)$, then, by the construction, $prev(B)$ is put after $A$ and $B$ is put after $prev(B)$. In both cases, $A<_P B$. Using Property~\ref{bm_constr_prop_relativePOrder}, the Property~\ref{bm_constr_prop_PSigmaColumn} holds.

Let \B be the \U-bubble structure for \bubbleset~output by the construction above and $G(\B)$ be a graph given by \B. We show that $\B$ is a \U-bubble model for $G$.
Parts (ii), (iii) from Definition~\ref{d_bm_graph} are clearly satisfied.
It remains to show (i) and (iv).

Let us start with (i), that is $G(\B)$ is isomorphic to $G$.

Let $u\in B_i$, $v\in B_j$. Recall that $\li{u}=\li{v}$ if and only if $B_i=B_j$. Since this case is trivially satisfied, without loss of generality, we assume $B_i<_\sigma B_j$.
We distinguish a few cases based on the position of $B_i$ and $B_j$ in \B.

First, let $B_i$ and $B_j$ be in nonconsecutive columns in \B. Denote by $c=\col{B_i}$. By the definition, $u$ and $v$ are nonadjacent in $G(\B)$. 
By the construction, there exists a nonempty bubble $\Top{c+1}$ in \B such that it is the top bubble of column $c+1$. 
It follows that $\Top{c+1}>_\sigma B_i$, by Property~\ref{bm_constr_prop_columns}, and also $\Top{c+1}\neq B_i$.
Since the construction assigns $B_j$ to a different column than $\Top{c+1}$, we know that $\ri{\Top{c+1}}\le\li{B_j}$. It gives immediate conclusion that $u,v$ are not adjacent in $G$.

Second, let $B_i$ and $B_j$ be in the same column $c$ in \B. 
Vertices $u,v$ are adjacent by the definition in $G(\B)$. 
By the construction, there exists a nonempty bubble $\Top{c}$ in \B such that it is the top bubble of the same column and 
$\li{\Top{c}}<\li{B_i}<\li{B_j}<\ri{\Top{c}}=1+\li{\Top{c}}$. 
Therefore, $u,v$ are adjacent in $G$.

Third, let $B_i$ and $B_j$ appear in consecutive columns in \B. 
We denote $c=\col{B_i}=\col{B_j}-1$.
By the definition, vertices $u,v$ are adjacent in $G(\B)$ if and only if either $\row{B_i}>\row{B_j}$, or $\row{B_i}=\row{B_j}$ and $u\in B_i^{*+}, v\in B_j^{+*}$.
By Properties \ref{bm_constr_prop_oneByOne} and \ref{bm_constr_prop_relativePOrder} of $P$, it is sufficient to verify only the situation when bubble $B_j$ was added.
Observe that if $B<_P B'$ then $\row{B}\le\row{B'}$. We split the case into the following all possibilities:
\begin{enumerate}[$\bullet$]
\item $prev(B_j)>_\sigma B_i$: By the definition of $prev$ and the interval property, $u$ is non-adjacent to $v$.
 By Property~\ref{bm_constr_prop_prev}, $prev(B_j)<_P B_j$. By Property~\ref{bm_constr_prop_PSigmaColumn}, $B_i<_P prev(B_j)$. Since $\ri{B_i}\neq\li{B_j}$, by Property~\ref{bm_constr_prop_levelInd}, $\row{B_i}<\row{B_j}$.
\item $prev(B_j)= B_i$: By Properties~\ref{bm_constr_prop_levelInd}, \ref{bm_constr_prop_prev} and the rows assignment, $\row{B_j}=\row{B_i}$ if and only if $\li{B_j}=\ri{B_i}$.
        Therefore, there is an edge in both models if and only if  $u$ and $v$ are of correct type; that is $u$ has type $(*,+)$ and $v$ has type $(+,*)$.
\item $prev(B_j)<_\sigma B_i$: By the definition of $prev$ and the interval property, $u$ is adjacent to $v$ and $\ri{B_i}>\li{B_j}$. By Property~\ref{bm_constr_prop_PSigmaColumn}, $prev(B_j)<_P B_i$. By Property~\ref{bm_constr_prop_prev} and the rows assignment, $\row{B_i}\ge\row{B_j}$. By Property~\ref{bm_constr_prop_levelInd}, the equality cannot occur. Therefore, $\row{B_i}>\row{B_j}$.
\end{enumerate}

Part (iv) follows by the construction of $P$.
When $B=\Top{j},j\ge 2$ is added on $P$, by Property~\ref{bm_constr_prop_prev}, $prev(B)<_P B$. Note that $\Top{j-1} \le_P prev(B)$. We obtain $\row{\Top{j-1}}\le\row{\Top{j}}$ for every possible $j$. 
Also note that $\row{B_1}=\row{\Top{1}}=1$.

It remains to show the running-time of the algorithm. 
Note that $prev$ can be easily computed by a single run over the representation, as well as the assigning columns can be done simultaneously by a single run over the representation (having $prev$ and remembering top bubbles of columns). Moreover, rows of the vertices are assigned by a single run over path $P$ which leads to overall running time $\Oh{n}$ where $n$ is the number of intervals of the given mixed unit interval representation.
\end{proof}

}

\lv{
\subsection{Proof of Theorem \ref{thm_bubble}}

}
\sv{
  \medskip
}%
\begin{proof}[Proof of Theorem~\ref{thm_bubble}]
\sv{The forward implication follows from the above algorithm. Its correctness is proved in the full version of the paper.
Second}%
\lv{First}, we prove the reverse implication: given a \U-bubble model for a graph $G$, we construct a mixed unit interval representation of $G$.
Let $\B = \langle B_{i,j}\rangle_{1\le j\le k, 1\le i\le r_j}$ be a \U-bubble model of $G$. Let
\lv{$$\varepsilon\df\frac{1}{\max{\{r_j\mid 1\le j \le k\}}}.$$}
\sv{$\varepsilon\df{{1}/({\max{\{r_j\mid 1\le j \le k\}}}}).$}
We create a mixed unit interval representation $\I$ of $G$ as follows.
Let $v\in B_{i,j}^{r,s}$, where $r,s\in\{+,-\}$. The corresponding interval $I_v$ of $v$ has the properties: 
\lv{$$I_v\in\I^{r,s}\text{ and }\li{I_v}\df j+(i-1)\varepsilon.$$}
\sv{$I_v\in\I^{r,s}\text{ and }\li{I_v}\df j+(i-1)\varepsilon.$}
Note that all vertices from the same bubble are represented by intervals that are almost twins \lv{(they have the same left ends) }and the type of an interval corresponds with the type of the bubble quadrant.
Since $\varepsilon$ was chosen such that $\varepsilon(i-1) < 1$ for any row $i$ in $\B$,
the graph given by the constructed mixed unit interval representation is isomorphic to the graph given by \B.
\lv{%

}%
\lv{The forward implication follows from Lemma~\ref{lem_lin_construction}.}%
\end{proof}

\toappendix{
\subsection{Properties of \U-bubble model}\label{s_bubble_prop}

In this section, we give basic properties of a \U-bubble model which are used later in the text. It is readily seen that a \U-bubble model of graph $G=(V,E)$ has at most $n$ rows and $n$ columns where $n$ is the number of vertices of $G$ since each column and each row contains at least one vertex. Consequently, the size of a \U-bubble model is at most $n^2.$

Two basic characterizations of a graph are the size of a maximum clique and the size of a maximum independent set in the graph. The problem of finding those numbers is \NP-complete in general but it is polynomial-time solvable in interval graphs.
 We show a relation between those two numbers and the size of a \U-bubble model for the graph. We start with the size of a maximum independent set.

\begin{lemma}\label{l_bubble_mis_columns}
Let $G$ be a mixed unit interval graph, and let \B be a \mixed-bubble model for $G$. The number of columns of \B is at least \Mis{G} and at most $2\Mis{G}$.
\end{lemma}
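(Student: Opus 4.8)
The plan is to prove the two inequalities separately; each reduces to a one-line structural observation about $\U$-bubble models, so I do not expect a genuine obstacle. Throughout, let $\B = \langle B_{i,j}\rangle_{1\le j\le k, 1\le i\le r_j}$ be the given $\U$-bubble model of $G$, so that $k$ is its number of columns.

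For the lower bound $k\ge\Mis{G}$, I would recall that by Definition~\ref{d_bm_graph} (condition~(a), also recorded in Observation~\ref{r-complete}) any two vertices lying in the same column of $\B$ are adjacent in $G(\B)\cong G$; that is, the vertices appearing in a fixed column induce a clique. Hence every independent set of $G$ contains at most one vertex from each of the $k$ columns, which gives $\Mis{G}\le k$ immediately.

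For the upper bound $k\le 2\Mis{G}$ I would exhibit an independent set of size $\lceil k/2\rceil$. By Definition~\ref{d_u_bubblemodel}(ii) every column of $\B$ contains a non-empty bubble, so I can pick one vertex $v_j$ from each odd-indexed column $j\in\{1,3,5,\dots\}$; there are $\lceil k/2\rceil$ such columns. By Definition~\ref{d_bm_graph}, an edge of $G$ can only join vertices whose column indices are equal or differ by exactly one, whereas any two distinct odd indices differ by at least two; therefore $\{v_j : j \text{ odd}\}$ is an independent set, and $\Mis{G}\ge\lceil k/2\rceil\ge k/2$, i.e.\ $k\le 2\Mis{G}$.

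Combining the two bounds yields $\Mis{G}\le k\le 2\Mis{G}$, as claimed. The only points needing a moment's care are that each chosen (odd) column is indeed non-empty, which is precisely condition~(ii) of the definition of a $\U$-bubble model, and that non-consecutive columns carry no edges, which is exactly what Definition~\ref{d_bm_graph} asserts; both are immediate, so no step here is an obstacle.
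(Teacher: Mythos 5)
Your proof is correct and follows essentially the same route as the paper's: the lower bound $k\ge\Mis{G}$ comes from each column inducing a clique, and the upper bound $k\le 2\Mis{G}$ comes from selecting one vertex per odd-indexed column, using that non-consecutive columns carry no edges. The extra care you take about non-emptiness of columns is implicit in the paper's argument but harmless to make explicit.
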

\begin{proof}
Let $I$ be a maximum independent set of $G$, and let $k$ be the number of columns of \B. We have that $\Mis{G} \ge \lceil k/2 \rceil$ from the property that two non-consecutive columns from \B are not adjacent in \GB.
Since each column forms a clique, only one vertex from each column can be in $I$. Therefore, $\Mis{G} \le k.$
\end{proof}

In the bubble model for unit interval graphs, $\Mis{G}$ is equal to the number of columns \cite{HeggernesMP09}.
However, the gap in Lemma~\ref{l_bubble_mis_columns} cannot be narrowed in general---consider an even number $k$ and the following unit interval graphs: path on $k$-vertices ($P_k$) and a clique on $k$ vertices ($K_k$). There exists a unit interval representation of $P_k$ using only closed intervals which leads to a \mixed-bubble model of $P_k$ containing one row and $k$ columns, where $\Mis{P_k}=\lceil k/2 \rceil$. A \mixed-bubble model of $K_k$ contains $k$ rows and one column, where $\Mis{K_k}=1=\text{number of columns}$. 
 
Another important and useful property of graphs is the size of a maximum clique. We show that a maximum clique of a mixed unit interval graph can be found in two consecutive columns of a \mixed-bubble model of the graph, see Figure~\ref{f_max_clique}. 

\if\povolenyobrazky1
\begin{figure}[bt]
        \begin{center}
        \includegraphics{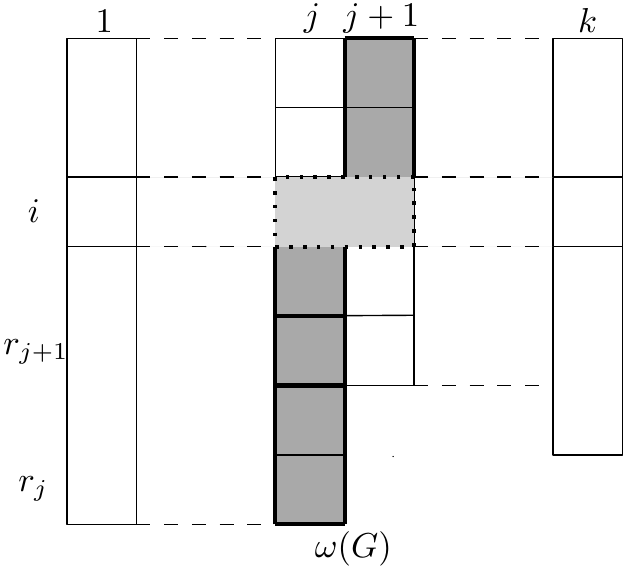}
        \end{center}
        \caption{A maximum clique of $G$ in a \U-bubble model. Dark grey color~represents the bubbles that are fully contained in the clique. Light grey color highlights two bubbles where only parts of them are contained in the clique, concretely the one of the sets $B_{i,j}, B_{i,j+1}$, and $B_{i,j}^{*+}\cup B_{i,j+1}^{+*}$ with the maximum size.}\label{f_max_clique}
\end{figure}
\fi

\begin{lemma}\label{l_bubble_clique_columns}
Let $G$ be a mixed unit interval graph, and let \B be a \mixed-bubble model for $G$. Then
the size of a  maximum clique is

\begin{align*}%
\Mclique{G} = \max_{\substack{j\in\{1, \dotsc, k-1\} \\i\in\{1,\dotsc,r_{j+1}\}}}
&{\left( 
\sum_{i'=i+1}^{r_j}|B_{i',j}| + \sum_{i'=1}^{i-1}|B_{i',j+1}| + a_i\right)
},\\
 &a_i = 
 \begin{cases}
 \max{\left\{
 |B_{i,j}|,
 |B_{i,j+1}|, 
 |B_{i,j}^{*+}| + |B_{i,j+1}^{+*}|
 \right\}} & i\le r_j,\\
 |B_{i,j+1}| & \text{otherwise.} 
\end{cases}
\end{align*}
\end{lemma}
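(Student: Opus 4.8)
The plan is to show that any maximum clique $K$ of $G$ lives inside (at most) two consecutive columns, and then within two consecutive columns $j, j+1$ to read off the largest possible clique directly from the bubble structure. The first claim follows from Definition~\ref{d_bm_graph}: there are no edges between non-consecutive columns, so a clique cannot meet columns $j$ and $j'$ with $|j-j'|\ge 2$; hence a maximum clique is contained in $B_{*,j}\cup B_{*,j+1}$ for some $j\in\{1,\dots,k-1\}$ (a clique inside a single column is a degenerate case handled by taking the whole column, which the formula also covers by letting $a_i$ pick a full bubble). Next I would analyze the structure of a clique $K\subseteq B_{*,j}\cup B_{*,j+1}$. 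Write $K_j = K\cap B_{*,j}$ and $K_{j+1}=K\cap B_{*,j+1}$. By Observation~\ref{r-complete} vertices in the same column are always pairwise adjacent, so $K_j$ and $K_{j+1}$ can each be taken as large as we like within the rows they use; the only real constraint is the cross-column adjacency rule (b)--(c) of Definition~\ref{d_bm_graph}: a vertex $u\in B_{i,j}$ and $v\in B_{i',j+1}$ are adjacent iff $i>i'$, or $i=i'$ and $u\in\Bhc{i,j}$, $v\in\Bch{i',j+1}$.

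From this I would derive the shape of an optimal clique. Suppose $K$ uses some bubbles of column $j$ and some of column $j+1$. If $K_j$ meets a bubble in row $i_1$ (of column $j$) and $K_{j+1}$ meets a bubble in row $i_2$ (of column $j+1$), cross-adjacency forces $i_1 > i_2$, or $i_1=i_2$ with the quadrant condition. Consequently the rows of column $j$ used by $K$ all lie strictly below the rows of column $j+1$ used by $K$, except possibly for one common ``pivot'' row $i$ where the clique may straddle both columns, but then only the quadrants $\Bhc{i,j}$ on the left and $\Bch{i,j+1}$ on the right may contribute simultaneously. This gives exactly the decomposition in the statement: for the pivot row $i$, the clique takes from below row $i$ in column $j$ every vertex (the sum $\sum_{i'=i+1}^{r_j}|B_{i',j}|$, all mutually adjacent and adjacent to everything chosen in row $\le i$ of column $j+1$), from above row $i$ in column $j+1$ every vertex (the sum $\sum_{i'=1}^{i-1}|B_{i',j+1}|$), and in row $i$ itself the best of: all of $B_{i,j}$, all of $B_{i,j+1}$, or the split $\Bhc{i,j}$ together with $\Bch{i,j+1}$ — this is precisely $a_i$. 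I would also need to check that these three groups are mutually adjacent: a below-$i$ vertex of column $j$ is adjacent to an above-$i$ vertex of column $j+1$ because their row indices differ with the column-$j$ one larger; both are adjacent to whatever sits in row $i$ by the same row-comparison; and the row-$i$ choice is internally a clique by construction of $a_i$ (the split option uses exactly the quadrant rule). The case $i > r_j$, where column $j$ has no bubble in row $i$, is the degenerate branch $a_i=|B_{i,j+1}|$.

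Finally I would argue optimality of the formula, i.e.\ that no clique can beat $\max_{j,i}(\cdots)$. Given an arbitrary clique $K$ in columns $j,j+1$, let $i$ be the smallest row index of a bubble of column $j+1$ hit by $K$ if $K_{j+1}\neq\emptyset$, and otherwise let $i=r_j$ (or any row meeting $K_j$); one checks that every vertex of $K$ from column $j$ sits in a row $\ge i$ and every vertex from column $j+1$ sits in a row $\le i$, with the row-$i$ part respecting the quadrant restriction, so $|K|$ is bounded by the corresponding term. Taking the maximum over all valid $(j,i)$ and noting that each term in the maximum is itself realized by an actual clique (by the adjacency checks above) yields equality. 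The main obstacle is the careful bookkeeping at the pivot row: one must verify that the ``straddling'' choice $\Bhc{i,j}\cup\Bch{i,j+1}$ is genuinely a clique and is compatible with the full bubbles chosen below/above, and that no larger configuration — e.g.\ using two distinct rows shared between the columns — is possible, which is exactly where condition (c) of Definition~\ref{d_bm_graph} (edges in the same row only between the matching $*+$ and $+*$ quadrants) does the work.
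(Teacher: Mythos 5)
Your proposal is correct and follows essentially the same route as the paper's (much terser) proof: a maximum clique meets at most two consecutive columns, the cross-column adjacency rule forces a single pivot row $i$ with column $j$ contributing only rows below it and column $j+1$ only rows above it, and the three options in $a_i$ cover what can happen in the pivot row itself, each term being realized by an actual clique. One indexing slip in your optimality paragraph: the pivot should be the \emph{largest} row index of column $j+1$ hit by $K$ (equivalently, the smallest row index of column $j$ hit by $K$), not the smallest row index of column $j+1$ --- with your choice of $i$ the vertices of $K_{j+1}$ sit in rows $\ge i$, so the claimed bound by $\sum_{i'=1}^{i-1}|B_{i',j+1}|$ would not follow; with the largest index the argument goes through exactly as you describe.
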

\begin{proof}
  Let $K$ be a maximum clique of $G$. Notice, $K$ does not contain two vertices from nonconsecutive columns, as there are no edges between nonconsecutive columns. Furthermore,  vertices $u$ and $v$ from two consecutive columns $C_j$ and $C_{j+1}$, respectively,  can be in $K$ only if $u$ is under $v$ or they are in the same row in quadrants of types $\{*+\}$ and $\{+*\}$, respectively.  
   
  On the other hand, vertices from one column of $\B$ create a clique in $\GB$. Moreover, if we split any two consecutive columns $C_j$ and $C_{j+1}$ in row $i$ (for any index $i\in\{1,\ldots,\min{\{r_j,r_{j+1}\}}\}$), the second part of $C_j$ with the first part of $C_{j+1}$ form a clique. This is true even together with bubble quadrants $B_{i,j}^{*+} \cup B_{i,j+1}^{+*}$.   
\end{proof}
}

\section{Maximum cardinality cut}\label{s_maxcut}

\sv{
  \toappendix{\section{Additions to Section~\ref{s_maxcut}}\label{a_maxcut}}
}

This section is devoted to the time complexity of the \maxcut problem on (mixed) unit interval graphs. 
\lv{%

}%
\lv{\subsection{Notation}}
A \emph{cut} of a graph $G(V,E)$ is a partition of $V(G)$ into two subsets $S, \overline S$, where $\ol S=V(G)\setminus S$. Since $\ol S$ is the complement of $S$, we say for the brevity that a set $S$ is a cut and similarly we use terms \emph{cut vertex} and \emph{non-cut} vertex for a vertex $v\in S$ and $v\in\ol S$, respectively.
The \emph{cut-set} of cut $S$ is the set of edges of $G$ with exactly one endpoint in $S$, we denote it \cutset. Then, the value $|\cutset|$ is the \emph{cut size} of $S$. A \emph{maximum (cardinality) cut} on $G$ is a cut with the maximum size among all cuts on $G$. We denote the size of a maximum cut of $G$ by $\mcs{G}$. Finally, the \maxcut problem is the problem of determining the size of the maximum cut.

\subsection{Time complexity is still unknown on unit interval graphs}\label{s_example}

As it was mentioned in the introduction, there is a paper \emph{A polynomial-time
algorithm for the maximum cardinality cut problem in proper interval graphs} by Boyaci, Ekim, and Shalom from 2017 \cite{BoyaciES17}, claiming that the \maxcut problem is polynomial-time solvable in unit interval graphs and giving a dynamic programming algorithm based on the bubble model representation.
{We realized that the algorithm is incorrect; this section is devoted to it.}

We start with a counterexample to the original algorithm.
\begin{example}
Let $\B = \langle B_{i,j}\rangle_{1\le j\le 2, 1\le i\le 2}$, where $B_{1,1} = \{v_1\}$, $B_{2,1}=\{v_2\}$, $B_{1,2}=\{v_3,v_4,v_5\}$, $B_{2,2}=\{v_6\}$, be a bubble model for a graph $G$, see also Figure~\ref{p_counterexample}.
In other words, this bubble model corresponds to a unit interval graph on vertices $v_1, v_2,v_3,v_4,v_5,v_6$ where there is an edge $v_1v_2$, and vertices $v_2,v_3,v_4,v_5,v_6$ create a complete graph without an edge $v_2v_6$.

\if\povolenyobrazky1
\begin{figure}[bt]
        \begin{center}
        \includegraphics{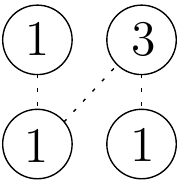}
        \end{center}
        \caption{A counterexample to the original algorithm, a bubble model \B where the numbers denote the number of vertices in each bubble, and dashed lines indicate the edges between bubbles.}\label{p_counterexample}
\end{figure}
\fi

Then, according to the paper \cite{BoyaciES17}, the size of a maximum cut in $G$ is eight.
To be more concrete, the algorithm from \cite{BoyaciES17} fills the following values of dynamic table: $F_{0,1}(0,0) = 4$, $F_{2,1}(1,1) = 8$ for $s_{2,1} = 1, s_{2,2}=1$, and finally, $F_{0,0}(0,0) = 8$ which is the output of the algorithm.
However, the size of a maximum cut in $G$ is only seven. 
Suppose, for contradiction, that the size of a maximum cut is eight. 
As there are ten edges in total in $G$, at least one vertex of the triangle $v_3,v_4,v_5$ must be a cut-vertex and one not. Then, those two vertices have three common neighbors. Therefore, the size of a maximum cut is at most seven which is possible; for example, $v_1,v_4,v_5$ are cut-vertices.
\end{example}

The brief idea of the algorithm in \cite{BoyaciES17} is to process the columns from the biggest to the lowest column from the top bubble to the bottom one. 
Once we know the number of cut-vertices in the actual processed bubble $B$ (in the column $j$)  and the number of cut-vertices which are above $B$ in the columns $j$ and $j+1$, we can count the exact number of edges. 
For each bubble and each such number of cut-vertices in the columns $j$ and $j+1$ (above the bubble), we remember only the best values of \maxcut\footnote{We refer the reader to the paper \cite{BoyaciES17} for the notation and the description of the algorithm.}.

We claim that the algorithm and its full idea from \cite{BoyaciES17} are incorrect since we lose the consistency there---to obtain a maximum cut, we do not remember anything about the distribution of cut vertices within bubbles, that was used in the previously processed column. Therefore, there is no guarantee that the final outputted cut of the computed size exists. 
To be more specific, one of two problems is in the moving from the column $j$ to the column $j-1$ since we forget there too much.  
The second problem is that for each bubble $B_{i,j}$ and for each possible numbers $x, x'$ we count the size $F_{i,j}(x,x')$ of a specific cut 
 and we choose some values $s_{i,j}$, $s_{i,j+1}$ (possibly different; they represents the number of cut-vertices in the bubbles $\Bij, B_{i,j+1}$) which maximize the values of $F_{i,j}(x,x')$. In few steps later, when we are processing the bubble  $B_{i,j-1}$, again, for each possible values $y$ and $y'$ we choose some values $s'_{i,j-1}$ and $s'_{i,j}$ such that they maximize the size of $F_{i,j-1}(y, y')$. However, we need to be consistent with the selection in the previous column, i.e., to guarantee that $s_{i',j} = s_{i,j}$ for any particular values $y$, $y'=x,$ and $x'$.

A straightforward correction of the algorithm would lead to remembering too much for a polynomial-time algorithm. 
However, we can be inspired by it to obtain a subexponential-time algorithm. 
We attempted to correct the algorithm or extend the idea leading to the polynomiality.
However, despite lots of effort, we were not successful and it seemed to us that the presented algorithm is hardly repairable.
\lv{%
 We note here, that there is another paper by the same authors~\cite{BoyaciES18} where a very similar polynomial algorithm is used for \maxcut of co-bipartite chain graphs with twins. Those graphs can be viewed as graphs given by bubble models with two columns; but having two columns is a crucial property for the algorithm.

}
To conclude, the time complexity of the \maxcut problem on unit interval graphs is still not resolved and it seems to be a challenging open question. 

\subsection{Subexponential algorithm  in mixed unit interval graphs}\label{s_subexp}
Here, we present a subexponential-time algorithm for the \maxcut problem in mixed unit interval graphs.
Our aim is to have an algorithm running in $2^{\tilde{{\cal O}}(\sqrt{n})}$ time.
Some of the ideas, for unit interval graphs, originated in discussion with Karczmarz, Nadara, Rzazewski, and Zych-Pawlewicz at Parameterized Algorithms Retreat of University of Warsaw 2019~\cite{karpacz}.

Let us start with a notation. 
Let $G$ be a graph, $H$ be a subgraph of $G$, and $S$ be a cut of $H$, we say that a cut $X$ of $G$ \emph{agrees with} $S$ in $H$ if $X=S$ on $H$. 
Let $G$ be a mixed unit interval graph. 
We take a \U-bubble model $\B = \langle B_{i,j}\rangle_{1\le j\le k, 1\le i\le r_j}$ for $G$
 and we distinguish columns of \B according to their number of vertices.
We denote by $b_{ij}$ the number of vertices in bubble $\Bij$ and by $c_j$ the number of vertices in column $j$, i.e., $b_{ij}=|\Bij|$ and $c_j = \sum_{i=1}^{r_j}{b_{i,j}}$. We call a column $j$ with $c_j>\sqrt{n}$ a \emph{heavy column}, otherwise a \emph{light column}. We call consecutive heavy columns and their two bordering light columns a \emph{heavy part} of \B (if \B starts or ends with a heavy column, for brevity, we add  an empty column at the beginning or the end of \B, respectively), and we call their light columns \emph{borders}. Heavy part might contain no heavy columns in the case that two light columns are consecutive.

 Note that we can guess all possible cuts in one light column without exceeding the aimed time and that most of those light column guesses are independent of each other---once we know the cut in the previous column, it does not matter what the cut is in columns before. 
 Furthermore, there are at most $\sqrt{n}$ consecutive heavy columns which allow us to process them together. More formally, we show that we can determine a maximum cut independently for each heavy part, given a fixed cut on its borders, 
 as stated in the following lemma.
 \sv{The formal proof is in the full version.} %

\begin{lemma} \label{l_maxcut_divideAndConquer}
Let $G$ be a mixed unit interval graph and $\B$ be a \U-bubble model for $G$ partitioned into heavy parts $\hat\B_1, \cdots, \hat\B_{p}$ in this order.
If $S = S_0\cup\cdots\cup S_{p}$ is a (fixed) cut of light columns $C_0,\dots,C_p$ in G(\B) such that $S_j$ is a cut of $C_j$, $j\in\{0,\dots,p\}$, then the size of a maximum cut of $G$ that agrees with $S$ in light columns is
\vspace{-10pt}
$$
\mcs{G,S} = \sum_{j=1}^{p}\mcs{G(\hat\B_j),S_{j-1}\cup S_{j}} - (\sum_{j=1}^{p-1}{|S_j|\cdot |C_j\setminus S_j|}) 
\vspace{-7pt}
$$
where $\mcs{G(\hat\B_j),S_{j-1}\cup S_{j}}$ denotes the size of a maximum cut of $G(\hat\B_j)$ that agrees with $S_{j-1}\cup S_{j}$ in its borders $C_{j-1},C_{j}$.
\end{lemma}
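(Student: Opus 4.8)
The plan is to decompose the cut-set of $G(\B)$ restricted to a fixed cut $X$ that agrees with $S$ in the light columns, grouping the edges according to which heavy part they belong to, and observing that the only edges counted twice are those lying entirely inside a border light column $C_j$ with $1\le j\le p-1$, since each such $C_j$ belongs to exactly two consecutive heavy parts $\hat\B_j$ and $\hat\B_{j+1}$. First I would recall from the definition of $G(\B)$ that every edge joins vertices in the same or in consecutive columns; hence every edge of $G(\B)$ lies within a single heavy part $\hat\B_j$ (an edge between a light border column and the adjacent heavy column, or between two columns inside the heavy part, or inside a light border column). Conversely, the borders are the only columns shared between two heavy parts, and an edge inside the graph $G(\hat\B_j)$ that also occurs in $G(\hat\B_{j+1})$ must have both endpoints in their common column $C_j$; since $C_j$ is a clique in $G(\B)$ and a cut $X$ splits $C_j$ into $X\cap C_j = S_j$ and $C_j\setminus S_j$, the number of such doubly-counted cut-edges is exactly $|S_j|\cdot|C_j\setminus S_j|$, a quantity that depends only on the fixed $S$ and not on the rest of $X$.

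Next I would make this into an exact identity. Fix any cut $X$ of $G$ agreeing with $S$ in the light columns. For each $j\in\{1,\dots,p\}$ let $X_j$ denote the restriction of $X$ to the vertex set of $\hat\B_j$; then $X_j$ is a cut of $G(\hat\B_j)$ that agrees with $S_{j-1}\cup S_j$ in the borders $C_{j-1},C_j$. By the inclusion–exclusion argument of the previous paragraph,
$$
|E(X,\overline X)| \;=\; \sum_{j=1}^{p} |E_{G(\hat\B_j)}(X_j,\overline{X_j})| \;-\; \sum_{j=1}^{p-1} |S_j|\cdot|C_j\setminus S_j|.
$$
Taking the maximum over all $X$ agreeing with $S$, and using that the choices of $X_j$ on the interiors of the distinct heavy parts are completely independent (the parts share only the fixed border columns, whose cut is pinned by $S$), the maximum distributes over the sum: $\max_X \sum_j |E_{G(\hat\B_j)}(X_j,\overline{X_j})| = \sum_j \max_{X_j} |E_{G(\hat\B_j)}(X_j,\overline{X_j})| = \sum_j \mcs{G(\hat\B_j),S_{j-1}\cup S_j}$, where each inner maximum ranges over cuts of $G(\hat\B_j)$ agreeing with $S_{j-1}\cup S_j$ in its borders. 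Since the subtracted term is constant, this yields the claimed formula.

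The main obstacle is making the independence claim fully rigorous: I must verify that no edge of $G(\B)$ has one endpoint strictly in the interior of one heavy part and the other in the interior (or even a non-shared column) of a different heavy part. This follows because consecutive heavy parts overlap exactly in one light border column and are otherwise separated by that column, combined with the structural fact that $G(\B)$ has no edges between non-consecutive columns — so an edge leaving $\hat\B_j$ on the right must land in $C_j$, which is the shared border and part of the fixed cut. Once that is checked, the two sums genuinely decouple. A secondary point to state carefully is the degenerate conventions already set up in the text (empty padding columns at the ends, heavy parts with no heavy column), but these introduce only empty vertex sets and do not affect the counting.
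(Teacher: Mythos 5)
Your argument is correct and follows essentially the same route as the paper's (much terser) proof: edges of $G(\B)$ only join same or consecutive columns, so each edge lies in a single heavy part except for edges inside a shared border column $C_j$, which is a clique and contributes exactly $|S_j|\cdot|C_j\setminus S_j|$ doubly-counted cut edges once $S_j$ is fixed; fixing the cut on the borders decouples the maximization over the interiors of the heavy parts. Your write-up just makes explicit the inclusion--exclusion and independence steps that the paper states in two sentences.
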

\toappendix{
  \lv{\begin{proof}}
    \sv{\begin{proof}[Proof of Lemma~\ref{l_maxcut_divideAndConquer}]}
It is readily seen that once we have a fixed cut in an entire column $C$ of a bubble model, a maximum cut of columns which are to the left of $C$ (including $C$) is independent on a maximum cut of those which are to the right of $C$ (including $C$). Therefore, we can sum the sizes of maximum cuts in heavy parts which are separated by fixed cuts. 
However, the cut size of middle light columns is counted twice since they are contained in two heavy parts. Therefore, we subtract them.
\end{proof}
}

Now, our aim is to determine the size of a maximum cut for a heavy part $\hat\B$ given a fixed cut on its borders. 
Note that if $\hat \B$ is a heavy part with no heavy columns, we can straightforwardly count the number of cut edges of $G(\hat\B)$, i.e., $\mcs{G(\hat\B)}$, assuming a fixed cut on borders is given. Therefore, we are focusing on a situation where at least one heavy column is present in a heavy part. 
We use dynamic programming to determine the size of a maximum cut on each such heavy part.

First, we present a brief idea of the dynamic programming approach, followed by technical definitions and proofs later. 
We take bubbles in $\hat\B$ which are not in borders and process them\break one-by-one in top-bottom, left-right order. 
When processing a bubble, we consider all the possibilities of numbers of cut-vertices in each its quadrant.
We refer to the already processed part after $i$-th step as $G_i$, that is $G_i$ is the induced subgraph of $G(\hat\B)$ with 
$V(G_i)=B_1\cup\cdots\cup B_i\cup C_0 \cup C_{l+1}$ 
where 
$C_0$ and $C_{l+1}$ are borders of $\hat\B$ and $B_j$, $j\in\{1,\dots,i\}$ are first $i$ bubbles in top-bottom, left-right order in $\hat\B$ (as it is shown in Figure~\ref{f_Gi}).

We store all possible $(l+1)$-tuples $(s_1,s_2,\dots,s_l,a)$, where $l$ is the number of heavy columns, $s_j$ characterizes the number of all cut vertices in the $j$-th heavy column, and number $a$ characterizes the number of cut vertices of types $(*,+)$ in the last processed bubble.
\lv{Then, we define recursive function $f$ where $f_i$ will be related to the maximum size of a cut that has exactly $s_j$ cut vertices in column $j$ (for all $j$) in the already processed part $G_i$.
More precisely, we want the recursive function $f$ to satisfy the properties later covered by Lemma~\ref{l_maxcut_unit}.
Once, $f$ satisfies the desired properties, we easily obtain the size of a maximum cut in the heavy part (Theorem~\ref{t_maxcut_sub}, below).

Now, we present a key observation for the construction of $f$.
Observe, by the properties of \U-bubble model, that the edges of $G_i$ can be partitioned into following disjoint sets:
\begin{compactenum}[]
\item $E_1=\{$edges of the graph $G_{i-1}\}$, 
\item $E_2=\{$edges inside $B_i\}$,
\item $E_3=\{$edges between $B_i$ and the same column above $B_i\}$,
\item $E_4=\{$edges between $B_i$ and the next column above $B_i\}$,
\item $E_5=\{$edges between $B_i$ and the bubble in the previous column and the same row as $B_i\}$,
\item $E_6=\{$edges between $B_i$ and column $C_0$ below $B_i\}$,
\item $E_7=\{$edges between $B_i$ and the bubble in column $C_{l+1}$ in the same row as $B_i\}$.
\end{compactenum}
Therefore, the idea there is to count the size of a desired cut of $G_i$ using the sizes of possible cuts in $G_{i-1}$ and add the size of a cut using edges $E_2-E_7$. The former is stored in $f_{i-1}$ and the later can be counted from the number of cut vertices in currently processed bubble $B_i$ and numbers in the $(l+1)$-tuple we are processing.
}
\sv{Then, we define a recursive function $f_i$ which will be related to the maximum size of a cut that has exactly $s_j$ cut vertices in column $j$ (for all $j$) in the already processed part $G_i$. 
More precisely, we want the recursive function $f$ to satisfy the following properties.
For each stored tuple $s=(s_1,\dots,s_l, a)$ and for every $i \in \{1,\dots,m\}$, where $m$ is the total number of bubbles in $\hat\B$,
the value $f_i(s)$ is equal to the maximum size of a cut $S$ in $G_i$ that satisfies:
\vspace{4pt}
\begin{compactenum}[$\bullet$]
\item for every $j\in \{1,\dots,l\}$, the number of cut vertices in the column $j$ in $G_i$ is equal to $s_j$, and $S$ agrees with $S_0\cup S_{l+1}$ in $C_0\cup C_{l+1}$, and
\item $a$ is equal to the number of cut vertices from $\Bcc{i}\cup\Boc{i}$,
\end{compactenum}
\vspace{4pt}
or $f_i(s)$ is equal to $-\infty$ if there is no such cut.

Once, $f$ satisfies the desired properties, we easily obtain Theorem~\ref{t_maxcut_sub} which gives us the size of a maximum cut in the heavy part.
Due to space limitation the formal definition of the function $f$ is in the full version. %
Here, we present a key observation for the construction of $f$.
Observe, by the properties of the \U-bubble model, that the edges of $G_i$ can be partitioned into following disjoint sets:
$E_1=\{$edges of the graph $G_{i-1}\}$, 
$E_2=\{$edges inside $B_i\}$,
$E_3=\{$edges between $B_i$ and the same column above $B_i\}$,
$E_4=\{$edges between $B_i$ and the next column above $B_i\}$,
$E_5=\{$edges between $B_i$ and the bubble in the previous column and the same row as $B_i\}$,
$E_6=\{$edges between $B_i$ and column $C_0$ below $B_i\}$,
$E_7=\{$edges between $B_i$ and the bubble in column $C_{l+1}$ in the same row as $B_i\}$.
The idea there is to count the size of a desired cut of $G_i$ using the sizes of possible cuts in $G_{i-1}$, which are stored in $f_{i-1}$, and add the size of a cut using edges $E_2-E_7$, which we can count from the number of cut vertices in currently processed bubble $B_i$ and numbers in the $(l+1)$-tuple we are processing.
}

\if\povolenyobrazky1
\begin{figure}[bt]    
       \begin{center}
        \includegraphics{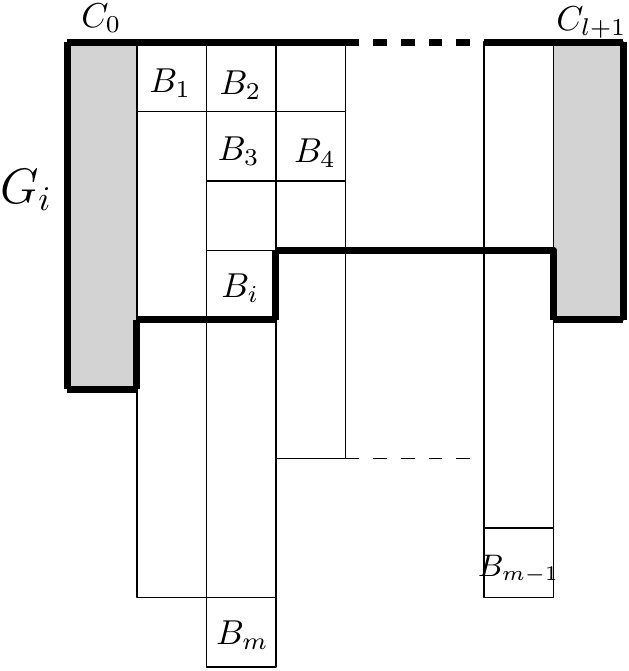}
        \end{center}
        \caption{A heavy part with light columns $C_0$ and $C_{l+1}$ and the highlighted subgraph $G_i$.}\label{f_Gi}
    \end{figure}
\fi

\toappendix{
Now, let us properly define the function $f$ and prove Theorem~\ref{thm_maxcut} formally.
We develop more notation. 
\lv{Let  $\hat\B$ be a heavy part with $l\ge 1$ heavy columns (numbered by $1,\dots,l$) and borders $C_0$ and $C_{l+1}$.}
Let $B_1$,\dots,$B_m$ be bubbles in $\hat\B\setminus(C_0\cup C_{l+1})$ numbered in the top-bottom, left-right order. Let $S_0$ and $S_{l+1}$ be (fixed) cuts in $C_0$ and $C_{l+1}$. 
To handle borders, we define auxiliary functions $n^{\downarrow}, n^{\leftarrow}, n^{\uparrow}$, $n^{\rightarrow}$ which output the number of cut vertices in borders in a specific position depending on the given row and column; they output $0$ if the given column  is not next to the borders. We define: 
\begin{compactenum}[$\bullet$]
\item 
 the number of (fixed) cut vertices in $C_0$ under the row $r$ (or 0 if the previous column is not~$C_0$):
 $$n^{\downarrow}(r,c)\df 
 \begin{cases}
 |S_0\cap\bigcup_{k=r+1}^{r_0}{B_{k,0}}| & c=1\\
 0 & c\neq 1,  
 \end{cases}\\
 $$
 \item  
 the number of (fixed) cut vertices of type $(*,+)$ in the left border $C_0$ in the row $r$:
$$n^{\leftarrow}(r,c)\df
\begin{cases}
|S_0\cap B^{*,+}_{r,0}| & c=1\\
 0 & c\neq 1,
\end{cases}$$
 \item 
 the number of (fixed) cut vertices in the right border $C_{l+1}$ above the row $r$:
 $$n^{\uparrow}(r,c)\df 
\begin{cases}
 |S_{l+1}\cap\bigcup_{k=1}^{r-1}{B_{k,l+1}}| & c=l\\
 0 & c\neq l,  
 \end{cases}\\
 $$
\item
 the number of (fixed) cut vertices of type $(+,*)$ in the right border $C_{l+1}$ in the row $r$: 
$$n^{\rightarrow}(r,c)\df 
\begin{cases}
 |S_{l+1}\cap B_{r,l+1}^{+,*}| & c=l\\
 0 & c\neq l. 
 \end{cases}\\
 $$
\end{compactenum}
We denote the number of vertices in $B_i$ by $b_i \df |B_i|$, analogously $b_i^{xy}\df |B_i^{xy}|$, $x,y\in\{+,-\}.$
We further denote the set of counts corresponding to all possible choices of cut vertices in the bubble  $B_i$  by $\beta_i$, i.e., 
   \begin{align*}
         \beta_i \df \big\{  (n_1,n_2,n_3,n_4) \mid 
                 \, &n_1 \in \{0,\dots,\bcc{i}\}, 
                 n_2 \in \{0,\dots,\bco{i}\}, 
                 n_3 \in \{0,\dots,\boc{i}\},\\ 
                 &n_4 \in \{0,\dots,\boo{i}\}, 
                 n_1+n_2+n_3+n_4 \le s_{\col{B_i}}\big\}.
    \end{align*} 
In addition, we denote the set of $(l+1)$-tuples characterizing all possible counts of cut-vertices in the $l$ heavy columns and an auxiliary number characterizing the count of possible edges from the last processed bubble, by 
\begin{align*}
T=\big\{(s_1,\dots,s_l, a) \mid\  
& a\in\N,\,\xspace 0\le a\le \max_{i\in\{1,\dots,m\}}{(\boc{i}+\bcc{i})},\,\\ 
& \forall j\in \{0,1,\dots,l\}: s_j\in\N,\,\xspace 0\le s_j \le c_j
\big\}.
\end{align*}
Let 
 $e(s_1,s_2)$ denote the number of cut-edges between two sets $S_1$, $S_2$ which are complete to each other and $S_k$, $k\in\{1,2\}$, contains $s_k$ cut vertices and $\ol{s_k}$ non-cut vertices,  i.e., $e(s_1,s_2) = s_1\cdot \ol{s_2}+\ol{s_1}\cdot s_2.$ We remark that it is important to know the numbers of non-cut vertices ($\ol{s_1}$ and $\ol{s_2}$), however, we will not write them explicitly for the easier formulas. It will be seen that they can be, for instance, stored in parallel with the numbers of cut vertices (or counted in each step again).

Finally, we define a recursive function $f$ by the following recurrence relation:
\begin{align*}
\hspace{-15pt}\forall (s_1,\dots,s_l, a)&\in T:\\
\text{if } s_1 \le b_{1},&\, s_2=\dots=s_l=0:\\
f_1((s_1,&\dots,s_l, a))=
    \max_{\substack{
        (\bcc{},\boc{},\bco{},\boo{})\in\beta_1:\\
        \bcc{}+\boc{}+\bco{}+\boo{}=s_1,\\
        \bcc{}+\boc{} = a
        }}
    { 
        \begin{aligned}[t] 
        \bigg(& e\left(s_1, n^\downarrow(1,1)\right)  + 
       s_1\cdot(b_1-s_1) \\
       &+ e\left(n^\leftarrow(1,1),(\bcc{}+\bco{})\right) \\
       &+ e\left(n^\rightarrow(1,1), \bcc{}+\boc{}\right) \bigg),
       \end{aligned}
        }\\
    \text{otherwise}&\text{: }\\
f_1((s_1,&\dots,s_l, a))= -\infty.
\end{align*}
\begin{align*}
\forall i\in&\{2,\dots,m\}, %
\forall (s_1,\dots,s_l, a)\in T:\\
&f_i((s_1,\dots,s_l,a))=    
 \max_{
    \substack{
         (\bcc{},\,\bco{},\,\boc{},\,\boo{}) \in \beta_i,\,
         z\in\N:\\
         \bcc{} + \boc{} = a,\\
         (s_1,\dots,s_c-b,\dots,s_l,z)\in T,\\
         z\le |B_{i-1}^{*+}|\\
     }}
     \begin{aligned}[t]
    \bigg(&  f_{i-1}((s_1,\dots,s_c-b,\dots,s_l,z))   \\ 
 & +  b\cdot (b_i-b)    \\ 
 & + e(b,s_{c+1})  + 
  e(b,s_c-b)  \\
    & + 
    e\left( n^\downarrow(r,c), b \right)  \\
   & + e\left( n^\leftarrow(r,c),(\bcc{}+\bco{}) \right)  \\ 
   & + e\left( (\bcc{}+\boc{}), n^\rightarrow(r,c) \right) 
     + A
     \bigg) 
    \end{aligned}%
\end{align*}
\begin{align*}
\vspace{-10pt}
\text{ w}&\text{here }A = 
 \begin{cases}
e(z\, , \bcc{}+\bco{})  & i>1,\, c=\col{B_{i-1}}+1,\\
 0 & \text{otherwise,}
 \end{cases}\\
\text{a}&\text{nd } b = \bcc{}+\bco{}+\boc{}+\boo{}, 
c = \col{B_i}, r= \row{B_i}, 
\text{ and } 
s_{l+1}= n^{\uparrow}(r,c).
\end{align*}
}

\toappendix{
\sv{\begin{proof}[Proof of Theorem~\ref{t_maxcut_sub}]}
We denote by $G_i$ 
the induced subgraph of $G(\hat\B)$ with 
$V(G_i)=B_1\cup\cdots\cup B_i\cup C_0 \cup C_{l+1}$ 
where 
$C_0$ and $C_{l+1}$ are borders of $\hat\B$.

\begin{lemma}\label{l_maxcut_unit}
For each $s=(s_1,\dots,s_l, a) \in T$ and for every $i \in \{1,\dots,m\}$, 
the value $f_i(s)$ is equal to the maximum size of a cut $S$ in $G_i$ that satisfies the following
\vspace{4pt}
\begin{compactenum}[$\bullet$]
\item for every $j\in \{1,\dots,l\}$, the number of cut vertices in the column $j$ in $G_i$ is equal to $s_j$, and $S$ agrees with $S_0\cup S_{l+1}$ in $C_0\cup C_{l+1}$, and

\item $a$ is equal to the number of cut vertices from $\Bcc{i}\cup\Boc{i}$,
\end{compactenum}
\vspace{4pt}
or $f_i$ is equal to $-\infty$ if there is no such cut.
\end{lemma}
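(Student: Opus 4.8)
The plan is to prove Lemma~\ref{l_maxcut_unit} by induction on $i$, matching the recurrence defining $f_i$ against a partition of the cut-edges of $G_i$. Throughout, I fix a tuple $s=(s_1,\dots,s_l,a)\in T$ and prove both inequalities: that $f_i(s)$ is at most the stated maximum cut size (every value achieved by the recurrence corresponds to an actual cut of $G_i$ with the prescribed statistics), and that it is at least that maximum (every cut with the prescribed statistics is reconstructed by some choice in the recurrence). The base case $i=1$ is a direct check: a cut of $G_1$ is determined by choosing which vertices of $B_1$ are cut vertices, and since $B_1$ together with $C_0$ and $C_{l+1}$ has a completely transparent edge structure (the vertices of $B_1$ form a clique with $B_1-s_1$ non-cut vertices among $s_1$ cut vertices; edges to $C_0$ are governed by $n^\downarrow(1,1)$; edges in the same row to $C_0$ respectively $C_{l+1}$ are governed by $n^\leftarrow(1,1)$ and $n^\rightarrow(1,1)$ together with the quadrant types, i.e.\ the counts $\bcc{}+\bco{}$ of type $(*,+)$ and $\bcc{}+\boc{}$ of type $(+,*)$), each of the summands in the base-case formula counts exactly one of these edge groups, and the constraint $\bcc{}+\boc{}=a$ enforces the bookkeeping requirement. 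The only tuples consistent with having processed only column~1 are those with $s_2=\dots=s_l=0$ and $s_1\le b_1$, which is why the recurrence returns $-\infty$ otherwise.

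For the inductive step, I assume the claim for $f_{i-1}$ and analyze $G_i=G_{i-1}\cup B_i$. The key structural fact, already isolated in the discussion preceding the lemma, is that the edge set of $G_i$ decomposes as the disjoint union $E_1\cup\dots\cup E_7$ where $E_1$ are the edges of $G_{i-1}$ and $E_2,\dots,E_7$ are the new edges incident to $B_i$: inside $B_i$; to the same column above; to the next column above; to the previous column in the same row; to $C_0$ below; and to $C_{l+1}$ in the same row. This uses the \U-bubble adjacency rules of Definition~\ref{d_bm_graph} — in particular that $B_i$ sees nothing in earlier columns except possibly the single bubble in the same row (via the $(*+)/(+*)$ type condition) and nothing in later columns except the one in the same row, plus the convention that when we move to a new column the ``previous-column, same-row'' edges to the processed part are captured by the term $A$ keyed on $z\le|B_{i-1}^{*+}|$, while within a column the contribution $e(b, s_c-b)$ captures the edges to everything above in the same column. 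Given a cut $S$ of $G_i$ with the prescribed statistics, its restriction to $G_{i-1}$ is a cut with statistics $(s_1,\dots,s_c-b,\dots,s_l,z)$ where $b$ is the number of cut vertices of $B_i$, $c=\col{B_i}$, and $z$ is the number of type-$(*,+)$ cut vertices of $B_{i-1}$ (this is exactly the auxiliary coordinate stored after step $i-1$); by induction its size is at most $f_{i-1}$ of that tuple, and the number of cut-edges of $S$ in $E_2\cup\dots\cup E_7$ equals the remaining summands evaluated at the quadrant counts $(\bcc{},\bco{},\boc{},\boo{})$ of $S\cap B_i$. The reverse direction glues an optimal cut realizing $f_{i-1}$ at the appropriate tuple with an arbitrary choice of $\beta_i$-counts in $B_i$; here one must check that the glued object is still a valid cut with the right statistics — which it is, because the statistics of $G_i$ depend only on the statistics of $G_{i-1}$ plus the counts in $B_i$, never on finer information — and that every constraint under the $\max$ ($\bcc{}+\boc{}=a$, membership of the shifted tuple in $T$, $z\le|B_{i-1}^{*+}|$) is exactly the feasibility condition for such a gluing.

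I expect the main obstacle to be the careful verification that the seven edge classes are genuinely disjoint and genuinely exhaust $E(G_i)$, together with matching each class to the correct summand with the correct non-cut counts $\overline{s}$ in the function $e$. The subtle points are: (a) the same-row ``backward'' edges, which exist only when the quadrant types cooperate, so the relevant count on the $B_i$ side is $\bcc{}+\bco{}$ (type $(*,+)$ — wait, careful: type $(+,*)$ on the right endpoint), meaning one has to track which of $\{\bcc{},\bco{},\boc{},\boo{}\}$ contribute as left endpoints of row-edges and which as right endpoints, and similarly for the border terms $n^\leftarrow,n^\rightarrow$; (b) the bookkeeping coordinate $a$ and its interaction with $z$ at column boundaries, i.e.\ showing the stored $a$ after processing $B_{i-1}$ is precisely the $z$ needed when $B_i$ starts a new column, and that $A$ is added exactly in that case ($c=\col{B_{i-1}}+1$) and is zero otherwise; and (c) confirming that $s_{l+1}:=n^{\uparrow}(r,c)$ correctly accounts for the already-fixed cut of the right border above the current row, so that edges to $C_{l+1}$ are not double-counted across different bubbles of the same column. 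Once these local identities are pinned down, the induction is routine. I would present it by first stating the edge decomposition as a sub-claim, then proving ``$\le$'' and ``$\ge$'' in parallel since they use the same dictionary between recurrence terms and edge classes.
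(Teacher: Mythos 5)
Your proposal is correct and follows essentially the same route as the paper: induction on the bubble index $i$, the disjoint decomposition of $E(G_i)$ into the classes $E_1,\dots,E_7$, and the two inequalities obtained by restricting a cut of $G_i$ to $G_{i-1}$ (using the induction hypothesis on the shifted tuple) and, conversely, by gluing an optimal cut of $G_{i-1}$ with a choice of quadrant counts in $B_i$. The subtle points you flag — the type conditions $(*,+)/(+,*)$ for same-row edges, the role of $z$ versus $a$ at column boundaries, and the $-\infty$ propagation — are exactly the ones the paper's proof addresses.
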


\begin{proof}
We prove Lemma~\ref{l_maxcut_unit} by induction on the number of steps (bubbles).
Since $B_1$ is in the first heavy column, Lemma~\ref{l_maxcut_unit} is true for $i=1$ by Definition~\ref{d_u_bubblemodel} (iv).

In the inductive step, suppose that for every $s=(v_1,v_2,\cdots,v_l, z)\in T$, $f_{i-1}(s)$ is equal to the size of a maximum cut $S_{i-1}$ in $G_{i-1}$ such that the number of cut vertices in each column $j$, for every $j\in\{1,2,\dots,l\}$, in $G_{i-1}$ is equal to $v_j$, and the number of cut vertices from $B_{i-1}^{*+}$ is equal to $z$. Or $f_{i-1}(s)$ is equal to  $-\infty$ if such a cut does not exist.

As it was mentioned, the edges of $G_i$ can be partitioned into disjoint sets $E_1$---$E_7$. Recall:
\vspace{3pt} 
\begin{compactenum}[]
\item $E_1=\{$edges of the graph $G_{i-1}\}$, 
\item $E_2=\{$edges inside $B_i\}$,
\item $E_3=\{$edges between $B_i$ and the same column above $B_i\}$,
\item $E_4=\{$edges between $B_i$ and the next column above $B_i\}$,
\item $E_5=\{$edges between $B_i$ and the bubble in the previous column and the same row as $B_i\}$,
\item $E_6=\{$edges between $B_i$ and column $C_0$ below $B_i\}$,
\item $E_7=\{$edges between $B_i$ and the bubble in column $C_{l+1}$ in the same row as $B_i\}$.
\end{compactenum}
\vspace{3pt} 
Note that $E_6$ is non-empty only if $B_i$ is in the column $1$, similarly $E_7$ is non-empty only if $B_i$ is in the column $l$. 
Let $s=(s_1,\dots,s_l,a)\in T$ be fixed.
At first assume, $S$ is a maximum cut in $G_i$ (that agrees with $S_0\cup S_{l+1}$ in $C_0\cup C_{l+1}$) 
such that it contains $s_j$ vertices from the column $j$ for each $j\in\{1,2,\cdots,l\}$ and $a$ vertices from $B_i^{*+}$; we say $S$ \emph{satisfies the conditions} $s$. We discuss the case where no such cut exists, later.
We denote by $s^{xy}$ the number of vertices in $B_i^{xy}\cap S$, $x,y\in\{+,-\}$, and by $s'$ the sum of these values, i.e., $s'=s^{++}+s^{+-}+s^{-+}+s^{--}$. We denote $\col{B_i}$ by $j$, and $\row{B_i}$ by $r$. Then,
$$
E(S,\ol{S}) = (E(S,\ol{S})\cap E(G_{i-1}) ) 
\cup \left\{uv\in E_k \mid u\in S,v\notin S, k\in\{2,\dots,6\}\right\}.
$$
Which leads to the expression:
\begin{align*}%
|E(S,\ol{S})| =&  |E(S,\ol{S})\cap E(G_{i-1})|   \\
&+ s' \cdot (b_i-s')\\
&+ e(s', (s_j-s')) \\
& + e(s', s_{j+1}) \\
& + A  \\
&+ e(s', n^{\downarrow}(r,j)) + e(s^{++}+s^{+-}, n^{\leftarrow}(r,j))  \\
&+ e(s^{++}+s^{-+}, n^{\rightarrow}(r,j)),\\
\text{ where }A =& \begin{cases}
e(|S\cap B_{i-1}^{*+}|, s^{++}+s^{+-}) & j=\col{B_{i-1}}+1,\\
0 & otherwise.
\end{cases}
\end{align*}
By the induction hypothesis, 
 \begin{align}
  |E(S,\ol{S})\cap E(G_{i-1})|  \le f_{i-1}(s_1,\cdots,s_j-s',\cdots,s_l, |S\cap B_{i-1}^{*+}|). 
\label{a_cut}
  \end{align}
It gives us together with the right part of the equation, %
 the definition of $f_i$ for $b^{xy}=s^{xy}$, $b=s'$ and $z=|S\cap B_{i-1}^{*+}|$. 
 Therefore, $$|E(S,\ol{S})|\le f_i(s).$$
Furthermore, we show that $f_i(s)$ is the size of a cut satisfying the conditions $s$.
Since the value of the function $f_{i-1}((s_1,\cdots,s_j-b,\cdots,s_l, z))$ is for any number $b\in\{0,\dots,\min{(s_j,b_i)}\}$ a size of a cut in $G_{i-1}$ which satisfies the conditions $(s_1,\cdots,s_j-b,\cdots,s_l, z)$, or $-\infty$ (if no such cut exists), we can extend that cut into $G_i$ by adding $b^{xy}$ vertices from $B_i^{xy}$ where $\bcc{}+\boc{}=a$ and $\bcc{}+\bco{}+\boc{}+\boo{}=b$. 
Consequently, $f_i(s)$ is a size of a cut on $G_i$ satisfying that it contains $s_i$ vertices from the column $i$ and $a$ vertices from $|B_{i}^{*+}|$. 
At least one such cut exists, by (\ref{a_cut}). 
Therefore, $|E(S,\ol{S})|\ge f_i(s)$. It leads to the equation $|E(S,\ol{S})|= f_i(s)$, otherwise, $S$ is not a maximum cut.

In a similar way, we can extend every cut on $G_{i-1}$ to $G_i$. Therefore, if there exist no cut on $G_i$ which satisfies the conditions $s$, there exists no cut in $G_{i-1}$ which can be extended to the cut on $G_i$ satisfying the conditions $s$. Consequently, $f_i(s)=-\infty$ by the definition of $f$ since $f_{i-1}(v)=-\infty$ for all $(l+1)$-tuples $v$ which appear in the definition.
\end{proof}
}

Finally, we obtain the next theorem about a maximum cut of a heavy part as a corollary of Lemma~\ref{l_maxcut_unit}.

\begin{thm}\label{t_maxcut_sub}
Let $\hat\B$ be a heavy part with $l\ge 1$ heavy columns (numbered by $1,\dots,l$) and borders $C_0$ and $C_{l+1}$. Let $B_1$,\dots,$B_m$ be bubbles in $\hat\B\setminus(C_0\cup C_{l+1})$ numbered in the top-bottom, left-right order. Let $S_0$ and $S_{l+1}$ be (fixed) cuts in $C_0$ and $C_{l+1}$. Then, the size of a maximum cut in $G(\hat\B)$ that agrees with $S_0\cup S_{l+1}$ in light columns is
\lv{$$ \mcs{G(\hat\B), S_0\cup S_{l+1}} = \max_{s\in T} f_m(s).$$ }
\sv{$\mcs{G(\hat\B),S_0\cup S_{l+1}} = \max_{s\in T} f_m(s).$ }
\end{thm}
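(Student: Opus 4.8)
The plan is to obtain Theorem~\ref{t_maxcut_sub} as a short corollary of Lemma~\ref{l_maxcut_unit}, so that all the combinatorial content is carried by the lemma and its inductive proof. First I would record that after all $m$ bubbles have been processed we have $G_m = G(\hat\B)$: by definition $V(G_m)=B_1\cup\cdots\cup B_m\cup C_0\cup C_{l+1}$, the bubbles $B_1,\dots,B_m$ are precisely the bubbles of $\hat\B$ lying outside the two border columns, and $G_m$ is taken throughout as an induced subgraph of $G(\hat\B)$; hence $G_m$ and $G(\hat\B)$ have the same vertex set and therefore coincide as graphs.

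Next I would instantiate Lemma~\ref{l_maxcut_unit} at $i=m$. It says that for each $s=(s_1,\dots,s_l,a)\in T$ the value $f_m(s)$ is the maximum size of a cut $S$ of $G(\hat\B)$ that agrees with $S_0\cup S_{l+1}$ on $C_0\cup C_{l+1}$, has exactly $s_j$ cut vertices in the $j$-th heavy column for every $j$, and has exactly $a$ cut vertices in $\Bcc{m}\cup\Boc{m}$ — and equals $-\infty$ if no such cut exists. Thus $\max_{s\in T}f_m(s)$ is the maximum cut size over the union, across all $s\in T$, of these cut families.

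It then remains to identify this union with the set of all cuts of $G(\hat\B)$ that agree with $S_0\cup S_{l+1}$ on the borders (which is exactly the light columns of $\hat\B$). The inclusion in one direction is immediate, since every cut counted on the right agrees with the fixed borders by construction. For the other direction, given any cut $S$ of $G(\hat\B)$ that agrees with $S_0\cup S_{l+1}$ on the borders, set $s_j$ to the number of cut vertices of $S$ in the $j$-th heavy column and $a$ to the number of cut vertices of $S$ in $\Bcc{m}\cup\Boc{m}$; then $0\le s_j\le c_j$ and $0\le a\le \bcc{m}+\boc{m}\le\max_i(\boc{i}+\bcc{i})$, so $s=(s_1,\dots,s_l,a)\in T$ and $S$ lies in the family attached to $s$. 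Finally this union is non-empty — e.g.\ the cut consisting exactly of $S_0\cup S_{l+1}$ agrees with $S_0\cup S_{l+1}$ on the borders — so the maximum on the right is attained at a finite value and the $-\infty$ entries are harmless; this gives $\mcs{G(\hat\B),S_0\cup S_{l+1}}=\max_{s\in T}f_m(s)$.

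In this reduction there is essentially no obstacle. The real difficulty lies in Lemma~\ref{l_maxcut_unit}, whose inductive proof must verify that the recurrence for $f_i$ correctly tallies the cut edges newly contributed by bubble $B_i$ — the classes $E_1$--$E_7$ of the edge decomposition — without double counting and consistently with the cut-vertex counts already fixed in the preceding columns. The delicate point is the auxiliary coordinate $z=|S\cap B_{i-1}^{*+}|$ threaded through the stored tuple, which is exactly what lets the term $A$ account for the same-row edges between two consecutive columns; once that bookkeeping is in place, Theorem~\ref{t_maxcut_sub} is just the case $i=m$ followed by a maximization over $T$.
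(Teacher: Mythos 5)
Your proposal is correct and follows exactly the route the paper takes: the paper presents Theorem~\ref{t_maxcut_sub} as an immediate corollary of Lemma~\ref{l_maxcut_unit}, obtained by setting $i=m$ (so that $G_m=G(\hat\B)$) and maximizing over all tuples $s\in T$, which is precisely your argument with the routine details (every admissible cut yields a tuple in $T$, and the $-\infty$ entries are harmless) written out. You also correctly locate all the substantive work in the inductive proof of the lemma.
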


Towards proving Theorem~\ref{thm_maxcut} and Corollary~\ref{c_columns}, it remains to prove the time complexity of processing a heavy part.

\begin{lemma}\label{l_maxcut_heavy}

Let $\hat\B$ be a heavy part with $l\ge 1$ columns, $m$ bubbles, and a fixed cut in the borders.
The size of a maximum cut of $\hat\B$ that agrees with the fixed cut in the borders can be determined in the time:$$(c_1+1)\cdots(c_l+1)\cdot (a+1) \cdot \sum_{i=1}^m{\left(\bcc{i}\cdot \bco{i}\cdot \boc{i}\cdot\boo{i}\right)}$$
where 
$c_j$ is the number of vertices in the column $j$, i.e., $c_j = \sum_{i'=1}^{r_j}B_{i',j}$, and $a = \max_{i}{|B_i^{*+}|}$.
\end{lemma}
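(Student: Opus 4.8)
The plan is to bound the running time of evaluating the dynamic program of Theorem~\ref{t_maxcut_sub}, whose correctness is Lemma~\ref{l_maxcut_unit}. By that theorem it suffices to fill the table of values $f_i(s)$ for all $i\in\{1,\dots,m\}$ and all $s\in T$, processing the bubbles in the order $B_1,\dots,B_m$, and then output $\max_{s\in T}f_m(s)$; this final maximisation costs $\Oh{|T|}$, which is dominated by the table filling. First I would do an $\Oh{n}$ preprocessing pass: the numbers $b_i,\bcc{i},\bco{i},\boc{i},\boo{i}$ and $c_j$ are read off directly, and the border functions $n^{\downarrow},n^{\leftarrow},n^{\uparrow},n^{\rightarrow}$, which depend only on a row and a column, are tabulated by prefix sums over the fixed cuts $S_0,S_{l+1}$. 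With these at hand, for a fixed bubble $i$, a fixed $s\in T$, and a fixed admissible choice of $(\bcc{},\bco{},\boc{},\boo{},z)$, every summand occurring in the recurrence for $f_i(s)$ — the lookup into $f_{i-1}$, the product $b\cdot(b_i-b)$, the finitely many values $e(\cdot,\cdot)$, and the term $A$ — is evaluated with $\Oh{1}$ arithmetic operations; hence the cost of the cell $(i,s)$ is, up to a constant factor, the number of admissible tuples $(\bcc{},\bco{},\boc{},\boo{},z)$, and cells for which this set is empty are set to $-\infty$ at no extra cost.

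The crux is to count these admissible tuples, summed over the whole table, without double counting. Write $s=(s_1,\dots,s_l,a)$ and set $a^\star\df\max_i(\bcc{i}+\boc{i})=\max_i|B_i^{*+}|$, which is exactly the quantity called $a$ in the statement; then $|T|=(a^\star+1)\prod_{j=1}^l(c_j+1)$. In the recurrence the coordinates $\bco{}$ and $\boo{}$ range freely in $\{0,\dots,\bco{i}\}$ and $\{0,\dots,\boo{i}\}$, contributing a factor at most $(\bco{i}+1)(\boo{i}+1)$; the coordinate $z$ ranges in $\{0,\dots,|B_{i-1}^{*+}|\}$ (and is absent for $i=1$), and since $|B_{i-1}^{*+}|\le a^\star$ it contributes a factor at most $a^\star+1$. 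The remaining pair $(\bcc{},\boc{})$ is tied by the constraint $\bcc{}+\boc{}=a$, so rather than bound it per cell I would sum over the last coordinate $a$ of the tuple first: for a fixed $i$ and a fixed prefix $(s_1,\dots,s_l)$, as $a$ runs over $\{0,\dots,a^\star\}$ the admissible pairs $(\bcc{},\boc{})$ with $\bcc{}+\boc{}=a$, $0\le\bcc{}\le\bcc{i}$, $0\le\boc{}\le\boc{i}$ are in bijection with the whole set $\{0,\dots,\bcc{i}\}\times\{0,\dots,\boc{i}\}$ (each pair being counted for $a$ equal to its sum, which lies in range since it is at most $|B_i^{*+}|\le a^\star$), so the total over $a$ is exactly $(\bcc{i}+1)(\boc{i}+1)$. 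Combining these observations, the total work is at most $\bigl(\prod_{j=1}^{l}(c_j+1)\bigr)\,(a^\star+1)\sum_{i=1}^{m}(\bcc{i}+1)(\bco{i}+1)(\boc{i}+1)(\boo{i}+1)$, which is the claimed bound after dropping the harmless ``$+1$''s inside the product.

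I expect the main obstacle to be exactly this bookkeeping step. A naive per-cell estimate multiplies $|T|$ (which already contains an $a^\star+1$ factor) by a maximum over roughly $(\bcc{i}+1)(\bco{i}+1)(\boc{i}+1)(\boo{i}+1)$ choices of quadrant counts together with the $z$-range of size $\le a^\star+1$, yielding a spurious extra $(a^\star+1)$ factor. One has to notice that the coordinate $a$ of $T$ must be charged against the paired choices $(\bcc{},\boc{})$ — not against $z$ — so that only a single $a^\star+1$ factor survives, coming from the $z$-range and bounded via $|B_{i-1}^{*+}|\le a^\star$. Everything else reduces to the routine $\Oh{n}$-preprocessing-plus-$\Oh{1}$-per-term verification sketched above.
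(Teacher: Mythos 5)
Your proposal is correct and takes essentially the same approach as the paper: both bound the running time by the number of DP cells $|T|=(c_1+1)\cdots(c_l+1)(a+1)$ times the per-bubble work of enumerating quadrant counts. Your amortization of the tuple coordinate $a$ against the constrained pairs $(\bcc{},\boc{})$ (so that only one factor of $a+1$, coming from the $z$-range, survives) is precisely the bookkeeping that the paper's terse proof leaves implicit, and it is carried out correctly.
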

\begin{proof}
We analyze the time complexity of the algorithm from Lemma~\ref{l_maxcut_heavy}. 
Let $T$ denote all the possible $(l+1)$-tuples. 
Then 
$|T| = (c_1+1)\cdots(c_l+1)\cdot (a+1).$
The time for processing a bubble $B_i$ is $|T|\cdot \bcc{i}\cdot \bco{i}\cdot \boc{i}\cdot\boo{i}$.
The time complexity of processing $\hat\B$ is then 
\[|T|\cdot \sum_{i=1}^m{\bcc{i}\cdot \bco{i}\cdot \boc{i}\cdot\boo{i}}.\qedhere\]%
\end{proof}

Now, we are ready to prove Theorem~\ref{thm_maxcut}.

\begin{proof}[Proof of Theorem~\ref{thm_maxcut}]
By Lemma~\ref{l_maxcut_divideAndConquer}, heavy parts can be processed independently on each other, given a cut on their borders. Moreover, it is sufficient for a light column $C$ to remember only the biggest cuts on the left of $C$ (containing $C$) for each possible cut in $C$. Therefore, there is no need to guess cuts in all light columns at once. %
It is sufficient to guess a cut only in two consecutive light columns at once.

Observe that there are at most $2^{\sqrt{n}}$ guesses of cut vertices for a light column and there are at most $n$ light columns. 
Therefore, the time complexity of determining the size of a maximum cut in $G$ is at most 
$n\cdot (2^{\sqrt{n}})^{2} \cdot P,$ where $P$ is the maximum time for processing a heavy part.
Now, we want to prove that a time complexity of processing a heavy part $\hat \B = \langle B_{i,j}\rangle_{1\le j\le l, 1\le i\le r_j}$ with a given guess of light columns is $2^{\tilde{{\cal O}}(\sqrt n)}.$  

By Lemma~\ref{l_maxcut_heavy}, the time complexity of processing a heavy part with a given guess of light columns is  
\begin{align*}
P&=(c_1+1)\cdots(c_l+1)\cdot (a+1)\cdot \sum_{i=1}^m{\left(\bcc{i}\cdot \bco{i}\cdot \boc{i}\cdot\boo{i}\right)}\\
 &\le (n+1)^{1+\sqrt{n}}\cdot \sum_{i=1}^m{b_i^4}
 \le (n+1)^{1+\sqrt{n}} \cdot n^4 \in 2^{\tilde{{\cal O}}(\sqrt{n}}).
 \end{align*}

To sum up, we can determine the size of a maximum cut in the time: $n\cdot (2^{\sqrt{n}})^{2} \cdot P\in 2^{\tilde{{\cal O}}(\sqrt{n})}.$ 
For brevity, we analyzed only the size of a maximum cut. However, the maximum cut itself can be determined retroactively in the time $2^{\tilde{{\cal{O}}}(\sqrt{n})}$, as well.
\end{proof}
\lv{%

}%
\lv{Lemma~\ref{l_maxcut_heavy} has a nice corollary for graphs with a \U-bubble model with a constant number of columns. According to Lemma~\ref{l_maxcut_heavy}, we are able to solve the \maxcut problem in those graphs in polynomial time which is formulated as Corollary~\ref{c_columns} in the introduction.}%
  \lv{%
Therefore, we improved another polynomial-time algorithm by Boyaci, Ekim, Shalom~\cite{BoyaciES18} solving the \maxcut problem in co-bipartite chain graphs with possible twins (which is exactly the class of graphs defined by a classic bubble model with only two columns).}

\toappendix{%
\lv{\begin{proof}}
\sv{\begin{proof}[Proof of Corollary~\ref{c_columns}]}
Let $G$ be a graph on $n$ vertices which is defined by a \U-bubble model $\B$ with $k$ columns and $m$ bubbles.
The bubble model $\B$ can be seen as a heavy part with no cut-vertices in its borders. 
By Lemma~\ref{l_maxcut_heavy}, the size of a maximum cut in \B can be determined in time
$T= (c_1+1)\cdots(c_k+1)\cdot (a+1) \cdot \sum_{i=1}^m{\left(\bcc{i}\cdot \bco{i}\cdot \boc{i}\cdot\boo{i}\right)}$
where $b_i^{xy}, xy\in\{+,-\}$ is the number of vertices in the bubble quadrant $B_i^{xy},$ 
and $c_j$ is the number of vertices in the column $j$, i.e., $c_j = \sum_{i'=1}^{r_j}B_{i',j}$, and $a = \max_{i}{|B_i^{*+}|}$.

By  Arithmetic Mean-Geometric Mean Inequality (AM-GM) we obtain
\begin{align*}
T&\le(a+1)\cdot\left(\frac{1}{k}\cdot\sum_{j=1}^{k}(c_j+1)\right)^k \cdot \sum_{i=1}^m{\left(\frac{\bcc{i} + \bco{i} + \boc{i} + \boo{i}}{4}\right)^4}\\
&=(a+1)\cdot \left(\frac{n+k}{k} \right)^k \cdot \sum_{i=1}^{m}\left(\frac{{b_i}}{4} \right)^4\\
&\le(a+1)\cdot \left(\frac{n+k}{k} \right)^k \cdot \left(\frac{n}{4} \right)^4 
\in {\cal O}(n^{k+5}).
\end{align*}
It remains to prove the special case where $k=2.$ Notice, it is sufficient to distinguish only between vertices in quadrants of types $(*,+)$ and $(*,-)$ in the first column, and simillarly $(+,*)$ and $(-,*)$ in the second column. Therefore, we obtain $\left(\frac{{b_i}}{2} \right)^2$ instead of $\left(\frac{{b_i}}{4} \right)^4$ which leads to the time complexity ${\cal O}(n^{k+1+2})={\cal O}(n^5).$
\end{proof}

Note that Theorem~\ref{t_maxcut_sub} states the explicit size of a maximum cut.
}

\section{Clique-width of mixed unit interval graphs}\label{s_cwd}

The clique-width is one of the parameters which are used to measure the complexity of a graph.   
\sv{%
  \toappendix{\section{Additions to Section~\ref{s_cwd}}\label{a_cwd}}%
}%
\toappendix{%
Many \NP-hard problems, those which are expressible in Monadic Second-Order Logic using second-order quantifiers on vertices ($MSO_1$), 
can be solved efficiently in graphs of bounded clique-with~\cite{CourcelleMR00}. For instance, 3-coloring.
}%
Definition of the clique-width is quite technical but it follows the idea that a graph of the clique-width at most $k$ can be iteratively constructed such that in any time, there are at most $k$ types of vertices, and vertices of the same type behave indistinguishably from the perspective of the newly added vertices.
\sv{Formally, the clique-width of a graph $G$, denoted by $\cwd{G}$, is the smallest integer number of different labels that is needed to construct the graph $G$ using the four operations: 
creation of a labeled vertex, 
disjoint union of labeled graphs, 
renaming all labels $i$ to $j$, and 
connecting all vertices with label $i$ to all vertices with label~$j$, $i\neq j$ (already existing edges are not doubled).
 Such a construction of a graph can be represented by an algebraic term composed of the operations, called \emph{cwd-expression}.}%
\sv{%

We present here the main result for the better upper-bounds on clique-width 
which is inspired by a similar result for unit interval graphs~\cite{HeggernesMP09}.
In general, unit interval graphs (and therefore mixed unit interval graphs) have unbounded clique-width~\cite{Golumbic00}
and the known upper-bound (even for interval graphs) is the size of a maximum clique + 1~\cite{KaplanS96, FellowsRRS06}.
The proofs can be found in the full version.
}

  \begin{definition}[Courcelle 2000]\label{def_cwd}
The clique-width of a graph $G$, denoted by $\cwd{G}$, is the smallest integer number of different labels that is needed to construct the graph $G$ %
 using the following operations:
\begin{compactenum}
\item[0.] creation of a vertex with label $i$,
\item[1.] disjoint union (denoted by $\oplus$),
\item[2.] relabeling: renaming all labels $i$ to $j$ (denoted by $\rho_{i\rightarrow j}$),
\item[3.] edge insertion: connecting all vertices with label $i$ to all vertices with label~$j$, $j\in\{1,\ldots,k\},$ $i\neq j$; already existing edges are not doubled (denoted by $\eta_{i,j}$).
\end{compactenum}
\end{definition}
 Such a construction of a graph can be represented by an algebraic term composed of the operations $\oplus$, $\rho_{i\rightarrow j}$, and $\eta_{i,j}$, called \emph{cwd-expression}. We call \emph{k-expression} a cwd-expression in which at most $k$ different labels occur. Using this, we can say that the clique-width of a graph $G$ is the smallest integer $k$ such that the graph $G$ can be defined by a $k$-expression.

\begin{example}\label{ex_cwd_complete}
The diamond graph $G$ on the four vertices $u, v, w, x$ (the complete graph $K_4$ without an edge $vw$) is defined by the following cwd-expression:
$$
\edge{1}{2}(\relabel{2}{1}(\edge{1}{2}( 1(u) \oplus 2(v) \oplus 2(w)))\oplus 2(x)).
$$
Therefore, $\cwd{G}\le 2$.
\end{example}

Fellows, Rosamond, Rotics, and Szeider \cite{Fellows09} proved in 2009 that the deciding whether the clique-width of a graph $G$ is at most $k$ is \NP-complete.
Therefore, researchers put effort into computing an upper-bound of the clique-width.

Courcelle and Olariu~\cite{Courcelle00} showed in 2000 that bounded treewidth implies bounded clique-width (but not vice versa). They showed that for any graph $G$ with the treewidth $k$, the clique-width of $G$ is at most $4 \cdot 2^{k-1} + 1$.

Golumbic and Rotics \cite{Golumbic00} proved that unit interval graphs have unbounded clique-width, consequently, (mixed unit) interval graphs have unbounded clique-width as well. Therefore, computing upper-bounds are of particular interest. Fellows et al.~\cite{FellowsRRS06} showed that the clique-width of a graph is bounded by its pathwidth~+~2, therefore, the clique-width of interval graphs as well as of unit interval graphs is upper-bounded by the size of their maximum clique + 1~\cite{KaplanS96, FellowsRRS06}. 
Using a bubble model structure, subclasses of unit interval graphs were characterized in terms of (linear) clique-width~\cite{Lozin2011, Meister2015}.
Courcelle~\cite{Courcelle00} observed that clique-width can be computed componentwise.

\begin{lemma}[Courcelle 2000,\cite{Courcelle00}]
Any graph $G$ satisfies that 
$$\cwd{G} = max{\{\cwd{G'}  \mid G' \text{ is a connected component of } G\}}.$$
\end{lemma}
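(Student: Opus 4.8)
The plan is to prove the two inequalities
$\cwd{G} \ge \max\{\cwd{G'} \mid G' \text{ a connected component of } G\}$
and
$\cwd{G} \le \max\{\cwd{G'} \mid G' \text{ a connected component of } G\}$
separately; each follows from an elementary closure property of cwd-expressions. I expect the argument to be essentially routine — this is a standard fact — so the only place needing a bit of care is the handling of ``empty'' subexpressions when restricting a cwd-expression to a vertex subset.

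For the first inequality I would prove the more general statement that clique-width does not increase when passing to an induced subgraph, and then apply it to each connected component $G'$ (which is an induced subgraph of $G$) to get $\cwd{G'}\le\cwd{G}$. Concretely, given a $k$-expression $t$ for $G$ and a set $W\subseteq V(G)$, I would build a restricted expression $t|_W$ by structural induction on $t$: a creation term $i(v)$ becomes $i(v)$ if $v\in W$ and the empty expression otherwise; the operations $\oplus$, $\relabel{i}{j}$, and $\edge{i}{j}$ are pushed through to the subexpression(s), with the convention that the empty expression is neutral for $\oplus$ and unchanged by $\relabel{i}{j}$ and $\edge{i}{j}$. A straightforward induction shows that $t|_W$ introduces no new labels and evaluates to a labeled graph whose underlying graph is $G[W]$, so $\cwd{G[W]}\le k$.

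For the second inequality I would use that a disjoint union of $k$-expressions is again a $k$-expression. Let $k=\max\{\cwd{G'}\mid G'\text{ a component of }G\}$ and let $G_1,\dots,G_c$ be the connected components of $G$; choose a $k$-expression $t_\ell$ for each $G_\ell$ (possible since $\cwd{G_\ell}\le k$). Then $t_1\oplus t_2\oplus\cdots\oplus t_c$ is a cwd-expression using at most $k$ labels whose value has underlying graph exactly $G$, because the $\oplus$ operation keeps the components vertex-disjoint and $G$ has no edges between distinct components, so no further $\edge{i}{j}$ operations are needed. Hence $\cwd{G}\le k$. Combining the two inequalities yields the claimed equality.

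The one subtlety worth flagging is the use of an ``empty'' term in the definition of $t|_W$: in the usual formalism a cwd-expression always produces a nonempty graph, so strictly one should maintain in the induction the invariant ``either $t|_W$ is a valid $k$-expression for $G[W\cap V(\llbracket t\rrbracket)]$, or $W\cap V(\llbracket t\rrbracket)=\emptyset$ and $t|_W$ is undefined'', and combine accordingly (drop an undefined operand of $\oplus$, return undefined if the operand of $\relabel{i}{j}$ or $\edge{i}{j}$ is undefined). This affects neither the label count nor the evaluated graph, so apart from this bookkeeping the proof is immediate.
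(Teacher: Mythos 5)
Your proof is correct. Note, however, that the paper does not prove this lemma at all -- it is quoted as a known result of Courcelle and Olariu with a citation -- so there is no in-paper argument to compare against; your two-inequality argument (disjoint union of $k$-expressions for the upper bound, monotonicity of clique-width under induced subgraphs via restriction of a cwd-expression for the lower bound) is the standard proof, and your handling of the empty-subexpression bookkeeping is exactly the right care to take.
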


We provide an upper-bound of the clique-width of a graph $G$ depending on the number of columns in a \U-bubble model of $G$. We express it also in the size of a maximum independent set.  

\begin{lemma} \label{l_cwd_col}
Let $G$ be a mixed unit interval graph and \B be a \U-bubble model of $G$. Then $\cwd{G}\le k + 3$, where $k$ is the number of columns of \B. Moreover, a $(k+3)$-expression defining the graph $G$ can be constructed in $\mathcal{O}(n)$ time from \B. 
\end{lemma}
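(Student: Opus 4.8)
The plan is to mimic the clique-width bound for unit interval graphs of Heggernes et al.~\cite{HeggernesMP09}: construct a $(k+3)$-expression for $G$ directly from $\B$ by sweeping the \U-bubble model row by row from the top downwards, and inside each row from the leftmost column to the rightmost, inserting one bubble at a time; this yields $\cwd{G}\le k+3$ and, by following the sweep, a linear-time construction. The $k$ principal labels $1,\dots,k$ are maintained with the invariant that, just before row $i$ is processed, label $\ell$ holds exactly the vertices of column $\ell$ in rows $<i$; three further auxiliary labels $T_1,T_2,T_3$ --- needed, unlike in the unit interval case, to realise the same-row (type) edges --- account for the ``$+3$''.

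First I would restate the adjacency rule of Definition~\ref{d_bm_graph} in directional form: a vertex in column $c$ at row $r$ is adjacent to the whole of column $c$, to the vertices of column $c+1$ in rows $<r$, to the vertices of column $c-1$ in rows $>r$, and --- within row $r$ only --- to \Bch{r,c+1} whenever it lies in \Bhc{r,c} (symmetrically on the left). Consequently, when bubble $B_{i,\ell}$ is inserted in this order, every not-yet-created incident edge of its vertices goes either (a) to vertices of columns $\ell$ and $\ell+1$ already placed, which are precisely the rows $<i$ of those two columns, or (b) to $B_{i,\ell-1}$, placed in the same row, via condition~(c) in Definition~\ref{d_bm_graph}, or (c) to vertices inserted later. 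Type-(a) edges are produced by $\edge{\cdot}{\ell}$ and $\edge{\cdot}{\ell+1}$; type-(c) edges become type-(a) or type-(b) edges of later insertions, so nothing about column $\ell$ ever needs to be remembered beyond the single label $\ell$, because all already-placed vertices of a column have identical future neighbourhoods. For type (b) I would carry \Bhc{i,\ell-1} in $T_3$ from the moment $B_{i,\ell-1}$ is finished until $B_{i,\ell}$ is inserted; once the type-(b) edges have been drawn, \Bhc{i,\ell-1} has the same remaining future neighbourhood as the rest of column $\ell-1$, so it is absorbed by $\relabel{T_3}{\ell-1}$ and $T_3$ is released.

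The delicate point --- and the reason a third auxiliary label is needed --- is the interaction of the four quadrants of the bubble currently inserted. Among them, exactly \Bcc{i,\ell} and \Bco{i,\ell} (together \Bch{i,\ell}) must be joined to $T_3$, while exactly \Bcc{i,\ell} and \Boc{i,\ell} (together \Bhc{i,\ell}) must survive, mutually joined, to play the role of the carried label for column $\ell+1$; the remaining quadrants go straight into label $\ell$. As \Bcc{i,\ell} lies in both groups, the quadrants must be separated momentarily, which I would do by inserting them in the order \Bco{i,\ell}, \Boo{i,\ell}, \Bcc{i,\ell}, \Boc{i,\ell}. Build \Bco{i,\ell} as a clique in $T_1$ (using $T_2$ as the usual two-label helper), apply $\edge{T_1}{T_3}$, $\edge{T_1}{\ell}$, $\edge{T_1}{\ell+1}$ and $\relabel{T_1}{\ell}$; build \Boo{i,\ell} in $T_1$, apply $\edge{T_1}{\ell}$, $\edge{T_1}{\ell+1}$ (no join to $T_3$) and $\relabel{T_1}{\ell}$; build \Bcc{i,\ell} in $T_1$, apply $\edge{T_1}{T_3}$, $\edge{T_1}{\ell}$ --- label $\ell$ now also contains $\Bco{i,\ell}\cup\Boo{i,\ell}$, which correctly completes $B_{i,\ell}$ to a clique --- and $\edge{T_1}{\ell+1}$, keep \Bcc{i,\ell} in $T_1$, and release $T_3$ by $\relabel{T_3}{\ell-1}$; finally build \Boc{i,\ell} in the freed $T_3$, apply $\edge{T_3}{\ell}$, $\edge{T_3}{\ell+1}$, $\edge{T_3}{T_1}$ and $\relabel{T_3}{T_1}$, so that $T_1$ holds \Bhc{i,\ell} and becomes the carried label for column $\ell+1$. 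A free auxiliary label is always available as the helper for the clique constructions, so the three auxiliary labels never run short. Empty bubbles and empty quadrants are skipped; if $\ell-1=k$ or the next bubble in the row is empty, \Bhc{i,\ell-1} is absorbed into label $\ell-1$ at once; the cases $i=1$ and $\ell=1$ merely omit some joins.

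Finally I would check correctness and complexity. By the directional adjacency rule each edge of $G$ is created exactly once, and no spurious edge is ever created: principal labels with indices differing by at least two are never joined, and the auxiliary labels only ever contain vertices of the bubble currently inserted or of its left neighbour in the same row, joined exactly as condition~(c) of Definition~\ref{d_bm_graph} prescribes; hence the constructed graph equals $G(\B)$, which is isomorphic to $G$. Each vertex is created once and relabelled $\Oh{1}$ times, each non-empty bubble triggers $\Oh{1}$ edge-insertion and relabelling operations apart from its clique constructions, and the clique construction of a quadrant with $t$ vertices costs $\Oh{t}$; summing over the at most $n$ non-empty bubbles, the $(k+3)$-expression has $\Oh{n}$ operations and is produced in $\Oh{n}$ time, provided $\B$ is given as the list of its non-empty bubbles in row/column order. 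I expect the quadrant bookkeeping to be the main obstacle: choosing an order of quadrant insertions, joins and relabellings that realises all same-row type edges while never calling for a fourth auxiliary label.
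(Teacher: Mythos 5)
Your proposal follows essentially the same route as the paper's proof: $k$ column labels plus three rotating auxiliary labels, a top-to-bottom, left-to-right sweep, and an auxiliary label carrying the $(*,+)$ part of the same-row left neighbour until its type-edges to the current bubble's $(+,*)$ quadrants are drawn, after which it is absorbed into its column label. The only differences are cosmetic — you insert whole quadrants (built as cliques with a helper label) in the order $+-,--,++,-+$ where the paper inserts single vertices in the order $--,+-,-+,++$ — so the argument and the $\Oh{n}$ bound go through as in the paper.
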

\begin{proof}
The proof is inspired by the proof for unit interval graphs~\cite{HeggernesMP09}.

We find a $(k+3)$-expression defining $G$ and, therefore, prove that $\cwd{G}\le k + 3$.
We use $k+3$ labels where label $i$ will be assigned to $i$-th column of \B and the remaining three labels, denoted by $l_1, l_2, l_3$, are used for maintaining the last two added vertices. 

We define a linear order on vertices of $G$ according to \B as follows:
\vspace{4pt}
\begin{compactenum}
\item[(i)] We take the vertices from top to bottom, left to right. Formally, let 
$x\in B_{i,j},$ $y\in B_{l,m}$, we define $x\prec y$ if $i<l$ or $(i=l \text{ and } j<m);$
\item[(ii)] we define the following order on  bubble quadrants: 
    $$x\prec y\prec z\prec w\text{ for } x\in B_{i,j}^{--}, y\in B_{i,j}^{+-}, z\in B_{i,j}^{-+}, w\in B_{i,j}^{++};
   $$ 
\item[(iii)] we define an arbitrary linear order on vertices in the same quadrant of the same bubble.
\end{compactenum}
\vspace{4pt}

The idea of the proof is that every column has its own label and we need three more labels for maintaining the last added vertices. We will add vertices to $G$ in the described order which ensures that a new vertex is complete to all vertices from the following column and anti-complete to all vertices from the previous column except those from the same row. Recall that according to the definition of \mixed-bubble model, there is an edge between vertices $x\in B_{i,j}$ and $y\in B_{i,j+1}$ if and only if $x\in B_{i,j}^{*,+}$ and $y\in B_{i,j+1}^{+*}$. Therefore, vertices from the last constructed bubble in the previous column must have two distinct labels according to the types of the vertices. 
 However, once we add all vertices from the actual bubble, we do not need to distinguish between vertices from the previous column, anymore. Therefore, we rename their labels to the label of their column. 

Formally. Let $x$ be the first (smallest) vertex of $G$ according to the defined linear order. We know that $x$ is from the first column by Definition~\ref{d_u_bubblemodel} (iv). If $x$ is of type $(-,+)$ or $(+,+)$, we label it by $l_1$, otherwise by 1, so the expression for $G[\{x\}]$ is $1(x)$ if $x$ is of type $(+,-)$ or $(-,-)$, and $l_1(x)$ otherwise.

Let $y$ be the first non-processed vertex from $G$, i.e., a label is assigned to all preceding vertices.
Let %
$l_2, l_3 \in \{k+1, k+2, k+3\}$ are currently unused labels or $l_2$ is used in the actual bubble $B_{i,j}$ and $l_3$ is unused, and $l_1$ may be used (in the previous column). Note that at most one label from $\{k+1,k+2,k+3\}$ is used in the previous column any time.
We split the proof according to the type of $y$, the bubble quadrant where $y$ belongs.

\begin{itemize}
\item[(a)] $y\in B_{i,j}^{--}$. We use label $l_2$ for $y$. Then, we make $y$ (the only one vertex with label $l_2$) complete to vertices with labels $j+1$ (if $j<k$) and $j$. Relabel $l_2$ to $j$.
\item[(b)] $y\in B_{i,j}^{+-}$. We use label $l_2$ for $y$. Then, we make $y$ (the only one vertex with label $l_2$) complete to vertices with labels $j+1$ (if $j<k$), $j$, $l_1$. Relabel $l_2$ to $j$.
\item[(c)] $y\in B_{i,j}^{-+}$. We use label $l_2$ for $y$. Then, we make all vertices with label $l_2$ complete to vertices with labels $j+1$ (if $j<k$), $j$, $l_2$. (Do not relabel vertices with label $l_2$).

\item[(d)] $y\in B_{i,j}^{++}$. We use label $l_3$ for $y$. Then, we make  $y$ (the only one vertex with label $l_3$) complete to vertices with labels $j+1$ (if $j<k$), $j$, $l_1$, $l_2$.  Relabel $l_3$ to $l_2$.

\end{itemize}
If all vertices from $B_{i,j}$ were used, we rename all vertices with the label $l_1$ to $j-1$ if $j>1.$ If $j=k$, we relabel $l_2$ to $k$.

For the correctness, observe that the previous column has always at most two labels and in a), b), and d) the temporary label for $y$ is unique (no other vertices are labeled by it at that time).    
The rest follows from the definition of adjacency in the \mixed-bubble model.
Since we constructed $G$ using at most $k+3$ labels, $\cwd{G}\le k+3$.

The described algorithm processes each vertex once and each vertex has at most three labels in total. Moreover, the algorithm needs a constant work for each vertex---for instance, a cwd-expression for the option a) is:
$$ \rho_{l_2\rightarrow j}(\eta_{j,l_2}(\eta_{j+1,l_2}(l_2(y)\oplus G'))),$$
where $G'$ is the already constructed graph before adding the vertex $y$.
Therefore, the $(k+3)$-expression defining $G$ is constructed in linear time given a \U-bubble model in an appropriate structure.
\end{proof}

\begin{thm}\label{thm_cwd_mis}
Let $G$ be a mixed unit interval graph. Then $\cwd{G}\le 2\Mis{G} + 3$. Moreover, a $(2\Mis{G}+3)$-expression defining the graph $G$ can be constructed in $\mathcal{O}(n)$ time provided a \U-bubble model of $G$ is given.
\end{thm}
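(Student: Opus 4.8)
The plan is to derive this bound immediately from two facts already at our disposal: Lemma~\ref{l_cwd_col}, which bounds the clique-width of $G$ by three more than the number of columns of any \U-bubble model of $G$ and, more precisely, produces a corresponding expression in linear time; and Lemma~\ref{l_bubble_mis_columns}, which bounds the number of columns of any \U-bubble model of $G$ by $2\Mis{G}$. So the whole argument is a chaining of inequalities, and there is no genuinely new content to prove.

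Concretely, I would start from the given \U-bubble model $\B$ of $G$ and let $k$ be its number of columns. The first step is to invoke Lemma~\ref{l_bubble_mis_columns}: since every column of $\B$ induces a clique, at most one vertex of each column can belong to a fixed independent set, and since two non-consecutive columns are mutually non-adjacent, one gets $\Mis{G}\le k\le 2\Mis{G}$; in particular $k\le 2\Mis{G}$. The second step is to apply Lemma~\ref{l_cwd_col} to $\B$, which constructs in $\mathcal{O}(n)$ time a $(k+3)$-expression defining $G$ (the labeling scheme that gives each column its own label plus three auxiliary labels for the two most recently processed bubbles). Combining the two gives $\cwd{G}\le k+3\le 2\Mis{G}+3$. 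For the ``moreover'' part, observe that a $(k+3)$-expression uses at most $k+3\le 2\Mis{G}+3$ distinct labels, hence is in particular a $(2\Mis{G}+3)$-expression; no further processing of the bubble model is needed, so the linear running time of Lemma~\ref{l_cwd_col} carries over verbatim.

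I do not expect any real obstacle here: both the combinatorial estimate on the number of columns and the explicit expression are handled by the cited lemmas, and the remaining step is only a relaxation of the label budget from $k+3$ to $2\Mis{G}+3$. The one point worth stating explicitly is that we are allowed to assume a \U-bubble model is given (otherwise one would first produce one via Theorem~\ref{thm_bubble}, which, together with Corollary~\ref{claim_mixed_repre_time} and Lemma~\ref{lem_lin_construction}, is itself linear-time once a representation is available); with the model in hand the argument above is complete.
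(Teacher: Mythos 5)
Your proposal is correct and is essentially identical to the paper's own proof, which likewise just combines Lemma~\ref{l_cwd_col} (a $(k+3)$-expression in $\mathcal{O}(n)$ time, $k$ the number of columns) with Lemma~\ref{l_bubble_mis_columns} ($k\le 2\Mis{G}$). Nothing further is needed.
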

\begin{proof}
We apply Lemma~\ref{l_cwd_col} and Lemma~\ref{l_bubble_mis_columns} together to obtain the statements. 
\end{proof}

Next, we provide a different bound for clique-width which is obtained by a small extension of the proof for unit interval graphs using the bubble model by Heggernes, Meister, and Papadopoulos \cite{HeggernesMP09}. We include the full proof for completeness. 

\sv{To state the main theorem, we}\lv{We} need more notation. Let $G$ be a mixed unit interval graph and let $\B = \langle B_{i,j}\rangle_{1\le j\le k, 1\le i\le r_j}$ be a \U-bubble model for $G$. We say that vertices from the same column $j$ of~$\B$ create a \emph{group} if they have the same neighbours in the following column $j+1$ of~\B. Let $v\in \Bij$, the \emph{group number of vertex $v$} in \B, denoted by $g_\B(v)$, is defined as the maximum number of groups in 
$N(v)\cap\big(
\bigcup_{i'=i+1}^{r_{j-1}} B_{i',j-1} \cup \bigcup_{i'=1}^{i-1} B_{i',j} 
\cup A
\big)$
 over the sets
 $A= B_{i,j-1}^{*+}\cup B_{i,j}^{+*}$ and $A= B_{i,j}$.
Then the \emph{group number of $G$} in \B is defined as 
\sv{$\varphi_\B(G) \df \max_{v\in V(G)}{g_\B(v)}$.}%
\lv{\[\varphi_\B(G) \df \max_{v\in V(G)}{g_\B(v)}.\]}%

\toappendix{%
\begin{lemma}\label{l_cwd_phi_clique}
Let $G$ is a mixed unit interval graph and \B a \U-bubble model for $G$. The following inequality holds
\[
\varphi_\B(G) \le \Mclique{G} - 1. 
\]
\end{lemma}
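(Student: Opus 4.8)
The plan is to show that for every vertex $v$ and every admissible choice of the auxiliary set $A$ in the definition of $g_\B(v)$, the vertex set whose groups are being counted forms, together with $v$, a clique of $G$. Since the number of distinct groups that meet a vertex set never exceeds the cardinality of that set, the bound $\varphi_\B(G)\le\Mclique{G}-1$ then follows at once.

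So fix $v\in\Bij$ and one of the two sets $A\in\{\Bhc{i,j-1}\cup\Bch{i,j},\,\Bij\}$ (if $j=1$, the parts living in column $j-1$ are empty, so this case only simplifies). Put $M=\bigcup_{i'=i+1}^{r_{j-1}}B_{i',j-1}\cup\bigcup_{i'=1}^{i-1}B_{i',j}\cup A$ and $S=N(v)\cap M$, so that the number counted for this choice of $A$ is the number of distinct groups of columns $j-1$ and $j$ that meet $S$.

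The first step is to verify that $M$ is a clique of $G$. If $A=\Bij$, then $M$ is a suffix of column $j-1$ (the rows strictly below row $i$) together with a prefix of column $j$ (the rows up to and including row $i$): any two vertices of $M$ either lie in a common column, hence are adjacent by Definition~\ref{d_bm_graph}(a), or one lies in column $j-1$ in a row $i_1>i$ and the other in column $j$ in a row $i_2\le i<i_1$, hence are adjacent by Definition~\ref{d_bm_graph}(b). If $A=\Bhc{i,j-1}\cup\Bch{i,j}$, then $M$ has exactly the shape of the clique constructed in the proof of Lemma~\ref{l_bubble_clique_columns} for the consecutive columns $j-1,j$ split at row $i$, and the same case analysis---now also invoking Definition~\ref{d_bm_graph}(c) for a pair made of a vertex of $\Bhc{i,j-1}$ and a vertex of $\Bch{i,j}$---shows that $M$ is a clique; degenerate cases in which one of the two columns is short only make $M$ smaller.

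Given that $M$ is a clique, $S\subseteq N(v)$ means $v$ is adjacent to every vertex of $S$, while $v\notin S$ since $G$ is loopless; hence $S\cup\{v\}$ is a clique of $G$ (a clique $S\subseteq M$ with the extra vertex $v$ adjacent to all of it), so $|S|+1=|S\cup\{v\}|\le\Mclique{G}$, that is, $|S|\le\Mclique{G}-1$. Finally, every group meeting $S$ contains a vertex of $S$ and distinct groups are disjoint, so at most $|S|$ groups meet $S$, and the counted number is at most $\Mclique{G}-1$. Taking the maximum over the two choices of $A$ gives $g_\B(v)\le\Mclique{G}-1$, and the maximum over all $v\in V(G)$ then gives $\varphi_\B(G)\le\Mclique{G}-1$. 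The one step that needs genuine care is the clique verification for $M$, and in particular the correct handling via condition (c) of the row-$i$ quadrants $\Bhc{i,j-1}$ and $\Bch{i,j}$; but this is precisely the computation already carried out inside the proof of Lemma~\ref{l_bubble_clique_columns}, so no essentially new obstacle appears.
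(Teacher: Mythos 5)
Your proof is correct and takes essentially the same route as the paper: both observe that the vertex set over which groups are counted forms, together with $v$, a clique of $G$ (via the two-column clique structure of the \U-bubble model as in Lemma~\ref{l_bubble_clique_columns}) and then use the disjointness of groups to bound their number by $\Mclique{G}-1$. Your version is in fact slightly more careful than the paper's one-line observation, since you work with $S=N(v)\cap M$ rather than asserting that all of $M\cup\{v\}$ is a clique (which can fail for $A=\Bhc{i,j-1}\cup\Bch{i,j}$ when $v\notin\Bch{i,j}$), but the intersection with $N(v)$ makes this immaterial in both arguments.
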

\begin{proof}
Let $v\in \Bij.$ 
Observe that 
$\bigcup_{i'=i+1}^{r_{j-1}} B_{i',j-1} \cup \bigcup_{i'=1}^{i-1} B_{i',j} 
\cup A \cup \{v\}$ is a clique (for both the possibilities of $A$), see Lemma~\ref{l_bubble_clique_columns}. Moreover, $v$ is not included in the counting the group number of $v$, and no vertex can be in more than one group. Therefore, 
$g_\B(v) \le \Mclique{G} - 1$ for any vertex $v$ which leads to the desired inequality. 
\end{proof}

\begin{thm} \label{thm_cwd_groups}
Let $G$ be a mixed unit interval graph and \B a \U-bubble model for~$G$. Then $\cwd{G}\le \varphi_\B(G)+2.$ Moreover, a $(\varphi_\B(G)+2)$-expression defining the graph $G$ can be constructed in $\mathcal{O}(n+m)$ 
    time provided a \U-bubble model of $G$ is given.
\end{thm}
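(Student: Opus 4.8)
We follow the argument of Heggernes, Meister, and Papadopoulos for ordinary bubble models~\cite{HeggernesMP09}; the one genuinely new point is that inside a bubble the vertices must be kept apart by their type, so that the same-row edges between quadrants of type $(*,+)$ and $(+,*)$ are produced, and only where they should be. We construct a $(\varphi_\B(G)+2)$-expression for $G$; this gives $\cwd{G}\le\varphi_\B(G)+2$, and inspecting the construction gives the $\Oh{n+m}$ bound.

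Fix the linear order on $V(G)$ that lists the columns of $\B$ from left to right, lists the bubbles of a column from its topmost non-empty bubble downward, lists inside a bubble $B_{i,j}$ first the vertices of type $(-,*)$ and then those of type $(+,*)$ (each half further split, arbitrarily, by the second coordinate of the type), and lists the vertices of one quadrant arbitrarily. Build $G$ by processing the vertices in this order: each step takes a disjoint union with a newly created, freshly labelled vertex and then inserts the edges joining that vertex to the already-built part of $G$. By Definition~\ref{d_bm_graph}, when $v\in B_{i,j}$ is processed its already-present neighbours are exactly the already-processed part of column $j$ (a clique, to which $v$ belongs), the bubbles $B_{i',j-1}$ with $i'>i$, and --- only when $v\in B_{i,j}^{+*}$ --- the quadrant $B_{i,j-1}^{*+}$; in particular they all lie in columns $j-1$ and $j$.

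The construction maintains the invariant that at every moment every vertex still having a not-yet-processed neighbour carries one of at most $\varphi_\B(G)+2$ labels. Two of these are ``working'' labels: one for the vertex currently being created and one for the already-processed vertices of the bubble currently being processed; these are what let us create new vertices, realise the same-row edges, and then fold finished vertices away. Every other still-active vertex belongs to a \emph{group} (in the sense of the definition preceding the statement), and all active vertices of a group share one label; this never creates a wrong edge because, by definition, members of a group have identical not-yet-processed neighbourhoods. The crucial point is the counting claim that just before $v\in B_{i,j}$ is created the number of active groups is at most $g_\B(v)$: the active groups lying in column $j-1$ are subgroups of $\bigcup_{i'>i}B_{i',j-1}$, possibly plus a contribution from $B_{i,j-1}^{*+}$, while the active groups lying in column $j$ are subgroups of $\bigcup_{i'<i}B_{i',j}$ plus one coming from the current bubble --- and choosing the set $A$ in the definition of $g_\B(v)$ to be $B_{i,j-1}^{*+}\cup B_{i,j}^{+*}$ or $B_{i,j}$, as appropriate, matches exactly this, so the total is at most $g_\B(v)\le\varphi_\B(G)$. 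Hence the edges of the new vertex can be realised with at most $g_\B(v)+2$ edge insertions, after which the new vertex is relabelled into its group's label; once an entire bubble (resp.\ column) has been processed, the now-exhausted classes are merged into one ``dead'' label and the labels they occupied are freed, so at no point are more than $\varphi_\B(G)+2$ labels alive. If $G$ is disconnected, taking the disjoint union of the expressions for its components needs no extra labels. Thus $\cwd{G}\le\varphi_\B(G)+2$, and together with Lemma~\ref{l_cwd_phi_clique} this reproves $\cwd{G}\le\Mclique{G}+1$.

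For the running time, each vertex is created once and relabelled a bounded number of times (fresh label, then its group's label, then the ``dead'' label), and every edge of $G$ is produced by exactly one edge insertion because whenever an edge insertion is applied to two label classes those classes were anticomplete beforehand; maintaining the classes as doubly linked lists then makes the construction run in $\Oh{n+m}$ time. The step needing the most care --- and the only place where the \U-bubble model genuinely differs from the ordinary bubble model --- is the counting claim above: checking, for the appropriate choice of $A$, that the active groups present when $v$ is created number at most $g_\B(v)$, and that the two reserved working labels really do suffice to carry the vertices of the current bubble through their creation and the realisation of their same-row edges while the column-$(j-1)$ classes are being drained.
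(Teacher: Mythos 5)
Your overall strategy is the paper's: order the vertices column by column, top to bottom, keep one label per active group plus working labels, and argue that when $v$ is created the live groups number at most $g_\B(v)$. The gap is in the one step you yourself single out as delicate: your within-bubble order is the \emph{reverse} of the one that makes the counting claim true, and with your order the claim fails. You process the $(-,*)$ vertices of a bubble $B_{i,j}$ before the $(+,*)$ ones; the paper processes $(+,*)$ first, and this is essential. With your order, while the $(-,*)$ vertices of $B_{i,j}$ are being created, the quadrant $B_{i,j-1}^{*+}$ is not adjacent to the current vertex $v$ (so it contributes nothing to $g_\B(v)$, which only counts groups inside $N(v)$) yet it is still alive, because its same-row neighbours $B_{i,j}^{+*}$ are still unprocessed; it therefore needs a label of its own that your count does not provide. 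Later, when a vertex $v\in B_{i,j}^{+*}$ is created, the already-built part of $B_{i,j}$ mixes $(-,*)$ and $(+,*)$ vertices and can span the two groups $B_{i,j}^{*+}$ and $B_{i,j}^{*-}$, while $B_{i,j-1}^{*+}$ still needs its own label; that is three labels, whereas the definition of $g_\B(v)$ only allows the choices $A=B_{i,j-1}^{*+}\cup B_{i,j}^{+*}$ and $A=B_{i,j}$, neither of which covers $B_{i,j-1}^{*+}$ together with all of $B_{i,j}$. Concretely, if $B_{i,j}^{-+}$, $B_{i,j}^{--}$, $B_{i,j}^{+-}$, $B_{i,j-1}^{*+}$, $B_{i,j+1}^{+*}$ are nonempty and $B_{i,j}^{++}=\emptyset$, then at the moment the first vertex of $B_{i,j}^{+-}$ is created your scheme needs $g_\B(v)+1$ group labels, so the construction only yields $\cwd{G}\le\varphi_\B(G)+3$. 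Processing $(+,*)$ first repairs this: while $B_{i,j}^{+*}$ is being built, the bubble's processed part lies inside $B_{i,j}^{+*}$ and is exactly what $A=B_{i,j-1}^{*+}\cup B_{i,j}^{+*}$ counts, and once $B_{i,j}^{+*}$ is finished the label held by $B_{i,j-1}^{*+}$ is released before any $(-,*)$ vertex is touched.

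A secondary point: your budget of ``two working labels plus one label per active group'' does not clearly reserve a label for processed vertices with no remaining unprocessed neighbours --- these cannot share a label with any class that will still participate in an edge insertion. The paper's accounting is one dead label, at most $g_\B(v)$ group labels (the current bubble's processed part is counted among these via the $A$-term, so it needs no separate working label), and one fresh label for $v$, totalling $g_\B(v)+2$.
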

\begin{proof}
Our aim is to find a $(\varphi_\B(G)+2)$-expression defining $G$. 
We add vertices in the order from left to right, top to bottom of \B processing vertices of type $(+,*)$ at first, i. e., in the following linear order:

\begin{itemize}
\item[(i)] $x\prec y$ for $x\in B_{i,j},$ $y\in B_{l,m}$, where $j<m$ or $(j=m \text{ and } i<l);$
\item[(ii)] $x\prec y\prec z\prec w\text{ for } x\in B_{i,j}^{++}, y\in B_{i,j}^{+-}, z\in B_{i,j}^{-+}, w\in B_{i,j}^{--};
   $ 
\item[(iii)] an arbitrary linear order on the vertices in the same quadrant of the same bubble.
\end{itemize}

Now, we follow the original proof. Shortly,  we add each vertex $v$ in a proper way. We assume that a label is assigned for each previous vertex and all the vertices which belong to the same group have the same label. At first, we change to 1 the label of all the previous vertices which are nonadjacent to $v$. We know that at most $g_\B(v)$ distinct labels are used in the remaining groups, say labels $\{2,\dots,g_\B(v)+1\}$. This is true since all the groups are adjacent to $v$ and because of the linear order. 

Note that it is important to add first all the vertices of type $(+,*)$ from a bubble. 
Otherwise, $g_\B(v)+1$ remaining groups could be there; in the situation that $v$ is of type $(+,*)$, a potentially one distinct label is needed for $B_{i,j-1}^{*+}$, and another for $B_{i,j}^{*-}$. One the other hand, if all the vertices of type $(+,*)$ precede vertices of type $(-,*)$ in one bubble, this situation does not happen---a potential label of $B_{i,j-1}^{*+}$ would be released. Therefore, it is enough to take into account only the parts $A= B_{i,j-1}^{*+}\cup B_{i,j}^{+*}$, and $A= B_{i,j}$, and not the bigger one $A= B_{i,j-1}^{*+}\cup B_{i,j}$, in the definition of $g_\B(v)$.

We use a free label, say $g_\B(v)+2$, for $v$ and join all the vertices with this label with vertices with labels $2,\dots,g_\B(v)+1$. Next, change the label of $v$ to a label of its group if $v$ belongs to an already existing group. 
We continue with the next vertex. During the processing of each vertex, we need no more than its group number~+~2 distinct labels. Therefore,  $\cwd{G}\le \varphi_\B(G)+2.$

It remains to determine the running time for the construction of the expression defining $G$. 
Assume a \U-bubble model is given in a way that going over all vertices takes linear time in the number of vertices. 
First, we count the time for the creation of groups. 
For each vertex $v$ we compare its neighbors from the next column with the neighbors of the previous vertex in this column. 
Therefore, the splitting vertices into groups and determining the group number of $G$ take $\mathcal{O}(m+n)$ time. 
In a constant time, we determine a free label for each vertex. Then, we need to check the labels of groups in the neighborhood of each vertex $v$ and create a $\mathcal{O}(g_\B(v))$ long cwd-expression, 
yielding $\mathcal{O}(m+n)$ time in total. Furthermore, each vertex is at most once relabeled to 1 since once it is relabeled to 1, its label remains 1 for the rest of the algorithm. Therefore, the relabeling of vertices that are nonadjacent to a newly added vertex takes $\mathcal{O}(n)$ time in total. To sum up, the algorithm outputs the construction in $\mathcal{O}(n+m)$ time.
\end{proof}
}

\lv{Combination of Theorems~\ref{thm_cwd_mis} and \ref{thm_cwd_groups} give us the following theorem.}

\begin{thm}\label{thm_main_cwd}
Let $G$ be a mixed unit interval graph. Then 
$$\cwd{G} \le \min{\{ 2\Mis{G}+3, \varphi_\B(G) + 2 \}} \le \Mclique{G} + 1,$$ 
where $\B$ is a \U-bubble model for $G$. Moreover, the corresponding expression can be constructed in $\mathcal{O}(n+m)$ time providing \B is given, otherwise in $\mathcal{O}(n^2)$ time. 
\end{thm}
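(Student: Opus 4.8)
The plan is to assemble the three results already established in this section. First I would observe that Theorem~\ref{thm_cwd_mis} yields $\cwd{G}\le 2\Mis{G}+3$ together with a witnessing expression, and Theorem~\ref{thm_cwd_groups} yields $\cwd{G}\le\varphi_\B(G)+2$ together with a witnessing expression; since both upper bounds hold simultaneously, $\cwd{G}\le\min\{2\Mis{G}+3,\varphi_\B(G)+2\}$. For the second inequality, Lemma~\ref{l_cwd_phi_clique} gives $\varphi_\B(G)\le\Mclique{G}-1$, hence $\varphi_\B(G)+2\le\Mclique{G}+1$, so the minimum is at most $\Mclique{G}+1$. This settles the numerical part of the statement.

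For the construction in $\Oh{n+m}$ time when $\B$ is given, I would run both constructions and keep the cheaper one. The proof of Theorem~\ref{thm_cwd_mis}, via Lemma~\ref{l_cwd_col}, in fact produces a $(k+3)$-expression where $k$ is the number of columns of $\B$, a quantity read off directly from $\B$; by Lemma~\ref{l_bubble_mis_columns} this is automatically a $(2\Mis{G}+3)$-expression, so $\Mis{G}$ never has to be computed, and the expression is produced in $\Oh{n}$ time. The proof of Theorem~\ref{thm_cwd_groups} produces a $(\varphi_\B(G)+2)$-expression in $\Oh{n+m}$ time while computing $\varphi_\B(G)$ along the way. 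Comparing $k+3$ with $\varphi_\B(G)+2$ and outputting whichever expression uses fewer labels costs no extra asymptotic time, giving the overall $\Oh{n+m}$ bound.

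When only $G$ is given and no $\B$, I would first invoke the componentwise computability of clique-width~\cite{Courcelle00}, reducing to connected inputs (so that the connectedness hypothesis of Lemma~\ref{lem_lin_construction} behind Theorem~\ref{thm_bubble} is met). For a connected component, Corollary~\ref{claim_mixed_repre_time} produces a mixed unit interval representation in $\Oh{n^2}$ time, Theorem~\ref{thm_bubble} converts it into a \U-bubble model in $\Oh{n}$ time, and then the previous paragraph applies. Since $m=\Oh{n^2}$, the expression-building step is absorbed, and summing over components the total stays $\Oh{n^2}$.

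The proof has no real obstacle: all the substance is in the lemmas cited. The only points needing a little care are bookkeeping ones — noticing that the $(k+3)$-expression of Lemma~\ref{l_cwd_col} doubles as a $(2\Mis{G}+3)$-expression through Lemma~\ref{l_bubble_mis_columns} so that $\Mis{G}$ is never explicitly needed, and routing disconnected inputs through componentwise computation so that the connectedness assumption of the \U-bubble model construction is satisfied.
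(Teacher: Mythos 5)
Your proposal is correct and follows essentially the same route as the paper, which derives Theorem~\ref{thm_main_cwd} directly by combining Theorem~\ref{thm_cwd_mis} (itself Lemma~\ref{l_cwd_col} plus Lemma~\ref{l_bubble_mis_columns}) with Theorem~\ref{thm_cwd_groups}, and bounds the minimum by $\Mclique{G}+1$ via Lemma~\ref{l_cwd_phi_clique}. Your additional bookkeeping — outputting the expression with fewer labels, obtaining $\B$ in $\Oh{n^2}$ via the recognition algorithm and Theorem~\ref{thm_bubble}, and handling disconnected inputs componentwise — fills in exactly the details the paper leaves implicit.
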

Observe that $\varphi_\B(G)\le 2\max{\{r_j \mid 1\le j\le k\}}$. 
\lv{A combination of Theorem~\ref{thm_main_cwd} and Lemma~\ref{l_cwd_col} gives}%
\sv{We also obtain }%
a useful Corollary~\ref{cor_cwd_rows_columns}. In particular, if the number of rows or number of columns is bounded, than clique-width is bounded.

\begin{cor}\label{cor_cwd_rows_columns}
Let $G$ be a mixed unit interval graph. Then $\cwd{G} \le \min{\{ k+3, 2r+2}\},$
where $k$ is the number of columns and $r$ is the length of a longest column in a \U-bubble model for $G$. 
\end{cor}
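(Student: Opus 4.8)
The plan is to establish the two bounds $\cwd{G}\le k+3$ and $\cwd{G}\le 2r+2$ independently and then take their minimum. The first inequality is nothing other than Lemma~\ref{l_cwd_col}, which already produces a $(k+3)$-expression from any \U-bubble model with $k$ columns, so no further work is needed for that half of the $\min$.

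For the second inequality I would appeal to Theorem~\ref{thm_cwd_groups}: for any \U-bubble model $\B$ of $G$ one has $\cwd{G}\le\varphi_\B(G)+2$. It then suffices to bound the group number $\varphi_\B(G)$ by $2r$, where $r=\max\{r_j\mid 1\le j\le k\}$ is the length of a longest column of $\B$ (equivalently the number of rows of $\B$). This is exactly the observation recorded immediately after Theorem~\ref{thm_main_cwd}, namely $\varphi_\B(G)\le 2\max\{r_j\mid 1\le j\le k\}$; substituting it into Theorem~\ref{thm_cwd_groups} gives $\cwd{G}\le 2r+2$. If one prefers a self-contained justification of $\varphi_\B(G)\le 2r$: for a vertex $v\in B_{i,j}$ the set over which $g_\B(v)$ counts groups is contained in the bubbles of column $j-1$ together with the bubbles of column $j$ lying above $v$, i.e.\ in at most $r$ bubbles from each of two columns (the two bubbles in row $i$ being used only partially); since the vertices of a single bubble split into at most two groups according to whether their right endpoint is open or closed, the total number of groups is at most $2r$.

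Putting the two halves together yields $\cwd{G}\le\min\{k+3,\,2r+2\}$, and the corresponding cwd-expression is the one produced by whichever of Lemma~\ref{l_cwd_col} or Theorem~\ref{thm_cwd_groups} attains the minimum. There is no genuine obstacle: the corollary is a direct consequence of Lemma~\ref{l_cwd_col}, Theorem~\ref{thm_cwd_groups}, and the group-number observation. The only point requiring a moment of care is the counting behind $\varphi_\B(G)\le 2r$ — checking that only $O(r)$ bubbles are ever relevant to $g_\B(v)$ and that each contributes at most two groups — but this is routine given the adjacency rules of the \U-bubble model.
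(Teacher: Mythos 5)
Your argument is exactly the paper's: the $k+3$ bound is Lemma~\ref{l_cwd_col} verbatim, and the $2r+2$ bound is Theorem~\ref{thm_cwd_groups} combined with the observation $\varphi_\B(G)\le 2\max\{r_j\mid 1\le j\le k\}$ recorded immediately after Theorem~\ref{thm_main_cwd}, so the proposal is correct and takes essentially the same route. One small caveat: your optional self-contained recount of that observation is loose as written (``at most $r$ bubbles from each of two columns, each splitting into at most two groups'' would naively give $4r$; the correct tally uses that only rows $i,\dots,r_{j-1}$ of column $j-1$ and rows $1,\dots,i$ of column $j$ are relevant, with the row-$i$ pieces contributing fewer groups, netting $2r$), but since you primarily cite the paper's observation this does not affect correctness.
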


Note that by an application of Lemma 4.1 in~\cite{Lozin2011}, slightly worse bounds on clique-width in terms of rows and columns can also be derived.
In particular, if we take two natural orderings of the bubbles in the \U-bubble  model, one taking rows first and the other taking columns first, we obtain two times larger multiplicative factor than in Corollary~\ref{cor_cwd_rows_columns}.

\section{Conclusion}\label{s_conclusion}
\sv{
\toappendix{\section{Additions to Section~\ref{s_conclusion}}\label{a_conclusion}}
}
\lv{
The main contribution of this work is a new representation of mixed unit interval graphs---the \U-bubble model. This structure is particularly useful in the design of algorithms and their analysis.
Using the \U-bubble model, we presented new upper-bounds for the clique-width of mixed unit interval graphs and designed a subexponential-time algorithm for the \maxcut problem on mixed unit interval graphs. We further realized that the state-of-the-art polynomial-time algorithm for the \maxcut problem on unit interval graphs is incorrect.
}
A long-term task is to determine the difference between the time complexity of basic problems on unit interval graphs compared to interval graphs.
In particular, on a more precise scale of mixed unit interval graphs, determine what is a key property for the change of the complexity.
Independently, a long-standing open problem is the time complexity of the MaxCut problem on unit interval graphs, in particular, decide if it is NP-hard or polynomial-time solvable.
An interesting direction to pursue the first task could be the study of labeling problems;
either $L_{2,1}$-labeling or Packing Coloring. 
\toappendix{%
\sv{%

}%
\lv{Both problems were motivated by assigning frequencies to transmitters.}%
\sv{Both, $L_{2,1}$-labeling or Packing Coloring, were motivated by assigning frequencies to transmitters.}%
The $L_{2,1}$-labeling problem was first introduced by Griggs and Yeh in 1992 \cite{GriggsY92}. The packing coloring problem is newer, it was introduced by Goddard et al.\ in 2008~\cite{GoddardHHHR08}.
}%
Although, these are well-known problems, quite surprisingly, their time complexity is open for interval graphs.
\lv{%

}%
\toappendix{%
The $L_{2,1}$-labeling problem assigns labels $\{0,\ldots,k\}$ to vertices such that the labels of neighboring vertices differ by at least two and the labels of vertices in distance two are different.
}%
\lv{The time complexity of this problem is still wide open even for unit interval graphs, despite partial progress on specific values for the largest used label. Sakai proved that the value of the largest label lies between $2\chi-2$ and $2\chi$ where $\chi$ is the chromatic number~\cite{Sakai94}.}%
\sv{The complexity of $L_{2,1}$-labeling is still wide open even for unit interval graphs, despite partial progress on specific values for the largest used label~\cite{Sakai94}.%
}%
\lv{%

}%
\toappendix{
The packing coloring problem asks for an existence of such a mapping $c:V\to\{1,\ldots,m\}$ that for all $u\neq v$ with $c(u)=c(v)=i$ the distance between $u$ and $v$ is at least $i$.
\lv{This problem is wide open on interval graphs.}
}
Recently, there was a small progress on unit interval graphs leading to an FPT algorithm (time $f(k)\cdot n^{{\cal O}(1)}$ for some computable function $f$ and parameter $k$).
It is shown in~\cite{KimLMP18} that the packing coloring problem is in FPT parameterized by the size of a maximum clique. We note that the algorithm can be straightforwardly extended to mixed unit interval graphs.
However, a polynomial-time algorithm or alternatively NP-hardness for (unit) interval graphs is of a much bigger interest.

\lv{
\bigskip
\paragraph{Acknowledgements}
The authors would like to thank Vít Jelínek for helpful comments.
This paper is based on the master thesis of Jana Novotná~\cite{DiplomaThesis}.
}

\lv{\bibliographystyle{plainurl}}
\sv{\clearpage}
\sv{\bibliographystyle{plainurl}}%
\bibliography{bibliography}

\clearpage
\appendix
\appendixText

\end{document}